\DeclarePairedDelimiter\ceil{\lceil}{\rceil}
\newtheorem{definition}{Definition}
\newtheorem{lemma}{Lemma}
\newcommand{\sys}{\textsc{CoBRA}\xspace}
\newcommand{\com}[1]{}
 \newcommand{\ray}[1]{}
  \newcommand{\lef}[1]{}
   \newcommand{\za}[1]{}
    \newcommand{\chris}[1]{}
  \newcommand{\ray}[1]{{ [\color{brown} Ray: #1]}}
 \newcommand{\lef}[1]{{ \color{blue} [Lefteris: #1]}}
 \newcommand{\za}[1]{{\color{red} [Zeta: #1]}}
  \newcommand{\chris}[1]{{ [\color{purple} Christos: #1]}}
\newtheorem{theorem}{Theorem}[section]
\newtheorem{corollary}[theorem]{Corollary}
\numberwithin{theorem}{section}  
\begin{document}
%
\title{CoBRA: A Universal Strategyproof Confirmation Protocol for Quorum-based Proof-of-Stake Blockchains}

\com{
\author{Zeta Avarikioti}
\affiliation{%
 \institution{TU Wien \& Common Prefix}
 \city{Vienna}
  \country{Austria}}

\author{Eleftherios Kokoris Kogias}
\affiliation{%
  \institution{Mysten Labs}
  \city{Athens}
  \country{Greece}
}

\author{Ray Neiheiser}
\affiliation{%
  \institution{ISTA}
  \city{Vienna}
  \country{Austria}}

\author{Christos Stefo}
\affiliation{%
  \institution{TU Wien}
 \city{Vienna}
  \country{Austria}}
}

\author{
\IEEEauthorblockN{
Zeta Avarikioti\IEEEauthorrefmark{1},
Eleftherios Kokoris Kogias\IEEEauthorrefmark{1}\IEEEauthorrefmark{2},
Ray Neiheiser\IEEEauthorrefmark{3},
Christos Stefo\IEEEauthorrefmark{4}
}

\IEEEauthorblockA{\IEEEauthorrefmark{1}TU Wien \& Common Prefix}
\IEEEauthorblockA{\IEEEauthorrefmark{2}Mysten Labs}
\IEEEauthorblockA{\IEEEauthorrefmark{3}ISTA}
\IEEEauthorblockA{\IEEEauthorrefmark{4}TU Wien}
}

\maketitle



%

\begin{abstract} 
The integrity of many Proof-of-Stake (PoS) payment systems relies on quorum-based State Machine Replication (SMR) protocols  mitigating diverse adversarial attacks. While traditional analyses assume a purely Byzantine threat model, practical security depends on resilience against both arbitrary faults and strategic, profit-driven actors. In this work, we study this challenge by analyzing SMR protocols under a hybrid threat model comprising honest, Byzantine, and rational validators. We first establish the fundamental limitations of traditional consensus mechanisms, proving two impossibility results: (1) in partially synchronous networks, no quorum-based protocol can achieve SMR when rational and Byzantine validators collectively exceed $1/3$ of the participants; and (2) even under synchronous network assumptions, SMR remains unattainable if this coalition comprises more than $2/3$ of the validator set.

Motivated by these boundaries, we develop a protocol that achieves SMR in the presence of up to $1/3$ Byzantine and $1/3$ rational validators. Our approach relies on two complementary mechanisms to circumvent our impossibility results. First, we introduce a protocol constraint that enforces an upper bound on the transaction volume finalized within any time window $\Delta$, and we prove that such a bound is a necessary condition for security. Second, we define the \emph{strongest chain rule}, a finality gadget that enables instant transaction confirmation when a supermajority of participants provably supports the execution. We validate the feasibility of this design through an empirical analysis of the Ethereum and Cosmos networks, demonstrating that validator participation exceeds the required $5/6$ threshold in over $99\%$ of observed blocks.

Finally, we explore adversarial settings beyond the classical Byzantine threshold. CoBRA supports recovery from consistency violations even when Byzantine validators control less than $2/3$ of the stake, while providing an additional client-level economic security guarantee: every transaction accepted by clients, including those on conflicting branches, is accounted for during recovery, enabling full reimbursement of provable losses without requiring client participation and without minting new coins.

\end{abstract}

\newif\ifsubmissionversion
\submissionversionfalse  

\begin{table*}[t]
\centering
\footnotesize
\setlength{\tabcolsep}{4pt}
\renewcommand{\arraystretch}{1.1}

\begin{threeparttable}

\begin{minipage}{0.50\textwidth}
\centering
\begin{tabular}{
>{\raggedright\arraybackslash}p{3.5cm}
|>{\centering\arraybackslash}p{2.5cm}
|>{\centering\arraybackslash}p{2.8cm}
|>{\centering\arraybackslash}p{1.8cm}
|>{\centering\arraybackslash}p{1.8cm}
}
\toprule
\textbf{Protocol} &
\textbf{Safety / Liveness Resilience} &
\textbf{Recovery Resilience} &
\textbf{Bounded Rollback} &
\textbf{Client Reimbursement} \\
\midrule

Classic BFT \cite{hotstuff,buchman2018latest,pbft} &
$n/3$ Byzantine &
-- &
-- &
-- \\
\midrule

Accountable + Recoverable Partial Synchrony \cite{gong2025recover,zlb} &
$n/3$ Byzantine &
$5n/9$ Byzantine &
\ding{55} &
\ding{55} \\
\midrule

Accountable + Recoverable Synchrony \cite{lewis2025beyond} &
$n/3$ Byzantine &
$2n/3$ Byzantine &
\checkmark &
\ding{55} \\
\midrule

\textbf{\sys (ours.)\textsuperscript{*}} &
$n/3$ Byzantine $+\, n/3$ rational &
$2n/3$ Byzantine &
\checkmark &
\checkmark \\
\bottomrule
\end{tabular}
\end{minipage}
\hfill





\caption{\footnotesize Comparison of classical BFT, accountable/recoverable consensus, and \sys. Bounded rollback means that the depth of the chain reversed after recovery is limited; client reimbursement means that every honest client is fully reimbursed, even for transactions accepted during safety violations.}
\label{tab:protocol-comparison}
\vspace{-0.4cm}
\end{threeparttable}
\end{table*}


\section{Introduction}
State Machine Replication (SMR) protocols are the backbone of secure distributed ledgers, ensuring agreement on an ordered sequence of transactions among distributed participants.
Traditional SMR protocols model participants as either \emph{honest}, strictly following the protocol, or \emph{Byzantine}, deviating arbitrarily.
Quorum-based Byzantine Fault Tolerant (BFT) protocols, such as PBFT~\cite{pbft}, HotStuff~\cite{yin2018hotstuff}, and Tendermint~\cite{buchman2018latest}, operate under partial synchrony and guarantee safety and liveness provided that less than $1/3$ of the participants are Byzantine.
Once this threshold is exceeded, safety violations are unavoidable.

While the Byzantine model is deliberately pessimistic, the complementary assumption that a $2/3$ supermajority of participants always behaves honestly is particularly strong.
This is evident in modern Proof-of-Stake (PoS) blockchains, where participation is economically motivated.
Validators stake assets to earn rewards from block production, transaction fees, and external financial positions, collectively securing hundreds of billions of dollars in value.
In this setting, participants are not altruistic but profit-maximizing agents.
Deviations from the protocol may thus be \emph{strategic} rather than arbitrary, arising whenever doing so increases expected utility.
As a result, violations of the honest supermajority assumption may stem not only from malicious behavior, but also from economically rational behavior.


This observation raises a fundamental question for the design of secure distributed ledgers:
What assurances can we provide when fewer than $2/3$ of the participants behave honestly?
In particular, can we design protocols that both (i) incentivize rational, profit-maximizing validators to follow the protocol, and (ii) recover from safety violations without imposing irreversible economic losses on clients, even when more than $1/3$ of the participants behave Byzantine?

\subsection{Related Work}

A substantial body of work has studied consensus under violations of the classic $2/3$ honest supermajority assumption.
Broadly, this literature falls into two main categories.

\vspace{3pt}
\noindent\textbf{Rational and hybrid fault models.}
A first line of work studies consensus under hybrid fault models distinguishing honest, rational, and Byzantine participants. 
BAR~\cite{bar} formalizes this setting, and subsequent works extend it to incentive-compatible and equilibrium-based consensus, ensuring honest behavior is a Nash equilibrium under appropriate rewards and penalties~\cite{amoussou2019rationals,mcmenamin2021achieving,lev2019fairledger}. These approaches typically assume rational participants act independently, ignoring collusion. This assumption is problematic in open, financially motivated environments such as Proof-of-Stake blockchains, where validators often coordinate in pools, making deviations like double-spending profitable when executed jointly. 

To capture collusions, Abraham et al.~\cite{abraham2006distributed,abraham2008lower} formalize solution concepts for strategic collusion between rational and Byzantine participants in secret sharing and multi-party computation. 
Building on this, TRAP~\cite{trap} introduces a baiting game to hold misbehaving nodes accountable, solving one-shot consensus in a partially synchronous model for \( n > \max \left( \frac{3}{2}k + 3f, 2(k + f) \right) \) where $k$ in the number of rational and $f$ the number of Byzantine participants. Their analysis assumes an upper bound on the maximum value that can be double-spent against clients.
While one-shot consensus allows bounding the maximum double-spend, this {does not extend to multi-shot SMR}, where repeated strategic behavior can accumulate unbounded economic damage.

\vspace{3pt}
\noindent\textbf{Accountability and recovery.}
A second line of work studies consensus protocols that tolerate violations of the classic $n/3$ Byzantine threshold by weakening safety guarantees and providing \emph{accountability}~\cite{sheng2021bft, polygraph,abc}  and \emph{recovery mechanisms}~\cite{lewis2025beyond,zlb,gong2025recover}.

Accountable protocols focus on detecting equivocation and identifying misbehaving validators. However, accountability alone is not sufficient to deter safety attacks. If the profit validators can gain from double-spending exceeds their staked collateral, they may still choose to behave maliciously, even if their stake is eventually slashed. Moreover, accountability itself does not restore safety or protect clients from economic losses incurred during safety violations.

Recoverable consensus protocols address this limitation by incorporating a mechanism that enables the system to internally restore a consistent state after safety violations, even when more than $1/3$ of the participants behave maliciously~\cite{gong2025recover, lewis2025beyond, zlb, sridhar2023bettersafesorryrecovering}.
However, these protocols either assume communication between clients to detect conflicting certificates~\cite{sridhar2023bettersafesorryrecovering} or provide no economic guarantees to clients~\cite{gong2025recover, lewis2025beyond, zlb}: transactions accepted during safety violations may be reverted without compensation.

As a result, existing work does not resolve our fundamental question. There is no SMR protocol that (i) incentivizes profit-maximizing participants to follow the protocol and (ii) provides economic security to clients when the classic $1/3$ Byzantine threshold is exceeded.



\subsection{Our Contributions}
This paper presents \emph{CoBRA} (Correct over Byzantine-Rational Adversary), which closes this gap. It is the first generic transformation for quorum-based SMR that provides optimal rational resilience when the Byzantine parties are less than 1/3 and economic protection for clients when they are more than 1/3 and less than 2/3. 
Table~\ref{tab:protocol-comparison} compares CoBRA to prior work, highlighting trade-offs in fault tolerance, recovery guarantees, and client-level economic protection.

\vspace{3pt}
\noindent\textbf{Impossibility results.}
We establish fundamental limits for quorum-based SMR under a hybrid fault model with honest, Byzantine, and rational validators, and non-interactive clients:
\begin{itemize}[nosep,leftmargin=*]
    \item Under \emph{partial synchrony}, we show that quorum-based SMR is impossible once Byzantine and rational validators together control at least $1/3$ of the validator set.  This bound is tight and generalizes the classic Byzantine threshold to economically motivated deviations.
    \item Under \emph{synchrony}, we further prove that SMR remains impossible when this coalition exceeds $2/3$. 
\end{itemize}
\za{To include feasibility results about the design space?}

\vspace{3pt}
\noindent\textbf{Protocol transformation.}
We introduce CoBRA, a generic transformation applicable to existing quorum-based PoS protocols, yielding three operational regimes:
\begin{itemize}[nosep,leftmargin=*]
    \item \emph{Standard regime:} In partial synchrony, \sys preserves the safety and liveness guarantees of the underlying SMR protocol against up to $1/3$ Byzantine validators.
    \item \emph{Rational-resilient regime:} Assuming a (possibly large) known synchronous bound $\Delta^*$ on message delays, \sys ensures safety and liveness even when, in addition to $1/3$ Byzantine validators, up to another $1/3$ of the participants are rational and may engage in strategic collusion. This is achieved by modifying only the finalization rule of the SMR protocol.
    \item \emph{Recoverable economic-security regime:} Assuming a synchronous bound $\Delta^*$, when the fraction of Byzantine validators exceeds the standard $1/3$ threshold but remains below $2/3$ of the total validator set, \sys recovers from consistency violations and guarantees full client reimbursement. Recovery builds upon the set-agreement procedure of~\cite{lewis2025beyond}, while our novel reimbursement guarantee follows from \sys's finalization rule, which bounds the value that can be double-spent during an equivocation by the stake of provably misbehaving validators. 
    \sys provides these guarantees without minting new coins and without client participation in the recovery procedure.
\end{itemize}

\section{Preliminaries}

\subsection{Model and Assumptions}




In this section, we outline the model and assumptions used throughout the paper. 

\vspace{3pt}
\noindent\textbf{Proof-of-stake blockchain.}
A PoS blockchain implements a distributed ledger protocol~\cite{bitcoinproperties} (i.e., a state machine replication protocol), where participants are selected to create and validate new blocks based on the amount of coins (cryptocurrency) they hold and lock up as stake. 
The blockchain serves as a payment system in which participants hold and transfer fungible assets, referred to as coins. We also discuss how to extend this model to general-purpose smart contract platforms in Section~\ref{sec:discussion}.

Each party is associated with a private-public key pair which forms its \emph{client} identity in the system. 
The parties that actively participate in maintaining the state of the system (i.e., run the SMR) are known as \emph{validators}. To do so, they lock coins (i.e., stake) and process transactions, providing clients an ordered sequence of confirmed transactions, referred to as the \emph{transaction ledger}. We assume the existence of a genesis block $G$, which contains the initial stake distribution and the identities of the \emph{validators} participating in the SMR protocol. For the remainder of this paper, we assume a static validator setting, where the stake distribution is known in advance to all participants. In Section~\ref{sec:discussion}, we discuss how to extend our system to a dynamic validator setting.

\vspace{3pt}
\noindent\textbf{Validators.}
For simplicity, we assume a uniform model where all validators lock the same stake equal to $D$ coins and have thus the same ``voting power''. 
If an entity locks some multiple $\alpha\cdot D$, they may control up to $\alpha$ individual validators. 
Thus, the SMR protocol involves a fixed set of validators, \(N\), which are known to all participants. Let \(n = |N|\) denote the total number of validators. We assume a public key infrastructure (PKI), where each validator is equipped with a public-private key pair for signing messages, and their public keys are available to all parties.

\vspace{3pt}
\noindent\textbf{Clients.}
Unlike validators, the number of active clients is not known to all parties. Clients have public keys, which they can use to issue authenticated transactions. We assume clients are \emph{silent}~\cite{sridhar2024consensus}, meaning they neither actively interact with nor observe the execution of the SMR protocol among validators. Instead, clients query a subset of validators to obtain confirmation of any transaction (i.e., the transaction ledger) from their responses.


\vspace{3pt}
\noindent\textbf{Cryptography.}
We assume that all participants are computationally bounded, i.e., they are modeled as probabilistic polynomial-time (PPT) interactive Turing machines. We further assume the existence of cryptographically secure communication channels, hash functions, and digital signatures.

\vspace{3pt}
\noindent\textbf{Time.}
Time progresses in discrete logical units indexed as \(u = 0, 1, 2, \dots\). Each participant has their own local clock. We assume that the maximum clock offset between any two parties is bounded, as any such offset can be absorbed into the network delay bound.  

\vspace{3pt}
\noindent\textbf{Network.}
We assume all participants are connected through perfect point-to-point channels. We consider two communication models: \emph{partial synchrony} and \emph{synchrony}. In the former we show impossibility results; in the latter, both impossibilities and main results.
In \emph{partial synchrony:} there is an unknown Global Stabilization Time ($GST$) after which messages arrive within a known bound $\Delta$; i.e., a message sent at time $t$ arrives by $t' \leq \max\{GST, t\} + \Delta$.
In \emph{synchrony:} every message sent at time $t$ arrives by $t + \Delta^*$, for a known bound $\Delta^*$.

\vspace{3pt}
\noindent\textbf{Adversarial model.}
We consider a mixed model where \emph{validators are honest, Byzantine, or rational}. Honest (or \emph{correct}) validators always follow the protocol. Byzantine validators may deviate arbitrarily from the protocol. We assume a Byzantine adversary does not have any liquid coins, i.e.,  all their coins are staked to be able to maximize their influence. 
Rational validators, on the other hand, will deviate from the protocol specification if and only if doing so maximizes their utility. We define a party's utility as its monetary profit, calculated as the rewards from participation minus potential penalties (e.g., slashing). 

We denote the number of Byzantine and rational validators by $f^*$ and $k$, respectively. In Section~\ref{sec:impossibilitiesSummary}, we derive the feasible bounds for $f^*$ and $k$. In our solution, we consider $n = 3f + 1$, where $f$ represents the maximum number of Byzantine validators tolerated by standard BFT protocols, e.g., \cite{hotstuff,buchman2018latest,pbft}. In Section~\ref{sec:protocol}, we specify that $f^* \leq f$ and $k \leq 2f - f^*$. Then, in Section~\ref{sec:recovery}, we extend our solution to Byzantine failures beyond the classic threshold, in particular, $f+1 \leq f^* < 2n/3$. 

We make the following assumptions on rational validators:
(a)~We assume that participation rewards exceed both the opportunity cost and the cost of operating a validator. Thus, these costs are treated as negligible, simplifying the analysis by offsetting i) the constant operational expenses (e.g., bandwidth, CPU, storage), ii) voting for a block proposed by another validator (e.g., adding a signature), iii) the constant opportunity cost with a minimum constant reward. In practice, rational validators account for these costs beforehand and conclude that participation is beneficial. (b)~A rational validator considers a strategy only if the resulting utility is guaranteed within the ledger, as we do not account for external trustless mechanisms. 
Promises or unverifiable commitments from the adversary regarding future actions are not considered credible and, thus, will not influence the validator's strategy.
Given the appropriate incentives, however, rational validators may collude with other rational and Byzantine validators. (c)~A rational validator deviates from the protocol only if such deviation results in strictly greater utility, as devising a strategy to deviate may incur additional costs.
For example, a rational validator will censor or double spend only if they can secure a bribe larger than their potential loss due to slashing.

\vspace{-0.15cm}
\subsection{Definitions and Properties}

\com{
\chris{\begin{itemize}
    \item I tried to describe the properties in words, as it seems they like it 
    \item do we have to clarify anything regarding the "PoS" based blockchain protocols?
    \item maybe citations should be added \end{itemize}}

\chris{Rational behind the definitions: As I understand it, the goal of an SMR is to provide the guarantees to clients. Correct validators should agree, so that it is possible for clients to extract valid responses. Following that I didn't think of giving the same guarantees we have for correct validators to rational validators, i.e., they output the same transaction ledger. They can do whatever they want if they choose to deviate from the protocol. I had in mind the following. Total order says: correct validators will still agree, misbehaving validators cannot manage to create disagreements. And then, the client-security, guarantees that clients only extract confirmations from correct validators (almost). So, we are still fine, while keeping what SMR fundamentally is.} }

At a high level, in an SMR protocol, validators must perform two primary tasks: order a set of transactions which they execute to extract the state, and reply to clients. 
Traditional SMR protocols such as PBFT~\cite{pbft} and HotStuff~\cite{yin2018hotstuff} assume an honest supermajority of validators -- specifically $2f+1$ out of the total $3f+1$ validators must be honest. Under this assumption, clients can extract the confirmation of transactions simply by waiting for responses from $f+1$ validators. In contrast, our setting introduces rational validators and assumes only an \emph{honest minority}. For this reason, we separate these two responsibilities: (i) ordering transactions, and (ii) replying to clients.

\vspace{3pt}
\noindent\textbf{SMR problem.}
Given as input transactions submitted by clients, the validators execute an SMR protocol $\Pi$ and output an ordered sequence of confirmed transactions, known as the transaction ledger. We say that $\Pi$ solves the SMR problem when: (i) correct validators agree in a \emph{total order}~\cite{defago2004total}, (ii) the transaction ledger is \emph{certifiable} to clients.

\begin{definition}
    We say that correct validators agree in a \textbf{total order}, when the transaction ledger output by $\Pi$, for every execution, satisfies: 
    \end{definition}
\begin{itemize}[nosep,leftmargin=*]
    \item \textbf{Safety}. The transaction ledgers of any two correct validators are prefixes of one another,
    \item \textbf{Liveness}. Any valid transaction witnessed by all correct validators, will eventually be included in the transaction ledger of every correct validator. 
\end{itemize}

\begin{definition}\label{def:cert}
    We say that $\Pi$ satisfies \textbf{certifiability} if the following holds for every execution, 
    \begin{itemize}[nosep,leftmargin=*]
        \item \textbf{Client-safety}: Clients accept the confirmation of a transaction only if it is included to the transaction ledger of at least one correct validator,
        \item \textbf{Client-liveness}: Clients accept confirmation for any transaction included to the transaction ledger of all correct validators. 
    \end{itemize}
\end{definition} 

We stress that \emph{total order} and \emph{certifiability} are distinct guarantees. Some protocols achieve total order but not certifiability (e.g., Dolev--Strong~\cite{dolev1983authenticated} for $f > n/2$ Byzantine faults). 
To satisfy both properties, validators and clients may adopt different confirmation rules, in line with the flexible BFT paradigm~\cite{malkhi2019flexible}. 

\vspace{3pt}
\noindent\textbf{Resiliency of SMR protocols.} Traditional SMR protocols~\cite{hotstuff, pbft, gelashvili2022jolteon} are secure when considering $f$ \emph{Byzantine} validators and $n-f$ \emph{correct} validators. We say that a protocol satisfying this condition is \emph{$(n,f)$-resilient}. 
 
\begin{definition}
A protocol \( \Pi \) executed by \( n \) validators is \textbf{\((n,f)\)-resilient} if it remains secure in the presence of up to \( f \) Byzantine validators, assuming that the remaining  $n - f$ validators follow the protocol correctly. 
\end{definition}

However, in traditional $(n,f)$-resilient SMR protocols, one rational validator is enough to violate safety and liveness as Byzantine validators can bribe the rational validator to collude with them. 
To capture both byzantine and rational behavior in our protocol, we introduce the notion of $(n, k, f)$-resiliency.
\begin{definition}
A protocol \( \Pi \) executed by \( n \) validators is \textbf{\((n, k, f)\)-resilient} if it remains secure in the presence of \( f \) Byzantine and \( k \) rational validators, assuming that the remaining $n - f - k $ validators follow the protocol correctly. 
\end{definition}

\vspace{3pt}
\noindent\textbf{$q$-commitable SMR protocols.} 
We focus on a family of SMR protocols that progress by collecting certificates of $q$ votes across multiple phases, similar to several BFT SMR protocols (e.g.~\cite{hotstuff, buchman2018latest}). 
 We define a \emph{quorum certificate} (QC) as a certificate consisting of \( q \) votes. A protocol is \emph{q-commitable} if a QC on a specific message \( m \) (e.g., \( m = \texttt{commit} \) in PBFT~\cite{pbft}) is both a necessary and sufficient condition for validators to commit a transaction (or block) $b$ to their local ledger. 

 


\begin{definition} A protocol $\Pi$ is a \textbf{$q$-commitable} SMR protocol when every correct validator includes a block $b$ in its local transaction ledger if and only if it has witnessed a QC on a specific message $m$, denoted $QC_{m}(b)$.
\end{definition} 

For simplicity, we may often omit the specific message $m$ and refer to the certificate that finalizes $b$ as $QC(b)$.
We stress that \( QC(b) \) may not be sufficient for certifiability from the clients' perspective, as they do not actively monitor the SMR execution (i.e., they are silent). However, for clients that monitor the SMR execution (e.g., the validators), \( QC(b) \) is sufficient for certifiability.


We study SMR protocols in the context of proof-of-stake (PoS) blockchains, where validators batch transactions into blocks, as in HotStuff~\cite{yin2018hotstuff}. Our objective is to determine when an \((n,f)\)-resilient SMR protocol can be transformed into an \((n,k,f)\)-resilient PoS blockchain.  
Specifically, we present a transformation that converts any \((n,f)\)-resilient \( q \)-certifiable PoS blockchain~\cite{pbft,hotstuff,gelashvili2022jolteon} into an \((n,k,f)\)-resilient PoS blockchain. 
At the core of this transformation, we introduce a new certification mechanism for finalizing transactions, ensuring security under our hybrid model.



\com{
\chris{I feel here a new subsection would fit better}

\vspace{3pt}
\noindent\textbf{Punishment strategy.} 
To disincentivize malicious operators from attempting safety violations, we must ensure that such deviations can be both detected and attributed, enabling the system to punish misbehaving nodes.

To this end, in the rational–Byzantine setting, \chris{citation: find the 1st} introduced the notion of a punishment strategy. Intuitively, a punishment strategy is a response that honest validators can adopt to make deviations unprofitable: if a coalition of rational validators deviates from the prescribed protocol, and honest validators play the punishment strategy, then no player in the deviating coalition can achieve a higher expected utility than by following the protocol.

\chris{include definition}

A simple punishment strategy in blockchain systems is protocol non-termination: upon detecting an equivocation, the protocol halts and all validators are unable to withdraw their locked stake. By ensuring that the potential gain from a safety violation (e.g., a double spend) is strictly less than the value of the locked stake per coalition member, this mechanism alone can be sufficient to achieve 
$(n,k,f)$-resilience, since rational validators gain no utility advantage from deviating. \chris{maybe discuss the economic analogous: grim-trigger/punish forever strategies}
}

\vspace{3pt}
\noindent\textbf{Finality latency.} 
\emph{Finality latency} refers to the time that elapses between a transaction being submitted by a client and its irreversible commitment to the transaction ledger. 

\vspace{3pt}
\noindent\textbf{Recovery and economic restitution.} 
A protocol that is $(n,k,f)$-resilient constitutes a complete solution to SMR: when at most $f$ validators are Byzantine and at most $k$ are rational, the protocol preserves both safety and liveness.  
We now consider executions in which the actual number of Byzantine validators $f^*$ exceeds the resilience threshold $f$ of an $(n,f)$-resilient system. In such executions, the protocol may suffer both safety and liveness violations.

Liveness attacks cannot, in general, be deterred, since Byzantine validators may simply refuse to participate. For this reason, the literature on recoverable consensus typically assumes an adversary that is willing to participate in the protocol (e.g., to earn participation rewards) but may opportunistically violate safety; this type of faults are called \emph{a-b-c} Byzantine~\cite{malkhi2019flexible,zlb, gong2025recover}. In the same spirit, we focus only on safety attacks with economic impact, namely conflicts among blocks that are accepted by clients\footnote{Recall that the client finalization rule might differ to the validator rule.}. Our goal is to ensure that, even if a client-side safety violation occurs, the system can recover and resume execution from a consistent state in which all honest clients harmed by the violation are \emph{economically reimbursed}, without minting new coins\footnote{Minting new coins inflates the currency thus penalizing honest clients.}. 

\begin{definition}\label{def:economicDef}
We say that an SMR protocol \emph{regains \textbf{economic restitution} after a safety violation} with recovery parameter $\Delta_R$ if, for every execution, after any safety violation, the protocol satisfies: 
\ifsubmissionversion
\emph{(i) Deterministic recovery}: all honest validators eventually recover to a common state $S$ for which there exists a verifiable state certificate that every honest validator witnesses at most $\Delta_R$ time after the equivocation,
\emph{(ii) Accountability:} in state $S$, validators responsible for the equivocation are punished, losing their stake, 
\emph{(iii) Client reimbursement:} in state $S$, each honest client receives the full monetary value of every transaction it has accepted during the execution,
\emph{(iv) Monetary conservation:} the total supply of coins remains the same (new coins are not minted).

\else
\fi
\end{definition}

\com{

\subsection{Problem Definition}
\com{
\chris{towards a more podc friendly version:}
We define state machine replication in the context of blockchain protocols~\cite{} as follows.

\begin{definition}
    A State Machine Replication blockchain protocol commits transactions issued by clients into a linearizable transaction ledger of blocks, which satisfies the following properties \cite{}
    \end{definition}
\begin{itemize}
    \item \textbf{Safety.} If two correct validators commit a transaction, they commit it in the same position of the transaction ledger (i.e., in the same block and same height of block).
    \item \textbf{Liveness.} A valid transaction witnessed by every correct node, will eventually be committed by all correct validators.
    \chris{ I don't think we need validity, we only talk about valid transactions}
\end{itemize}

In our work, we focus on q--commitable, which progress by gathering block certificates consisting of stages of $q$ votes.
\begin{definition} (q--commitable blockchain protocol): We say that a blockchain protocol $\Pi$ is a q--commitable blockchain protocol when every correct node $p$ includes a block $b$ in its local ledger $T^p$ if and only if it has witnessed $s$ stages of $q$ votes, where $s$ is a parameter related to the respective protocol. Existing $(n,f)$-resilient blockchains~\cite{pbft,hotstuff,jolteon} are $(n-f)$-participation blockchains.
\end{definition}

To capture both adversarial and rational behavior in our protocol, we define the notion of a \emph{(n,k,f)-resilient} blockchain.
\begin{definition}
    We say that a \emph{q--commitable blockchain protocol} $\Pi$ is a \emph{$(n, k,f)$-resilient blockchain protocol}, if it is an SMR solution when run by $n$ validators such that $f$ of them are adversarial, $k$ are rational, and at least $n-f-k$ are correct validators.
\end{definition}

Clients accept confirmation of transactions committed by correct validators. However, in Lemma~\ref{lemma:SafetyAttack} we showed that by considering rational validators alongside byzantine validators, $q$-participation blockchain protocols can output conflicting certificates. Although safety guarantees that correct validators will finalize the same transactions in the same order,   we should provide similar guarantees for the clients. To this end, we say that an $(n,f,k)$ SMR blockchain protocol is client secure as follows.

\begin{definition}
    We say that a \emph{$(n,k,f)$-resilient q--commitable blockchain protocol} $\Pi$, is \emph{client secure} if it satisfies, 
    \begin{itemize}
        \item client-safety: clients will accept a confirmation for a transaction, only if the transaction is committed by a correct node.
        \item client-liveness: for every transaction committed by correct validators, there is a certificate which indicates the confirmation of the transaction. That is every correct node can construct that certificate and every correct client will accept it.
    \end{itemize}
\end{definition} 

\chris{current version}
}
Our goal is to define a blockchain protocol implementing a secure payment system, i.e., ensuring clients neither lose money nor fail to receive transaction confirmations on the ledger, even in the presence of Byzantine and rational validators. Instead of designing a payment system from scratch, we explore extending existing $(n,f)$-resilient blockchain protocols. However, in Lemma~\ref{lemma:SafetyAttack}, we show that applying an $(n,f)$-resilient protocol leads to safety violations when run with $f$ adversarial, $k \geq 1$ rational, and $n-f-k$ correct validators.

Recent works~\cite{10646997,budish2024economic} allow temporary such disagreements and only ensure that correct validators eventually agree on the same ledger. Misbehaving validators are punished, and clients are reimbursed. To formalize this paradigm, we define \emph{eventual consensus}, which satisfies \emph{eventual safety}, i.e., validators eventually agree on the ledger, and \emph{liveness} as previously defined. Notably, protocols solving traditional consensus also solve eventual consensus.


\begin{definition} (Eventual Consensus). In a blockchain protocol $\Pi$, each node $p$ commits locally to a finalized ledger of blocks $T^p$ including transactions issued by clients. We denote the block with height $h$ in the local ledger of $p$ by $T^p_h$. We say that $\Pi$ solves eventual consensus, if it satisfies: 
\begin{itemize}
    \item Eventual safety: If two correct validators have committed a block $b$, they will eventually commit it at the same ledger height. More specifically, for every correct node $p_1$ and any $h \in \mathbb{N}$ such that $T^{p_1}_h \neq null$,  there is a time $t \in \mathbb{N} $ s.t. for every time $t' \geq t$, and every correct node $p_2$ with $T^{p_2}_h \neq null$ , $T^{p_1}_h = T^{p_2}_h$,  
    \item Liveness: if a transaction $tx$ is provided to every correct node, then all correct validators will eventually include $tx$ in a block of their local ledger.
\end{itemize}
\end{definition}



\com{

\begin{definition} (Consensus). Consider a set of validators that receive transactions from a set of clients and run a protocol $\Pi$, to agree on an order of the blocks. We call $\Pi$ a blockchain protocol. Each node $p$ commits locally to a finalized ledger of blocks $T^p$. We denote the block with height $h$ in the local ledger of $p$ by $T^p_h$.
We say that $\Pi$ solves consensus if for any set of input transactions it satisfies, 
\begin{itemize}
    \item Safety: there is no pair of correct validators $p_1$, $p_2$, and $h \in \mathbb{N}$ such that $T^{p_1}_h \neq null, T^{p_2}_h \neq null$, and $T^{p_1}_h \neq T^{p_2}_h$. 
    \item Liveness: if a correct node receives an input $tx$, then all correct validators will eventually include $tx$ in a block of their local ledger.
\end{itemize}
\end{definition}

}


\com{
\chris{maybe remove it from here, as it defined above}
We focus on q--commitable, which progress by gathering block certificates consisting of stages of $q$ votes.

\begin{definition} (q--commitable blockchain protocol): We say that a blockchain protocol $\Pi$ is a q--commitable blockchain protocol when every correct node $p$ includes a block $b$ in its local ledger $T^p$ if and only if it has witnessed $s$ stages of $q$ votes, where $s$ is a parameter related to the respective protocol. Existing $(n,f)$-resilient blockchains~\cite{pbft,hotstuff,jolteon} are $(n-f)$-participation blockchains.
\end{definition}
\za{we should discuss this: this essentially defines a class of protocols, trying to capture BFT-based consensus. Did Tim already define something like this with QCs? if not, this is an important definition, as it defines the class of protocols we can capture.}
\lef{is the $s$ stages important enough to be in the definition. I am afraid it might confuse}\chris{does it still confuse?}
}

To capture both byzantine and rational behavior in our protocol, we define the notion of a \emph{(n,k,f)-resilient} blockchain.
\begin{definition}
    We say that a blockchain protocol $\Pi$ is \emph{$(n, k,f)$-resilient}, if it solves eventual consensus when run by $n$ validators such that $f$ of them are adversarial, $k$ are rational, and at least $n-f-k$ are correct validators.
\end{definition}

Finally, we say that a blockchain protocol implements a secure payment system when it satisfies client-safety, i.e., the loss function of each client is non-positive, and client-liveness, i.e., any client can receive confirmation for transactions included in the ledger. Note that even if there is a temporary disagreement, i.e., malicious validators present inclusion proofs for conflicting blocks to clients, client-safety guarantees that the clients will eventually be reimbursed sufficiently.


\begin{definition}
    We say that a blockchain protocol $\Pi$ that solves eventual consensus and eventually outputs a transaction ledger $T$, implements a \emph{secure} payment system if it satisfies, \begin{itemize}
        \item client-safety: for every correct client $c$ that has accepted inclusion proofs for the transaction set $\mathcal{A}$, it holds that $L_c(T,\mathcal{A} )\leq 0$,
        \item client-liveness: if a client $c$ requests confirmation of a transaction $tx$ and $tx$ is included in the ledger $T$, then $c$ will eventually accept $tx$. 
    \end{itemize}
\end{definition}

}

\com{
\label{sec:preliminaries}
\za{The preliminaries are not well defined. The model,m the flow the results are all confusing and vague. Needs rewriting all of it. The following structure is proposed, @christos can you change it? These are the subsections of section 2: Model. 

\begin{enumerate}
  \item System model and assumptions
\begin{enumerate}
  
    \item System model: we have a proof of stake blockchain, a PKI, identities of parties. We look at a static instance, where parties have locked their stake to participate in the protocol. We assume each party has equal voting power (explain division). 
    \item Threat model: Partie are computationally bounded, i.e., crypto holds. We assume three types of validators: Byzaninte, arbitrary deviation, honest, following the protocol specification, and rational, utility-maximizing agents. We assume at most f Byz, k rational and 2f-k honest.
    \item utility: utility is defined with respect to monetary net gains, i.e., rewards - costs (explain what they are typical). Costs do not take into account opportunity costs from locking stake, but only slashing, 
    \item Network assumptions: point-to-point communication, weak synchrony, clocks.
    
    \end{enumerate}
    \item definitions and Protocol goals
    \begin{enumerate}
        \item distributed ledger: safety and liveness. this is the output of an SMR/consensus protocol (black box for execution not rewards and slashing). The only thing that changes: 1) unlocking of stake, rewards etc. 2) when a client finalizes a transaction.
        \item Our goal is to design a rationally safe distributed ledger. So essentially the same definition as a safe ledger but now achievable under a different model. The black box achieved that in the honest/Byzantine model, now our goal is to introduce the incentive mechanism that renders the protocol secure in the honest/rational/Byzantine model.
    \end{enumerate}

\end{enumerate}
The next section is the impossibilities. Then the next is the protocol. Then the next is the analysis, where we prove the properties we promised in our model.
}

\chris{ 
The plan is the following
\begin{itemize}
\item define blockchain: safety, liveness
\item define q--commitable blockchain protocol: safety, liveness, "p" resources/votes to finalize a block
\item (k,f)-resilient (q--commitable) blockchain, $k$ rational, $f$ byzantine
\item  The impossibility results (in another section). One for $f+k$ -> number of honest validators, another one for the network settings
\item Define our setting
\item Our protocol as a possibility result

\item we will have defined the system (PoS, etc) and the actors (clients: only read, consensus validators), so we only refer in these settings

\com{
\item Something like Definition: (f, h) - (f', h', k') consensus maintaining transformation protocol $\Pi'$. 
- Consider a protocol $\Pi$ which given a number of $f$ adversarial and $h$ correct validators solves consensus. Therefore, $\Pi$ given input/randomness blabla it outputs a ledger $L$ that satisfies Safety and Liveness.

- In $\Pi'$ we have $n= f + h$ validators, such that at most $f'$ are adversarial, $h'$ are correct, $k=n-h'-f'$ are rational. $\Pi'$ outputs  

a transaction ledger $L'$ such that, if we put the same randomness/ inputs (blabla), it satisfies:

T - Safety: for every block $b \in L$, if  the position of $b$ in $L$ is $i$ and $b \in L'$ at the position $i'$, then $i' = i$, 

T - Liveness:  for every block $b \in L$, then $b \in L'$

\item Impossibility results for the above definition (! in our settings, clients only read). To be "translated" in our setting: classical impossibility results, economic limits of permissionless consensus, synchrony assumptions needed, >= f + 1 correct validators
}

\end{itemize}

}

}

\com{
  \za{Check: We should study whether our protocol remains secure for any f, i.e., for any f between 0 and n/3-1, suppose we have f Byzantine, f+1 honest, and k=n-2f-1 rational. Is our protocol secure? As simple corner case is when f=0, so we have all rational but one that is honest. }
  
   \chris{
   To keep it very generic, consider $n=h+k+f$ validators in total, out of which $f$ are adversarial, $k$ are rational and $h = n-f-k$ are correct. Moreover, consider that the underlying blockchain protocol finalizes blocks after $p$ votes, e.g., signatures to the block, or, given its depth, the percentage of "resource" power used to extend it in a longest chain protocol. Then: 
   \begin{itemize}
       \item This is the property the protocol currently guarantees (let's call it Safety Compatibility for now, does not matter): Rational validators do not have positive utility to create conflicting finalized blocks. Or we can argue that this is equivalent to "Rational validators do not have positive utility to break Safety."
       \item First, it must hold that \textbf{$p>f+k$}. Otherwise, the adversarial and rational validators can successfully create and send clients inclusion proof for a subset of blocks $\mathcal{B}$, where they double-spend arbitrarily amount of money. 
       \item  Now, as done in our protocol, we finalize blocks that i) have at least $p$ votes and ii) the sum of output transactions is at most $C$. I omit the analysis, but that means the benefit of each rational node is at most $C_{local} = max_{m\in\{0,\dots,h-1\}}(\frac{h-m-1}{p-m-f-1})C$. Then, the deposit of each node should be $C_{local}$ which we can compute for known values. For example, in the current model with $f$ adversarial, $f$ rational, $f+1$ correct validators, $p=2f+1$,  $C_{local} = max_{m\in\{0,\dots,h-1\}} (\frac{f-m}{f-m})C = C$.
   \end{itemize}
   
   }
}
\com{    
The system consists of $N$ Server validators $p_1, p_2, p_3,..,p_i,..p_N$ and a set of Clients $U$ of unknown size. Validators and clients are connected through perfect point-to-point channels implemented through deduplication and resubmission mechanisms.

As we are in a blockchain environment, the blockchain serves as the Public Key Infrastructure where validators and clients are identified through their public keys and are authenticated through message signing.

Our application is Byzantine Fault Tolerant (BFT) consensus to build a distributed ledger where at most $f$ out of $N=3f+1$ validators can be faulty. Byzantine validators may collude and behave arbitrarily but do not possess sufficient computational resources to break basic cryptographic primitives.

In addition to the $f$ byzantine validators, we assume the existence of $k$ rational validators that attempt to maximize their utility.
We describe the benefits and drawbacks of different actions of rational validators as \textit{utility}, which is expressed in monetary terms. E.g. if a node receives rewards for producing a block, the utility is the rewards minus the costs of producing the block (e.g. cost of electricity and marginal costs of the infrastructure).
As a result, rational validators always follow the plan of action that yields the highest utility even if that might result in a deviation from the protocol specifications.
However, given two actions of equal utility, we assume rational validators follow the protocol specifications.

Alongside the $f$ byzantine and $k$ rational validators, there are $N-f-k$ correct validators in the system that follow the protocol specifications.

In order to incentivize rational validators to participate correctly, validators are required to deposit some collateral $C$ to take part in the protocol. 
For simplification purposes, we assume that each node possesses the same voting power and collateral $C$ in consensus. As such, if a node operator owns some multiple $x*C$, it can control up to $x$ individual validators.

Furthermore, to incentivize rational validators to deposit collateral and participate in the system we assume the existence of a reward system where participating validators are rewarded for following the protocol and depositing collateral. This reward is linear to the number of transactions the system processes to maximize throughput and is typically sourced from transaction fees and/or inflation.

We assume a weak synchronous network model, similar to the classical partially synchronous model with asynchronous network periods~\cite{partialsync}. However, we assume a known maximum delay of$\Delta_{GST}$ until GST.
\lef{this is not partial sync, this is sync with varying $\Delta$}
This delay can be large, for example, an hour, a day, or even a week. However, after GST, messages arrive within $\delta$ for long enough to make progress again.
As we show in Theorem~\ref{th:weaksync} this is the weakest model we can solve consensus for in the presence of $k>1$ rational and $f$ byzantine validators. \chris{we should clarify the network settings}

We assume validators to have local clocks that are monotonically increasing where $\delta_{s}$ denounces the time difference between the slowest and fastest clock after $\Delta_{GST}$.

\subsection{Liveness and Safety}

Following \cite{bitcoinproperties} a distributed ledger has to fulfill the following properties:

\begin{itemize}
\item Safety: If in a certain round, a correct node reports a ledger that contains two finalized transactions $tx_i$ and $tx_j$ in a given order, any correct node will only report $tx_i$ and $tx_j$ finalized in this order.
\lef{safety does not capture ordering between two transactions}

\item Liveness: provided that a valid transaction $tx$ is given as input to all correct validators $tx$ is eventually included in the ledger.

\end{itemize}
\za{from this point onwards it is a different section called impossibility results. We should keep in mind that the attack mentioned below is not a generic impossibility result, but its currently phrased as an attack for QC based protocols. Can it be generalized?}

\chris{ I can go over it. I would try something like this:
\begin{itemize}
    \item Define something like "q--commitable blockchain", namely a distributed ledger satisfying safety and liveness, including somehow the extra condition that to finalize a block we need votes/participation of $p$ validators. The way of counting participation is protocol specific, and could be the votes of validators or the depth of the block, etc.
    \item Could go one step further and define "($p, \delta$)-participation blockchain", where $p$ is the participation needed to finalize a block, as mentioned above, and $\delta$ is the upper bound in the message delay. This $\delta$ can be either unknown, i.e., partial synchrony, asynchrony, or known, i.e., synchrony, weak synchrony. For simplicity, I'll use the term $t-old$ for a block $B$ below. When I say $b$ is $t-old$ I mean, roughly, it was proposed $t$ (time units?) ago. 
    \item \textbf{Impossibility result.} Consider a set of $n = f+k +h$, out of which $f$ are adversarial, $k$ are rational, $h$ are correct, and $p, \delta$ some constants. If $h \geq 2(p-f-k)$, there is no ($p, \delta$)-participation blockchain protocol $\Pi$ which guarantees safety.
    \textbf{Sketch proof.} It is a feasible attack to create two conflicting blocks $b_1$, $b_2$, where $b_1$ is $t_1-old$, $b_2$ is $t_2-old$, and $t_1, t_2 \leq \delta$.
\end{itemize}

}
Safety and liveness are typically guaranteed by collecting several consecutive sets of $N-f$ signatures given $N=3f+1$ validators. Where each set of $N-f$ signatures is denoted as a quorum certificate $QC$. However, by adding $k$ rational validators to the system alongside the $f$ faulty validators, the safety and liveness guarantees of the underlying consensus protocol do not hold anymore.

\paragraph*{Safety Attacks:}

Given $f$ byzantine validators and at least $k \geq 1$ rational validators \textit{safety} can be violated during asynchronous or asynchronous network periods.

Assume a temporary network partition splitting $2f$ correct validators into two partitions of size $f$. In this situation, the collusion of $f+k$ byzantine and rational validators can create two certificates of $2f+1$ signatures for two conflicting proposals, breaking safety.
We denounce such a partition as a $fork$, producing two chains $\gamma_0$ and $\gamma_1$ with valid certificates.
\lef{ we might want to introduce the model after this so that we can argue better why we use that model}

\begin{theorem} 
\label{th:forkutility}
    Given any protocol $\Pi$ that implements consensus, there is a positive utility for rational validators to fork.
\end{theorem}
\begin{proof}
Assume a rational node holding liquid money $\beta$. Given $\Gamma$ chains, the rational node can spend $\beta$ on all $\Gamma$ chains and present a proof of successful inclusion to a client, yielding a positive utility of $\beta * (\Gamma-1)$. As such, for any $\Gamma >= \beta > 1$ there is a positive utility to forking.
\end{proof}

Therefore, if alongside $f$ faulty validators $k$ rational validators misbehave we cannot guarantee safety. As such, we have to carefully design the system in a way such that there is no positive utility for any rational node to deviate from the protocol. We denote this property as \textit{Rational Safety}.

\begin{itemize}
    \item Rational Safety: There is no positive utility for any rational validators to deviate from the protocol.
\end{itemize}
\lef{Safety here and Safety above are completely different maybe use another word like security or rational tolerance}

The more rational validators there are in the system alongside the $f$ faulty validators, the more chains $\Gamma$ can be created.
However, there is a tight upper bound to how many correct validators we can replace with rational validators in the system.

\chris{This is not correct if we consider clients communicating appropriately with validators before finalizing, right? What if clients send the transactions with inclusion proofs to honest validators and wait enough until honest validators can detect equivocations? In our settings, we will have defined the clients and consensus validators as actors. The clients perform "only read" operations.}
\begin{theorem}
\label{th:f+1}
    In an environment with $f$ byzantine and $k$ rational validators, we need at least $f+1$ correct validators out of $N=3f+1$ total validators to guarantee rational safety.
\end{theorem}
\begin{proof}
Given less than $f+1$ correct validators, there are $k + f \geq N-f$ rational and byzantine validators, and any combination of $k$ and $f$ can create an infinite number of forks $\Gamma = \infty$. Given a large but unknown set of clients of size $U$ the utility is only bound by the number of clients we can extract the utility from, resulting in $\beta * U$.

Given that $U$ is large and unknown we cannot design practical counter incentives to prevent rational validators from colluding to fork the system, making rational safety impossible to achieve.
\end{proof}

\paragraph*{Liveness Attacks}

In the presence of $k \geq 1$ rational validators, a single rational node colluding with the $f$ byzantine validators can prevent $N-f$ signatures from being collected, which circumvents \textit{liveness}.

As such, we require an incentive scheme that rewards participation in the protocol. However, this reward scheme has to be carefully designed as misaligned incentives might promote free-riding or censorship attacks~\cite{kogias2016enhancing}. For example, if rewards are only paid to block producers, rational validators can increase their relative wealth in the system compared to correct validators by not voting on the proposals of correct validators. As such, $f$ byzantine validators and $k$ rational validators get all their proposals accepted while being able to deny all proposals of the correct validators.

\subsection{Impossibilities}

Past work~\cite{trap} assumes an upper limit $G'$ that might span all liquid funds in the system, as the potential utility of a rational node to violate safety. However, as we prove in the following, given rewards for participation, in the classical partially synchronous model, the bound of the maximum utility $G$ is actually unknown and may span significantly larger than all liquid funds $G'$ in the system.

\begin{theorem}
\label{th:weaksync}
    Given participation rewards, in the partially synchronous setting, the maximum utility $G$ of a fork is unknown, and $G \geq G'$
\end{theorem}
\begin{proof}

In the partially synchronous model messages might be delayed arbitrarily until some unknown GST. However, after GST, messages are delivered within $\delta$ for long enough to make progress~\cite{partialsync}. As such, in the presence of $f$ faulty validators, liveness is only guaranteed after an unknown but bounded stabilization time $\Delta_{GST}$ which may be controlled by the adversary. Therefore, if we can prove that the utility $G$ of a given fork grows proportional to the time the system is in an asynchronous phase until some unknown $\Delta_{GST}$, $G$ also must be unknown and may be controlled by the adversary. The proof for this is straightforward. 

Given a small but lower bounded reward $r_0$ (e.g. the smallest unit of the currency) for each transaction that is processed by the system. In the presence of $f$ byzantine validators and $1$ rational node $p_j$, $p_j$ receive up to $\frac{f+1}{3f+1}$ of the rewards(i.e. $p_j$ may receive the combined rewards of the byzantine validators). 

As such, after a rational node $p_j$ spends some liquid money $\beta$, after $\frac{f+1}{3f+1}*\beta*r_0$ transactions $p_j$ receives $\beta$ back and can spend $\beta$ again. 

Given a bounded but unknown time $\Delta_{GST}$, a bounded but unknown number of transactions $t_\Delta$ can be processed in the system. Therefore, if the adversary controls $\Delta_{GST}$ the adversary also controls $t_\Delta$. Following that, the rational node will receive rewards worth $\frac{f+1}{3f+1}*\beta*r_0*t_\Delta$ which is, by definition, unknown but proportional to $t_\Delta$. Therefore, for any $G$ the adversary can delay $\Delta_{GST}$ sufficiently such that $\frac{f+1}{3f+1}*\beta*r_0*t_\Delta > G'$.
\end{proof}

As a result, we cannot assume a known upper limit on the utility of a rational node to participate fork, and hence, we cannot leverage this to calculate the minimum collateral. However, if we introduce a weak synchrony assumption where $\Delta_{GST}$ is bounded and known beforehand, we can limit the maximum utility of participating in a fork. In the following, we outline how we construct \sys on the top of this assumption and achieve a system with practical values for $G$.
}

\subsection{SMR Game and Coalition-Compliance} 
To connect game-theoretic rational behavior to protocol-level resilience guarantees, we introduce \emph{$(n, k, f)$-coalition-compliance}. Informally, a protocol is \emph{$(n, k, f)$-coalition-compliant} if no rational validator prefers to follow a strategy that violates the SMR protocol, regardless of the actions of Byzantine validators. This property holds even if all rational and Byzantine validators collude. 

By transforming an $(n,f)$-resilient protocol into an $(n, k, f)$-coalition-compliant protocol, we ensure that rational validators have no incentive to violate the SMR security properties. As a result, the transformed protocol effectively achieves $(n,k,f)$-resilience, even when rational and Byzantine parties collude. 

\begin{definition}[SMR Game]
The \textbf{SMR game} is a tuple:
\[
G = (n, (S_i)_{i \in [n]}, (u_i)_{i \in [n]}, H), \quad\text{where:}
\]
\begin{itemize}
    \item \( n \in \mathbb{N} \) is the number of players. Each player is either \emph{correct}, \emph{Byzantine}, or \emph{rational}. 
    
    \item \( H \) is the global history of the protocol execution. Each player \( i \in [n] \) observes a local view \( H_i \subseteq H \).
    
    \item \( S_i \) is the set of all possible strategies available to player \( i \), where a strategy is a function mapping local histories \( H_i \) to actions.
    A global strategy profile is \( s = (s_1,\dots, s_n) \in S = \prod_{i \in [n]} S_i \).
    
    \item The utility function \( u_i : S \to \mathbb{R} \) gives the total payoff to player \( i \) over the entire execution.

\end{itemize}
We denote by \( s_{-i} \) the strategy profile excluding player \( i \), and by \( S_{-i} \) the set of all such profiles.
\end{definition}

We partition the space of strategy profiles into two sets: \(S_{\text{bad}}\), consisting of strategy profiles resulting in safety violations (i.e., conflicting blocks are finalized with valid certificates) or liveness violations (i.e., permanent censorship of a transaction), and $S_{\text{good}}$ which preserve both properties.

For each validator \( p \), we define \( S_{\mathsf{bad}}^p \) as the set of individual strategies \( s_p \) for which there exists a strategy profile \( s = (s_p, s_{-p}) \in S_{\mathsf{bad}} \) where \( p \) actively contributes to the violation. Conversely, 
for each validator \( p \), we denote by \( S_{\mathsf{good}}^p \) the set of individual strategies \( s_p \) such that there exists a strategy profile \( s = (s_p, s_{-p}) \in S_{\mathsf{good}} \).


We say that a protocol is \emph{coalition-compliant} if no rational validator can profit
from contributing to safety or liveness violations, even under full collusion with
Byzantine validators. In other words, rational validators always have a protocol-compliant strategy that is equally good or better than any deviation causing violations.


%

\begin{definition}  \label{weaklyDominate}
A strategy \( s_i \in S_i \) \textbf{weakly dominates} a strategy \( s_i' \in S_i \) if, for every possible strategy profile \( s_{-i} \in S_{-i} \), it holds that 
$u_i(s_i, s_{-i}) \geq u_i(s'_i, s_{-i}),$
and there exists at least one \( s_{-i} \in S_{-i} \) such that:
$u_i(s_i, s_{-i}) > u_i(s'_i, s_{-i}).$
\end{definition}

\begin{definition}
We say that an SMR game is \textbf{$(n, k, f)$-coalition-compliant} if for any rational validator \( p \), and every strategy \( s_p \in S_{\mathsf{bad}}^p \), there exists a strategy \( s_p' \in S_{\mathsf{good}}^p \) that weakly dominates \( s_p \).
\end{definition}

Coalition-compliance strengthens the compliance notion of~\cite{karakostas2022blockchain} by permitting collusion between Byzantine and rational validators, and is therefore stronger than an equilibrium notion in which rational parties act independently. It is deliberately weaker than incentive compatibility, which would require protocol-following to be utility-maximizing against every adversarial profile. That requirement is unattainable here: a rational validator that withholds a transaction in rounds where the offered bribe exceeds the associated fee earns strictly more than one that includes it immediately, so protocol-following is not utility-maximizing. Coalition-compliance instead guarantees that such deviations only delay inclusion; every valid transaction is eventually finalized, so liveness is preserved, while no safety violation or permanent censorship is ever strictly profitable. Coalition-compliance thus targets the minimal conditions for safety and liveness rather than pinning down a unique strategy.


\com{
\subsubsection*{Safety-Preserving Protocols}

We say a protocol is \emph{safety-preserving} if rational validators do not find it beneficial to participate in a forking (i.e., signing conflicting blocks) attack in any round. To formalize that let:
\begin{itemize}
    \item $s_\Pi$ denote the strategy of following the protocol as specified,
    \item $S_f \subseteq S$ be the set of all strategy profiles that lead to a fork (i.e., result in multiple conflicting ledgers). For each validator $p$, let $S_{f,p} \subseteq S_f$ be the set of strategies in which $p$ actively contributes to a fork.
\end{itemize}

A protocol is safety-preserving if, for every rational validator $p$, the strategy $s_\Pi$ weakly dominates every strategy $s \in S_{f,p}$.

\subsubsection*{Liveness-Preserving Transformation}

\chris{ 
\begin{itemize}
    \item When is liveness preserved (essential condition): for any round $r$ there is a round $r'>r$ where either an honest node proposes, or a rational node proposes and no censorship.
    \item 
    Challenge: explain high level what's needed. Now we explain one sufficient condition, it's okay given that we can guarantee that in our protocol, but: it should not be explained before the impossibility results. 
\end{itemize}
}
To preserve liveness in the presence of rational validators, we require that there are periods where they follow the protocol's liveness-preserving actions. Specifically, rational validators must:

\begin{enumerate}
    \item\label{condition1} Propose a valid block when it is their turn (as specified by $\Pi$).
    \item\label{condition2} Vote for any valid block proposed by other validators.
    \item\label{condition3} Participate in additional liveness actions prescribed by $\Pi$ (e.g., view-change protocols).
\end{enumerate}

Conditions~\ref{condition1} and~\ref{condition2} are essential for $\Pi$ to ensure liveness. Assuming that synchronous periods are sufficiently long, these actions guarantee that at least one correct validator will eventually propose a valid block. Consequently, any valid transaction observed by all correct validators will eventually be included in the ledger.

To demostrate that, we will fix all the strategies $S_l \subseteq S$ be the set of all strategy profiles that lead to a liveness violation (i.e., permanent censorship of a transaction) and for each validator $p$, let $S_{l,p} \subseteq S_l$ be the set of strategies in which $p$ actively contributes to the liveness violation (by astaining to liveness-preserving actions).

A protocol is liveness-preserving if, for every validator $p$, the strategy $s_\Pi$ weakly dominates every strategy $s \in S_{l,p}$.

\com{
We will prove that \sys
is a \emph{safety-preserving} and \emph{liveness-preserving} transformation as follows.

\subsubsection*{{Safety-preserving transformation}} We will demonstrate that, in any given round, rational validators do not engage in a forking attack.

To this end, we model each round of \sys as a normal-form game, where $s_\Pi$ denotes the strategy of following the protocol specifications, and $S_{f}$ is the set of all strategy profiles that result in a system fork. Furthermore, for each validator $p$, let $S_{f, p}$ be the set of all the strategies available to $p$ in $S_{f}$ where $p$ participates in a fork (i.e., contributes to at least two conflicting ledgers). We will show that $s_\Pi$ (n, k, $f^*$)$-$weakly dominates any strategy $s \in S_{f, p}$.




\subsubsection*{{Liveness-preserving transformation}} Since \( \Pi \) is \((n,f)\)-resilient, it ensures liveness when executed by at most \( f \) Byzantine and at least \( n-f \) correct validators. Given that the number of correct and rational validators is \( f + k \geq n - f \), liveness is still satisfied if the \( k \) rational validators follow the protocol's liveness-preserving actions.  Specifically, rational validators must:  

\begin{enumerate}
    \item\label{condition1} Propose a valid block (as specified by $\Pi$) when it is their turn.
    \item \label{condition2} Vote for any valid block proposed by other validators.
\end{enumerate}


Conditions~\ref{condition1} and \ref{condition2} are required by $\Pi$ to ensure liveness. Given the assumption that synchronous periods are long enough, these conditions guarantee that at least one correct validator will eventually propose a valid block. Therefore, any valid transaction witnessed by all correct validators will be included in the transaction ledger. 

Additionally, to ensure client-liveness, in \sys, rational validators must commit finality votes for any block finalized by at least one correct validator.

To establish that \sys is a liveness-preserving transormation, we will show that at any round, there are future ``good" leader windows\footnote{There are leader windows where Byzantine validators have depleted all their coins coming from participation rewards, and thus cannot further bribe rational validators.}, where at least one correct validator will successfully propose a block. To this end, we will model any such ``good" leader window as a normal-form game and show that, in these windows, following the protocol (and thus act honestly with respect to liveness and client-liveness) is a $(n,k,f^*)-$resilient weakly dominant strategy for rational validators.
}
}


\section{Impossibility results} \label{sec:impossibilitiesSummary}
\za{Can I cut some of the 3.1-3.7??}
In the following, we identify settings in which a $q$-committable \((n,f)\)--resilient PoS blockchain cannot be transformed into an \((n,k, f^*)\)--resilient PoS blockchain for $f^*\leq f$. 
Additionally, we determine when $q$-committable SMR protocols fail to guarantee finality before the synchrony bound $\Delta^*$ elapses. 

We provide a summary of our results and defer the proof sketches to Appendix~\ref{impossibilities:sketch} and the formal proofs to  
\ifsubmissionversion
Appendix~G extended version~\cite{fullversion}.
\else
Appendix~\ref{sec:impossibilities}.
\fi We also illustrate the main results in Figure~\ref{fig:designspace}.  We stress that our impossibility results apply to the setting of \emph{silent clients}, where clients decide whether to accept a block based solely on an accompanying certificate, without engaging in additional communication rounds with other participants. Finally, $(n,k,f)$--coalition compliance is a deterministic guarantee that must hold in every execution; thus, a single attack suffices to prove impossibility. We discuss probabilistic guarantees in Section~\ref{sec:discussion}.

\begin{figure*}[t]
\centering

\begin{minipage}[t]{0.48\textwidth}
\centering
\includegraphics[width=\linewidth]{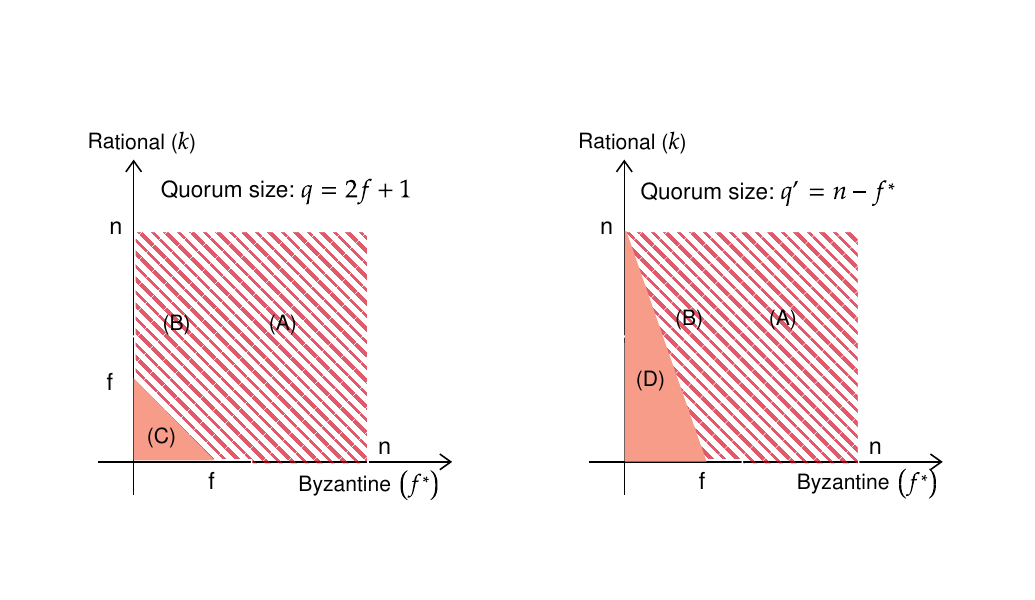}

\footnotesize\textbf{Partial synchrony}

\scriptsize
\raggedright
(A): Impossibility for \(f^* > \lceil h/2 \rceil\) (Theorem~\ref{impossibility:synchrony}), \\
(B): Impossibility for  $q \leq f + k +  \lfloor \frac{h}{2} \rfloor$(Theorem~\ref{impossibility:partialSync2}), \\ (C): \((n,f)\)-resilient SMR protocols are also \((n,f^*,k)\)-resilient for  \(f^* + k \leq f\), \\
(D): Enlarging the quorum to \(q = n - f^*\) allows resilience for all \(k + 3f^* < n\).
\end{minipage}
\hfill
\begin{minipage}[t]{0.48\textwidth}
\centering
\includegraphics[width=\linewidth]{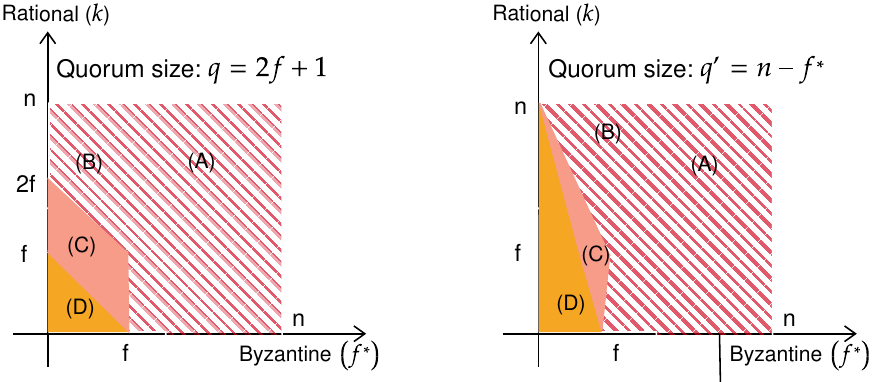}

\footnotesize\textbf{Synchrony}

\scriptsize
\raggedright
(A) We focus on $q$--commitable protocols that still remain safe in partial synchrony with up to $f^*$ Byzantine validators, \chris{fix}\\
(B) Impossibility for \(f^* + k \geq q\) (Theorem~\ref{impossibility:correctvalidators}), \chris{fix $[n/3, n/2]$}\\
(C), (D): Similar to the left side. \\ 
(E) Possibility (\sys) only by limiting transaction volume (Theorem~\ref{impossibility:responsiveness}). 
\end{minipage}

\caption{Design space of \(2f+1\)-committable SMR protocols.}

\label{fig:designspace}

\end{figure*}

\vspace{3pt}
\noindent\textbf{Partial synchrony.}
In Theorem~\ref{impossibility:synchrony}, we show that designing an  $(n, k, f)$-resilient $q$-commitable SMR protocol is impossible for $ f \geq \lceil \frac{h}{2} \rceil$ ($h$ is the number of correct validators). Notice that this result depends only on the proportion of Byzantine and correct validators. This is because to guarantee liveness, the size of the quorum $q$ is at most $n-f$. For $ f \geq \lceil \frac{h}{2} \rceil$, we cannot guarantee that two quorums intersect in at least one correct validator. Rational validators, in turn, will collude with Byzantine validators to form quorums of conflicting blocks to maximize their monetary gains.


To establish this result, we extend the proof of Budish et al.~\cite[Theorem 4.2]{budish2024economic} within our model. The core issue is that rational validators can collude with Byzantine to generate valid quorum certificates while avoiding penalties. This is possible because, for any predefined withdrawal period that allows validators to reclaim their stake, correct validators cannot reliably distinguish between an honest execution with network-induced delays and malicious behavior.
Hence, rational validators may engage in equivocation without facing slashing, undermining the protocol's security. 

\begin{restatable}{theorem}{PartialSynchrony} \label{impossibility:synchrony}
Assuming silent clients, a static setting, and a partially synchronous network, no $q$-commitable SMR protocol can be \((n, k, f)\)--resilient for any \( q \) when \( f \geq \lceil \frac{h}{2} \rceil \), where \( h = n - k - f \).

\end{restatable}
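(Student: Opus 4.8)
The plan is to reduce the $k$ rational validators to \emph{effectively Byzantine} behaviour for the purpose of a safety attack, and then run a split-brain indistinguishability argument in the style of Budish et al.~\cite{budish2024economic} that manufactures two conflicting quorum certificates. First I would extract the liveness constraint on the quorum size. In a partially synchronous network the $f$ Byzantine validators may stay silent indefinitely, and before $GST$ a correct validator cannot distinguish a crashed Byzantine party from one whose messages are merely delayed. Hence, to guarantee liveness, a QC must be formable from the votes of the at most $n-f$ non-Byzantine validators, giving the bound $q \le n-f = h+k$ for any candidate $q$.

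The crux, and the step that upgrades the classical $f$-bound to an $(f+k)$-bound, is to argue that no slashing rule can deter the rational validators from equivocating. Following the logic of Budish et al.~\cite[Theorem~4.2]{budish2024economic}, in the static setting with silent clients and \emph{any} finite withdrawal (unbonding) period, I would consider a sufficiently long asynchronous window before $GST$. Within this window a validator that signs two conflicting blocks is indistinguishable, from the viewpoint of every correct validator, from an honest validator that voted on a stale view because it had not yet received the competing proposal; since clients are silent, they supply no corroborating evidence either. Consequently no correct validator can attribute the equivocation with certainty before the withdrawal period elapses and the stake is reclaimed, so the expected penalty is zero while double-spending across the induced fork yields strictly positive profit. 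By the rationality assumptions (c)--(d), the $k$ rational validators therefore collude with the $f$ Byzantine ones, and up to $f+k$ validators will sign conflicting QCs without repercussion.

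With this reduction I would carry out the construction. Partition the $h$ correct validators into $H_1$ of size $\lceil h/2\rceil$ and $H_2$ of size $\lfloor h/2\rfloor$, and let the $f+k$ adversarial validators equivocate, presenting branch $b$ toward $H_1$ and a conflicting branch $b'$ toward $H_2$ while an admissible pre-$GST$ schedule hides $H_1$ from $H_2$. In the world $W_1$ where $H_2$ is merely slow, $H_1$ together with the adversarial set observes $\lceil h/2\rceil + f + k$ votes for $b$ and, by liveness, must commit $b$; symmetrically, in $W_2$, $H_2$ commits $b'$ from $\lfloor h/2\rfloor + f + k$ votes. The attack world $W_3$ superimposes both executions: each partition's local transcript is identical to its single-branch world, so both commit, and the intersection of the two quorums is exactly the equivocating adversarial set, containing no correct validator. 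Handing the two certificates to silent clients breaks client-safety via a double spend.

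It remains only to verify the arithmetic closes for every admissible $q$. Using $q \le h+k$ together with the hypothesis $f \ge \lceil h/2\rceil$, we obtain $\lfloor h/2\rfloor + f + k \ge \lfloor h/2\rfloor + \lceil h/2\rceil + k = h + k \ge q$, so the binding partition $H_2$ reaches a quorum, and $H_1$ does so a fortiori; thus both conflicting QCs form irrespective of the protocol's choice of $q$, and no $q$-commitable SMR protocol is $(n,k,f)$-resilient once $f \ge \lceil h/2\rceil$. I expect the main obstacle to be the second paragraph: rigorously showing that \emph{no} slashing mechanism can hold the rational validators accountable in this model, since the remaining steps are a routine adaptation of the Byzantine quorum-intersection impossibility.
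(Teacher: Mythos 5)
Your proposal follows essentially the same route as the paper's proof: the liveness bound $q \le n-f = h+k$, the partition of the $h$ correct validators into halves of sizes $\lceil h/2\rceil$ and $\lfloor h/2\rfloor$, the three-world indistinguishability argument in which the $f+k$ adversarial validators simulate honest executions toward each partition, the arithmetic $\lfloor h/2\rfloor + f + k \ge \lfloor h/2\rfloor + \lceil h/2\rceil + k = h+k \ge q$, and the key economic point that an adversarially late $GST$ defeats any fixed withdrawal period $\Delta_W$, so the coalition reclaims its stake before evidence of equivocation can reach any correct validator (the paper additionally makes explicit that the withdrawal transactions are placed in both branches and that liveness forces correct validators to finalize them).

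There is, however, one step the paper treats as essential that you compress into a single clause: ``handing the two certificates to silent clients breaks client-safety via a double spend.'' Resilience in this paper includes certifiability, and, more importantly, the rational validators join the coalition only if deviating yields \emph{strictly} greater utility (assumption (c)), which requires the double spend to actually pay off, i.e., that clients accept both conflicting inclusion proofs. A protocol designer could try to defend exactly here by making the client confirmation rule more demanding. The paper closes this door with a dedicated argument: any confirmation rule compatible with client-liveness either (i) waits some time $k\Delta$, which fails by a further indistinguishability argument (a client cannot distinguish a fork-free execution from one in which $GST$ occurs after its waiting window), or (ii) demands vote certificates, which cannot require more than $n-f$ votes (otherwise Byzantine silence breaks client-liveness), and the forged certificates meet that bar. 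Without this piece, your profitability claim --- and hence the rationality of the coalition, on which the whole impossibility rests --- is asserted rather than proved; adding it would make your proof match the paper's in full.
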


It follows directly from Theorem~\ref{impossibility:synchrony} that no \( q \)-commitable protocol can be \( (n,f) \)--resilient when \( f = \lceil h/2 \rceil \), which agrees with the well-known result that resilience is impossible once Byzantine validators exceed one third of the network (\( f \geq n/3 \)). Furthermore, our result shows that even if a \( q \)-commitable protocol is resilient for \( f = \lfloor h/2 \rfloor \), replacing just one correct validator with a rational one breaks resilience if the total number of Byzantine and rational validators reaches at least one third the network (\( f + k \geq n/3 \)), as shown in Corollary~\ref{impossibility:synchrony2}. This result highlights the fragility of quorum-based protocols under rational behavior, as even a small shift from correct to rational participants significantly impacts security.\za{to shorten}

\begin{corollary} \label{impossibility:synchrony2}
Consider any $q$-commitable protocol $\Pi$ that is $(n, f)$--resilient for $f = \lfloor \frac{h}{2} \rfloor$ where $h=n-f$ is the number of correct validators. For any $k\geq 1$, $\Pi$ is not $(n, k, f)$--resilient in our model under partial synchrony. 
\end{corollary}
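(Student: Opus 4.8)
The plan is to derive Corollary~\ref{impossibility:synchrony2} directly from Theorem~\ref{impossibility:synchrony} by a simple parameter-counting argument. Theorem~\ref{impossibility:synchrony} rules out an $(n,k,f)$-resilient $q$-commitable protocol whenever $f \geq \lceil h'/2 \rceil$, where $h' = n - k - f$ is the number of \emph{correct} validators in the $(n,k,f)$ configuration. The corollary, by contrast, starts from a protocol that is $(n,f)$-resilient with $f = \lfloor h/2 \rfloor$ and $h = n - f$ correct validators (no rational validators), and asks what happens when one correct validator is relabeled as rational. So the first step is to carefully track the bookkeeping of how the counts shift between the two regimes.

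First I would fix notation: in the original $(n,f)$-resilient setting we have $h = n - f$ correct and $f$ Byzantine, with $f = \lfloor h/2 \rfloor$, so $n = h + f = h + \lfloor h/2 \rfloor$. Now introduce $k \geq 1$ rational validators by converting $k$ of the correct ones; the new correct count becomes $h' = n - k - f = h - k$. The goal is to invoke Theorem~\ref{impossibility:synchrony}, which requires verifying the hypothesis $f \geq \lceil h'/2 \rceil = \lceil (h-k)/2 \rceil$. The key observation is that $f = \lfloor h/2 \rfloor$, and for any $k \geq 1$ we have $\lceil (h-k)/2 \rceil \leq \lfloor h/2 \rfloor$. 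I would verify this inequality by a short case split on the parity of $h$ (and noting it only strengthens as $k$ grows, so the $k=1$ case is the binding one): for $h$ even, $\lfloor h/2 \rfloor = h/2$ and $\lceil (h-1)/2 \rceil = h/2$, giving equality; for $h$ odd, $\lfloor h/2 \rfloor = (h-1)/2$ and $\lceil (h-1)/2 \rceil = (h-1)/2$, again equality. Hence the hypothesis $f \geq \lceil h'/2 \rceil$ holds already at $k=1$, and Theorem~\ref{impossibility:synchrony} applies to the new $(n, k, f)$ configuration, yielding non-resilience.

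Concretely, the argument runs: since $\Pi$ is $q$-commitable and we are in the silent-client, static, partially synchronous model of Theorem~\ref{impossibility:synchrony}, and since the relabeled configuration satisfies $f \geq \lceil (n-k-f)/2 \rceil$, Theorem~\ref{impossibility:synchrony} directly certifies that $\Pi$ cannot be $(n,k,f)$-resilient. I would also remark on the interpretation that the text already flags: because the two floor/ceiling quantities coincide exactly at $f = \lfloor h/2 \rfloor$, this threshold is precisely the boundary case, which is why shifting a \emph{single} validator from correct to rational tips the protocol from (possibly) resilient to provably non-resilient --- demonstrating the knife-edge fragility claimed in the surrounding discussion.

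The main obstacle here is not conceptual but bookkeeping-precision: the whole corollary hinges on the floor/ceiling inequality $\lceil (h-k)/2 \rceil \leq \lfloor h/2 \rfloor$ being tight at $k=1$, so the delicate part is making sure the parity case analysis is airtight and that I have correctly identified which $h$ (the one with $k=0$ correct validators) feeds into the ceiling expression (the one with $k \geq 1$). A subtle point worth stating explicitly is that $n$ is held fixed across the relabeling --- we are not adding a validator but reclassifying an existing correct one as rational --- so $f$ stays $\lfloor h/2 \rfloor$ with $h$ being the \emph{original} correct count, while the quantity compared against in Theorem~\ref{impossibility:synchrony} uses the \emph{reduced} correct count $h - k$. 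Getting this distinction right is the only place an error could creep in; once it is pinned down, the corollary is an immediate consequence.
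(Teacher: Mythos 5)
Your proof is correct and takes essentially the same route as the paper, which states Corollary~\ref{impossibility:synchrony2} as an immediate consequence of Theorem~\ref{impossibility:synchrony}: reclassifying $k \geq 1$ correct validators as rational reduces the correct count to $h-k$, and the inequality $f = \lfloor h/2 \rfloor \geq \lceil (h-k)/2 \rceil$ shows the theorem's hypothesis is met. Your explicit parity case split merely fills in the floor/ceiling arithmetic that the paper leaves implicit.
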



Corollary~\ref{impossibility:synchrony2} implies that no quorum-based SMR protocol \( \Pi \) can achieve \((3f+1, k, f)\)--resilience for $k \geq 1$ in the partial synchronous model. That means that quorum-based SMR protocols are not resilient for $f+k \geq n/3$ under the model described in Theorem~\ref{impossibility:synchrony}.

\vspace{3pt}
\noindent\emph{Fewer Byzantine.} Theorem~\ref{impossibility:synchrony} characterizes the maximum Byzantine resilience of \((n,k,f)\)-resilient SMR protocols, independently of the quorum size \(q\). A natural question is what happens \emph{below} this maximum Byzantine threshold. Can an SMR protocol remain secure when \(f^* < \lceil h/2 \rceil\) and the combined number of Byzantine and rational validators exceeds one third of the network, i.e., \(f^* + k > n/3\)?

When \(f = \lceil h/2 \rceil\), even a single rational validator can collude with Byzantine validators to equivocate without being detected by other rational validators. However, for \(f^* < \lceil h/2 \rceil\), rational validators must collude to equivocate. This opens a potential design space in which rational validators can be incentivized to expose misbehavior. One such approach is a \emph{baiting strategy},  leveraged by TRAP~\cite{trap} to achieve consensus on a single value, where rational validators are rewarded for reporting equivocation. Theorem~\ref{impossibility:partialSync2} characterizes the conditions under which, even in the presence of baiting strategies, solving SMR remains impossible.

\begin{restatable}{theorem}{partialSync2} 
\label{impossibility:partialSync2}
Assuming participation rewards are distributed to validators, silent clients, a static PoS setting, and a partially synchronous network. Without any bounds i) on the liquid coins held by rational validators or malicious clients, ii) or on the participation rewards, the following holds: no $q$-commitable SMR protocol can be \((n, k, f)\)--resilient for any \( q \) when \( q \leq f + k +  \lfloor \frac{h}{2} \rfloor \), where \( h = n - k - f \). \end{restatable}

Theorem~\ref{impossibility:partialSync2} implies that no quorum-based SMR protocol can achieve 
$(3f+1, k, f^*)$--resiliency with $k = f - f^*+1$ in partial synchrony without enlarging the quorum size.

\begin{corollary} \label{impossibility:synchrony3}
Consider any $q$-commitable protocol $\Pi$ that is $(n, f)$--resilient for $f = \lfloor \tfrac{h}{2} \rfloor$, where $h = n-f$ is the number of correct validators. For any $ f^* < f$ and $k = f - f^*+1$, the protocol $\Pi$ is not $(n, k, f^*)$--resilient in our model under partial synchrony, without enlarging the quorum size.
\end{corollary}

Thus, $2f+1$--commitable quorum-based SMR protocols with a quorum of size $q=2f+1$ are not resilient for any 
$f^* + k \geq n/3$ under the assumptions of Theorem~\ref{impossibility:partialSync2}. 

\chris{possibly have some proofs in the appendices, because this is not obvious}
\noindent\emph{Closing the gaps.}  Corollary~\ref{impossibility:synchrony3} establishes that, in partial synchrony, reducing the actual number of Byzantine validators does not improve resilience unless we increase the quorum size. Following the flexible BFT approach~\cite{malkhi2019flexible}, in this setting, we can enlarge the quorum size used by clients from the classical $q = n-f$, to a higher quorum $q' = n-f^*$, where $f^*$ is the actual number of Byzantine validators present. Then, the constraint for rational validators is $k < n - 3f^* $; this choice ensures that the number of honest parties $ h \geq n - 2f^*$, ensuring that $q' = n +f^* > f^* + k + \lfloor h/2 \rfloor $, so that any two quorums intersect in at least one honest validator and equivocation becomes impossible. Liveness is guaranteed by the incentives of rational validators, who are motivated to follow the protocol in order to obtain participation rewards.



\vspace{5pt}
\noindent\textbf{Synchrony.} To circumvent the impossibility result of Theorem~\ref{impossibility:synchrony}, we assume synchrony with a known time delay $\Delta^*$. However, we now show that even in synchronous settings, designing resilient SMR protocols remains impossible for $f + k \geq h$. Here, a client cannot distinguish whether correct validators participated in a quorum certificate. Consequently, Byzantine and rational validators can generate certificates that the clients will accept (to ensure client-liveness) without any participation from correct validators.

\begin{restatable}{theorem}{Synchrony} \label{impossibility:correctvalidators}
Assuming silent clients, a static setting, and a synchronous network, no $q$-commitable SMR protocol can be $(n, k, f)$-resilient, when $f + k \geq h$, where $h= n-k-f$.
\end{restatable}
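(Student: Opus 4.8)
The plan is to show that the hypothesis $f \geq h$ lets the Byzantine and rational validators manufacture a valid quorum certificate entirely on their own, and then to weaponize the silence of clients against client-safety. First I would pin down the quorum size: for the protocol to be resilient it must remain live while the $f$ Byzantine validators withhold their votes, so the quorum must be assemblable by the remaining $n-f$ validators. Hence $q \leq n - f$; otherwise, with Byzantine validators silent, fewer than $q$ votes are ever available, no $QC$ can form, and by $q$-commitability no block is committed, violating (client-)liveness. Combining $q \leq n-f$ with the identity $n-f = h+k$ and the hypothesis $f \geq h$ gives $f + k \geq h + k = n - f \geq q$. Thus the $f+k$ Byzantine-and-rational validators by themselves hold enough signing power to produce a syntactically valid $QC_m(b)$ on any message $m$ and any block $b$, with no participation from correct validators.

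Next I would build the client-safety violation as an indistinguishability argument. Fix a target client $C$ and a block $b'$ carrying a transaction that no correct validator holds in its ledger (for instance the adversary double-spends its own coins, withholding the conflicting copy from the correct validators). The $f+k$ colluding validators forge $QC(b')$ and, using their point-to-point channels, deliver it only to $C$ and never to the correct validators. Because $C$ is silent, its acceptance decision depends solely on the certificates it receives; and because client-liveness forces $C$ to accept the confirmation of a genuinely finalized block upon seeing a valid $q$-vote certificate — a certificate that, in an honest reference execution, could have been signed by exactly this $q$-subset of validators — $C$ cannot tell the forged $QC(b')$ from a legitimate one and must accept it. Yet $b'$ appears in no correct validator's ledger, so client-safety (Definition~\ref{def:cert}) is violated.

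Finally I would close the incentive loop so that the failure reflects genuine non-resilience rather than a merely hypothetical attack. Since the forged certificate is transmitted only to $C$, and clients neither monitor nor relay the SMR execution, no correct validator ever witnesses $QC(b')$; the equivocation is therefore undetectable and no slashing can be levied on its signers. The double-spend consequently yields the colluding rational validators a strictly positive, penalty-free payoff (their share of the value extracted from $C$), so under our rationality assumptions they will indeed sign, and no $q$-commitable protocol can be $(n,k,f)$-resilient when $f \geq h$.

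The step I expect to be the main obstacle is making the indistinguishability argument airtight: I must establish that client-liveness commits the protocol to an acceptance rule that fires on any valid $q$-vote certificate independently of the identities of the signers, and exhibit an honest reference execution in which precisely the Byzantine-plus-rational set forms the accepted quorum — this is what renders the two executions indistinguishable to a silent $C$. A secondary subtlety is ruling out slashing, which requires invoking the silent-client assumption (no client-to-validator relaying) to guarantee the correct validators never observe the conflicting certificate; this is exactly the ingredient that raises the synchronous threshold to $f \geq h$, in contrast to the stricter partially-synchronous threshold of Theorem~\ref{impossibility:synchrony}.
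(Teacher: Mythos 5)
Your proposal is correct and its core construction is the same as the paper's: from liveness you derive $q \leq n-f$, combine with $f \geq h$ to get $f+k \geq h+k = n-f \geq q$, have the coalition forge certificates that are never shown to any correct validator, and force the silent client to accept via an indistinguishability argument (the paper phrases this as a case analysis over client rules --- time-bounded waiting versus vote-counting --- while you phrase it via a reference execution in which the $f+k$ colluders are the correct validators and the $h$ real-correct validators are relabeled Byzantine, legal precisely because $h \leq f$; these are equivalent, and your reference-world formulation is arguably the cleaner statement of why no rule can work). Where you genuinely diverge is in closing the incentive loop. You use a single client and argue the attack is \emph{penalty-free}: since the forged $QC(b')$ reaches only the silent client, no correct validator ever sees the equivocation, so no slashing can be levied and any strictly positive gain suffices. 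The paper instead makes the attack profitable \emph{even if slashing occurs}: it forges one conflicting block per client, exploits the fact that the number of clients $C_t$ is arbitrary and unknown when the stake $D$ is fixed, and arranges the run so each rational validator double-spends $(C_t-1)\,c > D$, exceeding the maximal possible penalty. The paper's route buys robustness --- it never has to reason about what penalties the protocol might impose --- while your route buys economy (one client, one block) but leans on the claim that ``undetectable implies unslashable.'' That claim is true in this model, but it needs one more sentence than you give it: because correct validators' views in the attack execution are identical to those in a fully honest execution, any penalty the protocol imposed on the colluders would equally be imposed on protocol-following validators in honest runs, which is untenable for a resilient protocol. With that observation made explicit, your proof stands as a valid, slightly more elementary alternative to the paper's.
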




By setting \( k = 0 \) in Theorem~\ref{impossibility:correctvalidators}, we conclude that no $q$-commitable protocol can be \((n,f)\)--resilient when \( f = h \). Furthermore, Theorem~\ref{impossibility:correctvalidators} establishes that even if a $q$-commitable protocol remains \((n,f)\)--resilient for \( f = h - 1 \), we cannot replace a single correct validator to a rational one, implying that resilience breaks down when \( f + k \geq 2n/3 \). This result is formalized in Corollary~\ref{impossibility:correctvalidators2}.

\begin{corollary} \label{impossibility:correctvalidators2} 
Consider any $q$-commitable protocol $\Pi$ that is $(n, f)$--resilient for $f = h-1$ where $h=n-f$ is the number of correct validators. For any $k\geq 1$, $\Pi$ is not $(n, k, f)$--resilient in our model under synchrony. 
\end{corollary}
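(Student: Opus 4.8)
The plan is to reduce the corollary directly to Theorem~\ref{impossibility:correctvalidators} by checking that the parameter regime forced by the hypothesis already satisfies the theorem's condition $f \geq h$. First I would unpack the assumption. Saying that $\Pi$ is $(n,f)$--resilient for $f = h-1$, with $h = n - f$ the number of correct validators in the Byzantine-only model, means $f = (n-f) - 1$, hence $n = 2f+1$. So the validator set splits into exactly $f$ Byzantine and $f+1$ correct participants.

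Next I would pass to the $(n,k,f)$--resilient model with $k \geq 1$ rational validators. The relevant correct count is now $h' = n - f - k$, obtained by relabeling $k$ of the original $f+1$ correct validators as rational; substituting $n = 2f+1$ gives $h' = (2f+1) - f - k = f + 1 - k$. Since $k \geq 1$, this yields $h' \leq f$, i.e. $f \geq h'$ — and in fact $f \geq h'$ is \emph{equivalent} to $k \geq 1$ in this regime, so a single rational validator already triggers the bound. This also recovers the stated threshold, since $f + k \geq f + 1 > (2f+1)/2 = n/2$. With the remaining hypotheses (silent online clients, a static setting, a synchronous network) inherited unchanged, every precondition of Theorem~\ref{impossibility:correctvalidators} with correct count $h' = n - k - f$ is met, so the theorem applies and no $q$-commitable SMR protocol, in particular $\Pi$, can be $(n,k,f)$--resilient. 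The mechanism is exactly the one behind the theorem: with at most $f$ correct validators, a client cannot tell whether correct validators participated in a quorum certificate, so the $f+k$ Byzantine and rational validators can fabricate certificates that clients must accept to preserve client-liveness, breaking client-safety.

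This corollary carries no genuine obstacle once Theorem~\ref{impossibility:correctvalidators} is in hand; it is essentially a parameter instantiation. The only point requiring care — and the one I would flag as the main source of confusion rather than difficulty — is keeping the two ``correct counts'' distinct: the Byzantine-only quantity $h = n - f$ used to phrase the hypothesis versus the hybrid-model quantity $h' = n - k - f$ that enters the theorem. I would verify the equality edge case $k = 1$, where $h' = f$ and the condition $f \geq h'$ holds tightly, to confirm that converting a single correct validator into a rational one suffices to destroy resilience.
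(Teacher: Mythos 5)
Your proposal is correct and takes essentially the same route as the paper: the paper presents this corollary as a direct instantiation of Theorem~\ref{impossibility:correctvalidators}, noting that replacing even one correct validator with a rational one pushes the parameters into the impossible regime. Your explicit bookkeeping ($n = 2f+1$, hybrid correct count $h' = f+1-k \leq f$ whenever $k \geq 1$, hence $f \geq h'$) is exactly the substitution the paper leaves implicit, and your care in distinguishing $h = n-f$ from $h' = n-k-f$ is the right way to make it rigorous.
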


In other words, no quorum-based SMR protocol can achieve \((n, f+1, f)\)--resilience, even in the synchronous model. Therefore, quorum-based SMR protocols are not resilient for $f+k \geq 2n/3$. 

\noindent\emph{Beyond $n/3$ Byzantine.}
Synchronous SMR protocols tolerate up to $f^* \in [n/3,n/2)$ Byzantine validators~\cite{abraham2020sync}. Since their safety relies on synchrony, while we focus on preserving $(n,f)$-resilience under partial synchrony and use the synchrony bound only to strengthen fault tolerance when the number of failures exceeds $f$, we omit them from Figure~\ref{fig:designspace}(also see Related Work).
\za{TODO: Shorten a bit!}

\vspace{-0.15cm}
\subsubsection*{\textbf{Finality latency}} 

Theorem~\ref{impossibility:responsiveness} establishes that without bounds on the liquid coins or the participation rewards, no protocol can ensure finality during a $\Delta^*$-bounded window when $f \geq \lceil \frac{h}{2} \rceil$  or \( q \leq f + k +  \lceil \frac{h}{2} \rceil \). That is because Byzantine and rational validators can still produce valid quorum certificates for conflicting blocks within $\Delta^*$, with each rational validator double-spending an amount exceeding their stake. This attack remains profitable for rational validators, even if they get slashed and forfeit their stake.

\begin{restatable}{theorem}{responsiveness} 
\label{impossibility:responsiveness}
Assuming participation rewards are distributed to validators, silent clients, a static PoS setting, and a synchronous network. Without any bounds i) on the liquid funds held by rational validators or malicious clients, or ii) on the participation rewards, the following holds:
No SMR protocol can be both (i) $(n, f)$--resilient $q$-commitable with finality latency $t$, and (ii) $(n, k, f)$--resilient with finality latency $t'<t+\Delta^*$, when $f \geq \lceil \frac{h}{2} \rceil$ or \( q \leq f + k +  \lfloor \frac{h}{2} \rfloor \), for $h= n-k-f$, for any execution.
\end{restatable}

According to Theorem~\ref{impossibility:responsiveness}, if we run an  $(n,f)$--resilient quorum-based SMR protocol \( \Pi \) with $ f \geq \lceil \frac{h}{2} \rceil $ validators or \( q \leq f + k +  \lfloor \frac{h}{2} \rfloor \), it is impossible to finalize every transaction before  $\Delta^*$ elapses. That means that we must limit the transaction volume finalized per any $\Delta^*$ time window.



\com{
\subsection{Possibility Result} 
\chris{I have this, in case we want this format. It could possibly go to the intro instead of here}

Now, we introduce our solution: a $q$-commitable protocol that solves SMR as long as Byzantine and rational validators comprise less than $\frac{2}{3}$ of the total validator set. Our protocol assumes synchrony and finalizes transactions only up to a bounded transaction volume before the synchrony limit $\Delta$ elapses. However, under optimistic conditions—where validator participation is high—it finalizes any transaction with a valid quorum certificate.  

A transaction is considered \textit{finalized} if it is committed to the transaction ledger of at least one correct validator. Moreover, we say that a validator \textit{participates} in a ledger within a period $\Delta$, if it has voted for at least one quorum certificate that finalizes transactions during that time.

\chris{rewrite}
\begin{theorem}\label{possibility:protocol} 
(Informal.) There is a $q$-commitable protocol $\Pi$ that solves SMR in the synchronous model when executed by $f$ adversarial, $k$ rational, and $n-f-k$ correct validators, and an arbitrary number of silent clients,  for any $f<n/3$ and $f+k <2n/3$. Each validators stakes $D$ coins to the system and the synchrony time bound is $\Delta$. The following arguments hold: 
\begin{itemize}
    \item (Normal operation). $\Pi$ finalizes transaction volume up to $D$ within the last $\Delta$, 
    \item (Optimistic case). $\Pi$ finalizes every transaction with a valid quorum certificate within the last $\Delta$, when at least $ 5/6$ validators participate in the ledger.
\end{itemize}  

\end{theorem}
}

\com{
Proof: Assume the set of $q$-commitable blockchain protocols that are $(n,k,f)$-resilient with $p \leq f +k$ is non-empty (otherwise the Lemma holds trivially) and let us fix such a $(n,k,f)$-resilient protocol $\Pi$. Assume that each node has deposited $D \in \mathbf{N}$ stake. We will show that $\Pi$ does not implement a secure payment system.

Consider a run of $\Pi$ with input a set of transactions  $Tx = Tx_1 \cup Tx_2$ constructed as follows. There is a partition of clients in two distinct sets $C_1, C_2$ such that a set $\mathcal{R}$ of at least $max(p-f, 0)$ rational validators spend   $D+\epsilon_1 $ (and $D+\epsilon_2 $ ) coins in $Tx_1$ (and $Tx_2$) to a subset of clients in $C_1$ (and $C_2$) with $\epsilon_1, \epsilon_2 > 0$. 

\chris{should for simplicity say two different clients $c_1, c_2$ instead of two subsets?}

Consider that at some time $t$ all correct validators view the same transaction ledger $T$ and a potentially malicious node $L_t$ is proposing the next block (which happens with probability $1$). Node $L_t$ will now send two conflicting blocks $b_1, b_2$ extending $T$, where $b_1$ consists of the transactions in $Tx_1$ and $b_2$ consists the transactions in $Tx_2$, only to the $f$ byzantine validators and the validators in $\mathcal{R}$.


validators in $\mathcal{R}$ can sign both $b_1$ and $b_2$ and send the respective inclusion proofs to clients in $C_1$ and $C_2$ respectively. The protocol $\Pi$ eventually outputs the ledger $T_{final}$.  If validators in $\mathcal{R}$ have performed the following attack, since $b_1$ and $b_2$ are conflicting blocks, at most one of those extends $T$ in $T_{final}$. In that situation validators $\in \mathcal{R}$ will have acquired a utility of value at least $D + min(\epsilon_1, \epsilon_2)$. Even if they eventually get slashed and lose their stake $D$, the attack has an extra utility $min(\epsilon_1, \epsilon_2)$ than following the protocol. Therefore, validators in $\mathcal{R}$ will perform this attack.  W.l.o.g. assume $b_1$ extends $T$ in $T_{final}$. Moreover, assume that no correct will never node have witnessed block $b_2$

Then, for any client $c \in C_2$ that has exchanged a set of assets of total value $\mathcal{A}_c$ the utility is $u_c(\mathcal{A}_c, T) < 0$ since the exchanges captured in $Tx_2$ are not recorded in the ledger.

\begin{itemize}
    \item We consider the case where participation rewards are coming only from the system. More specifically, there is a rule  
    
    Assume that
    each node receives a reward $r_0$ in every block it participates, where participation is tracked by proposing (or only voting for a block).  
    
    We denote the percentage of blocks node $i$ participates by $q_i$ and $q = min_\{ i 
    \in rational ... \} q_i$. Then, $q > D + \epsilon$ 

    Here you can have an upper bound, if the participation rewards have an upper bound (we explain we don't do that because it creates inflation?)
    \item   Now we consider the case where participation rewards are only a part of the transaction outputs. We denote by $r^p_t$ the rewards that node $p$ has received after the whole initial total circulation $G$ of the system has been spent $t$ times. For any deposit $D$ and every node $p \in \mathcal{R}$, there is a set of transactions $Tx$, such that if $Tx = Tx_1$ (or $Tx = Tx_2$) there is a $k$ such that $\sum_{i}^{k} r^p_t > D$ \chris{how do I prove that...}.
    \item Now, if the rewards are given to validators both from transaction fees and the system, the analysis of the previous cases still applies (if we ignore one of the sources). 
\end{itemize}
therefore rational validators have an incentive ...
}

\com{
To this end, consider a run of $\Pi$ with input a set of transactions  $Tx = \cup_{i=1}^{k} Tx_i$ constructed as follows. There is a partition of clients in $k$ distinct sets $C_i, i \in \{ 1, ..., k\}$ such that a set $\mathcal{R}$ of at least $max(p-f, 0)$ rational validators spend  $C>0$ coins in $Tx_i$ to a subset of clients in $C_i$. 

Consider that at some time $t$ all correct validators view the same transaction ledger $T$ and a potentially malicious node $L_t$ is proposing the next block (which happens with probability $1$). Node $L_t$ will now send $k$ conflicting blocks $b_i, i \in \{1, ..., k\}$ extending $T$ only to the $f$ byzantine validators and the validators in $\mathcal{R}$, where $\forall i \in \{1,...,k\}, \;b_i$ consists only of the transactions in $Tx_i$.  

validators in $\mathcal{R}$ can sign all block $b_i, \forall i \in \{1,...,k\}$ and present the respective inclusion proof to clients in $C_i$. Since

In that situation validators $\in \mathcal{R}$ will acquire an extra utility of at least $ (k-1)C$ than following the protocol, since if they followed the protocol at most one block would be included in $T$. We can find a $k$ such that $(k-1)C >D$, since it can be $k \to \infty$. Even if validators in $\mathcal{R}$ are slashed and lose their stake $D$, the attack yields greater utility than following the protocol. Therefore, validators in $\mathcal{R}$ will collude with adversarial validators to perform this attack. That means that $\Pi$ does not satisfy safety and, therefore, is not a $(n,k,f)$-resilient blockchain protocol.  
}
\section{\sys: Finalization Rule }
\label{sec:protocol}

In this section, we present the \sys finalization protocol, a transformation that converts any $(n,f)$-resilient $(2f+1)$-committable SMR protocol~$\Pi$ into an $(n,k,f^*)$-resilient proof-of-stake SMR protocol for $n = 3f+1$.  

Throughout this section, we assume a synchronous network with message delay bounded by~$\Delta^*$, as required by Theorem~\ref{impossibility:synchrony}. We consider executions in which the number of Byzantine and rational validators satisfy $f^* \leq f$ and $k\leq 2f - f^*$, respectively. Our protocol achieves the optimal resilience bounds established in Theorem~\ref{impossibility:correctvalidators}.
Recovery from executions exceeding this Byzantine threshold is deferred to Section~\ref{sec:recovery}.

\vspace{3pt}
\noindent\textbf{Overview.}
As established in Section~3, in the presence of rational validators and silent clients, quorum-based SMR protocols are vulnerable to two major attack vectors:
(i) \emph{safety attacks}, where rational validators collude with Byzantine ones to produce conflicting blocks with valid quorum certificates within the window~$\Delta^*$, enabling double-spending attacks; and
(ii) \emph{censorship attacks}, where the adversary uses accumulated participation rewards to bribe rational validators into withholding votes for blocks containing targeted transactions, preventing their finalization.

To address these challenges, our protocol introduces the following components:

\begin{itemize}[nosep,leftmargin=*]
    \item \textbf{A finality gadget} that ensures \emph{safety} by enforcing {stake-bounded finalization}: within any $\Delta^*$-bounded window, the maximum transaction volume that can be finalized is limited by the total stake at risk; making equivocations unprofitable once \textbf{slashing penalties} are applied.
    This ensures that rational validators have no incentive to fork. 
    \item \textbf{A reward scheme} that incentivizes rational validators and mitigates censorship (\emph{liveness}): fees are distributed only in execution windows where at least $n-f$ distinct validators contribute to block production.
    \item A \textbf{strong chain mechanism} that enables higher-volume finalization during periods of high participation, improving \emph{efficiency}.
\end{itemize}

\subsection{Finality Gadget}
The \emph{finality gadget}  ensures that validators finalize only transactions that cannot be reversed and enables them to provide inclusion proofs to clients. Below, we describe the validator and client protocols for accepting transactions. Since clients do not monitor the protocol execution, their confirmation rule differs from that of validators. As a result, validators may finalize transactions based on protocol guarantees, whereas clients rely on proofs to verify finality.


\vspace{3pt}
\noindent\textbf{Validator confirmation.}
We assume a  $q$-committable SMR protocol $\Pi$ that is broadcast-based\footnote{Otherwise, validators in our construction can broadcast the valid certificates to propagate the required information.} 
to ensure that safety violations are detected by every correct validator within \(\Delta^*\). We treat $\Pi$ as a black-box abstraction:
%
whenever a validator obtains a valid quorum certificate $QC(b)$ for a block~$b$ at height~$h$, the protocol outputs $\mathsf{deliver}(QC(b), b, h)$ to that validator (line~\ref{alg:line:consensusoutput}). Then, the validator 
%
stores the certificate (line~\ref{alg:line:quorum}) along with the block and its local timestamp (line~\ref{alg:line:ledger}), and initiates the block finalization function. 

Since conflicting blocks may exist but remain unwitnessed within the last $\Delta^*$, validators employ the \emph{stake-bounded finalization rule}. More specifically, each validator $p$ traverses its local ledger and using the timestamp of the blocks and its local clock, determines which blocks are \( \Delta^* \)--old  (output $\Delta^*$ ago or more), or \( \Delta^* \)--recent (output within the last $\Delta^*$)\footnote{To measure elapsed time, the validator relies on its local clock rather than requiring synchronized clocks (lines \ref{alg:line:ledger}, \ref{alg:line:iterateledger}).}.
Then, $p$ finalizes blocks as follows.

\noindent\emph{\( \Delta^* \)--old Blocks.}  If no conflicting block has been detected, validators finalize any block \( b' \) that was output by \( \Pi \) at least \( \Delta^* \) ago (lines \ref{alg:line:deltafinal} and \ref{alg:line:finalizeD}). 

\noindent\emph{\( \Delta^* \)--recent Blocks.}  
For blocks output within the last \( \Delta^* \), each validator finalizes only a subset of the most recent blocks, ensuring that the total finalized transaction volume does not exceed \( C = D \) coins (lines \ref{alg:line:collateralfinal} and \ref{alg:line:finalizeC}). This guarantees that any rational validator engaging in an attack can gain at most \( D \) coins. 
 For intuition, we provide a tight example in Figure~\ref{fig:forkingAttack} for $f^*=f$ Byzantine and $k=f$ rational validators. 
\begin{figure}[h!]
    \centering
    \includegraphics[width=0.8\columnwidth]{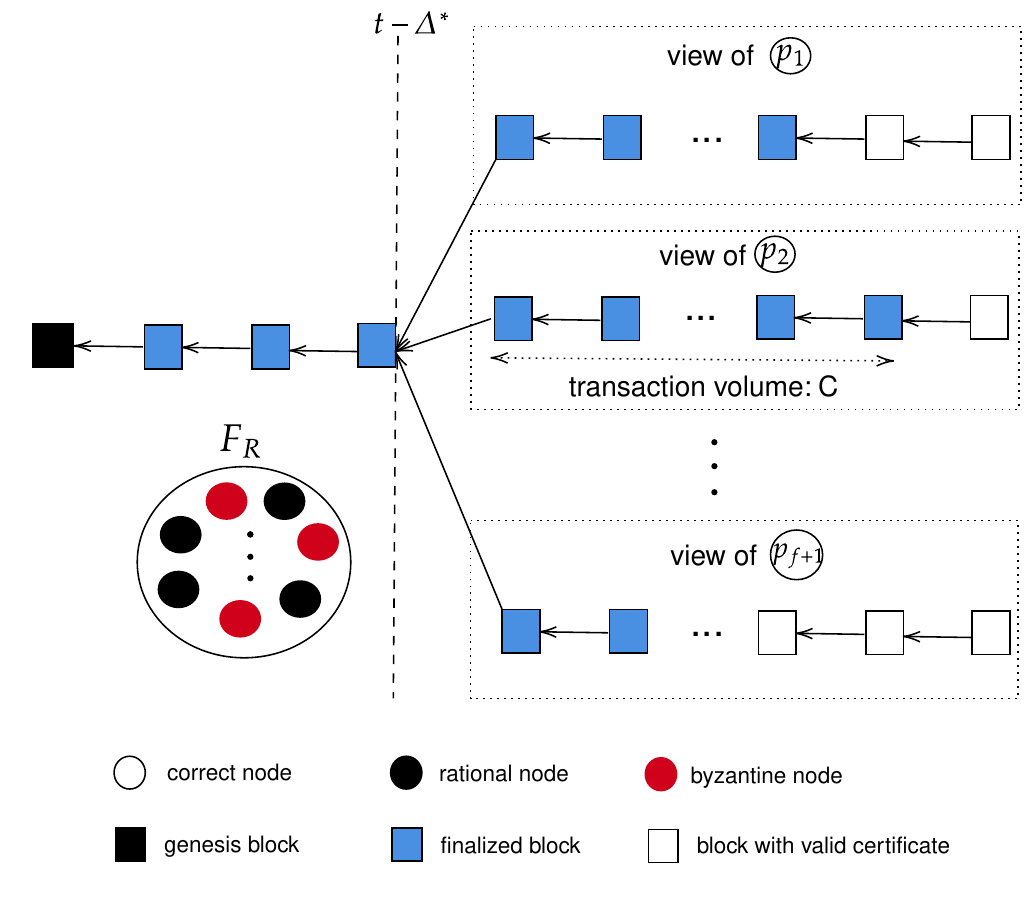}
    \caption{Forking attack.}
    \label{fig:forkingAttack}
    
    \raggedright
    \footnotesize A coalition of $2f$ misbehaving validators (Byzantine and rational) fork the system. The adversary partitions all correct validators into $f+1$ distinct sets and collects their signatures to create conflicting blocks with valid certificates before $\Delta^*$ elapses. Each correct validator finalizes blocks with transaction volume up to $C = D$ within  $\Delta^*$. 
    Assuming Byzantine validators claim no rewards and instead use them to bribe the \( f \) rational validators they need to successfully double-spend, rational validators share a total of \( f \times D \) double-spent coins, resulting in a per-validator gain of \( D \). 
    
\end{figure}

\vspace{3pt}
\noindent\textbf{Finality votes.}
To enable inclusion proofs and ensure accountability, validators include \emph{finality votes} for a block \(b\) in subsequent blocks once \(b\) is finalized. We say that \(b\) has a \emph{finality certificate} \(\mathcal{F}(b)\), if at least \(2f+1\) finality votes for \(b\) (or for any block extending \(b\)) are recorded on-chain, with each vote included in a block carrying a valid $QC$.

\vspace{3pt}
\noindent\textbf{Conflicting certificates.} Correct validators never commit finality votes for conflicting blocks. If a validator observes conflicting blocks with valid certificates or finality certificates (lines \ref{alg:line:violation1}, \ref{alg:line:violation2}), it immediately stops finalizing new blocks (lines \ref{alg:line:deltafinal}, \ref{alg:line:collateralfinal}). By limiting the transaction volume so that the total potential gain for any rational validator does not exceed their stake, we ensure that, due to slashing, rational validators have no incentive to engage in equivocations.

\vspace{3pt}
\noindent\textbf{Client confirmation.} Clients, who do not participate directly in the consensus protocol, cannot observe when validators' internal confirmation rules are met. The finality certificate thus serves as a verifiable proof of finality, allowing clients to confirm that a block has been finalized according to the validators' rules. Specifically, to confirm the inclusion of a transaction \( tx \), validators provide clients with a proof consisting of an inclusion proof (e.g., a Merkle proof) for \( tx \) in block \( b \), and a finality certificate \( \mathcal{F}(b) \) for \( b \), including the finality votes and the valid quorum certificates of the blocks containing these votes.



In this way, clients can securely verify transaction finalization, ensuring that the SMR protocol is \emph{certifiable}.

\subsection{Rewards and Penalties}\label{sec:slash}
\noindent\textbf{Withdrawal.} 
To ensure misbehaving validators are penalized, we implement a two-phase withdrawal process~\cite{ethereumwithdrawal}:
\begin{itemize}[nosep,leftmargin=*]
    \item Validators must stake \( D \) coins to participate in \( \Pi \).
    \item Withdrawals are only allowed after a period \( \Delta_W > 2\Delta^* \), during which validators cannot participate in the protocol. 
    \item If a validator is detected signing finality votes for conflicting blocks
    , it is blacklisted (line \ref{alg:line:blacklist}), preventing its withdrawal permanently. 
\end{itemize}

Note that the withdrawal period is set to be larger than $2\Delta^*$ (instead of $\Delta^*$). Although equivocations are observed within $\Delta^*$, which is sufficient to limit the transaction volume, only after $2\Delta^*$ can we be certain that no conflicting block exists. E.g., consider a block finalized at time $t$ by $2f$ rational and Byzantine validators along with one honest validator $p_i$
. Just before time $t + \Delta^*$, the $2f$ malicious validators may finalize a conflicting block with the help of another honest validator $p_j$. In the worst case, $p_i$ will become aware of the conflict at time $t + 2\Delta^*$. Setting the withdrawal period larger than $2\Delta^*$ is sufficient to prevent malicious validators from withdrawing their funds prematurely.

\vspace{3pt}
\noindent\textbf{Rewards.}
Participation rewards are sourced only from transaction fees. Our reward scheme distributes them to validators only in windows where enough distinct validators participate, and scales each payout by the total participation gathered, so that rational validators are incentivized to propose, vote, and include the votes of others rather than censor them. 

In particular, rewards are distributed every $w$ rounds, where $w$ is chosen such that every validator is scheduled as leader at least once within each window. We define the interval of rounds $[kw+1,(k+1)w]$ for $k\in\mathbb{Z}$ as the leader window $W_k$. Let $L_{W_k}$ denote the number of unique validators that successfully propose blocks with valid certificates during $W_k$. To ensure that at least one correct validator successfully proposes a block, rewards are distributed only if $L_{W_k}\geq n-f$, i.e., at least $n-f$ unique validators successfully submit proposals during the leader window.

We define $P_b$ as the set of validators participating in the valid SMR certificate of some block $b$, $P_{W_k}$ as the set of unique validators participating in at least one finalized block during $W_k$, and $\mathcal{B}_k$ as the set of all blocks finalized in $W_k$. The aggregated participation factor of window $W_k$ is $\alpha_k=\min\!\left\{\sum_{b\in\mathcal{B}_k}|P_b|+\epsilon,\; n\right\}$, where $\epsilon>0$ is a protocol parameter. Let $F_{W_k}$ denote the total transaction fees collected from finalized transactions during $W_k$. The participation reward pool for window $W_k$ is $R_{W_k}=F_{W_k}\cdot\frac{\alpha_k}{n}$, and each validator in $P_{W_k}$ receives an equal share $\frac{R_{W_k}}{|P_{W_k}|}$.

This mechanism guarantees censorship resilience, as Byzantine validators can only earn rewards in windows that include at least one honest proposer. Even if a Byzantine validator participates normally to accumulate rewards and bribe rational validators later, it eventually depletes its coins. With no coins left to bribe, rational validators prefer to follow the protocol and collect the participation rewards. Furthermore, the participation factor $\alpha_k$ incentivizes rational validators to include as many valid votes and certificates as possible. Looking ahead, validators are incentivized to build strong chains (Section~\ref{sec:strongRules}) and to avoid censoring votes from other rational validators, since excluding their participation lowers $\alpha_k$ and thus the total rewards distributed within the leader window.

We discuss in Section~\ref{sec:discussion} how this rewards scheme generalizes to q-commitable is that either the leader is elected uniformly at random~\cite{buchman2018latest} or they are multi-proposer~\cite{danezis2022narwahl}.



\subsection{Extending Finalization with Strong Chains}\label{sec:strongRules}
Recall that validators finalize transaction volumes up to \( D \) coins within a time window \( \Delta^* \). 
This is the lower bound in our threat model. When the system is not under attack (i.e., fewer faulty nodes), blocks often receive more than \( 2f+1 \) signatures. We leverage the certificates with additional signatures to further limit forks and finalize larger transaction volumes without waiting for \( \Delta^* \).  

To achieve this, we extend the concept of \emph{\( i \)-strong chains}~\cite{xiang2021strengthened} in our model. A ledger is \textbf{\( i \)-strong} if it contains valid certificates signed by \( 2f+1+i \) unique validators within \( \Delta^* \). With at least \( 1+i \) correct validators, the maximum number of forks decreases to \( f-i \), allowing secure finalization of higher transaction volumes.
In particular, when a certificate has \( 2.5f+1 \) signatures, we can define a \emph{strongest chain} that can finalize \emph{infinite} transaction volume (regardless of $D$).





\begin{definition}[$i$-strong chain]
    Consider a validator $p$ and its local data structure $\text{ledger}^p$ (Algorithm~\ref{algo:fingadget}, line~\ref{ledgerStructure}). Let $S$ represent the set of distinct signatures from validators included in the valid certificates of blocks within the set of blocks in $\text{ledger}^p$ that were output by the SMR protocol $\Pi$ within the last $\Delta^*$. We say that $\text{ledger}^p$ is an $i$-strong chain, where $i = |S| - 2f - 1$.
\end{definition}


\vspace{3pt}
\noindent\textbf{Strong chain finalization rule.} Validators that monitor an $i$-strong chain for $i > f/4$, can finalize transaction volume up to $C = \frac{f}{f-i}D$. Otherwise, validators finalize $C = D$ coins. Notice that for $i=f$, $C \to \infty$ because all validators participate in that ledger, allowing the finalization of any block with a valid certificate. 

\[ C = 
\begin{cases} 
\frac{f}{f-i}D, & \text{if } i > \frac{f}{4}, \\
D, & \text{else.}
\end{cases}
\]

\vspace{3pt}
\noindent\textbf{Strongest chain finalization rule.}
We additionally define the strongest chain as an i-strong chain with \( i > \frac{f+1}{2} \). More than half of the correct validators participate in the strongest chain. This ensures that only one strongest chain can exist within \( \Delta^* \), enabling correct validators to finalize every transaction, under this rule,
 
\[
C = 
\begin{cases} 
\infty, & \text{if } i > \frac{f+1}{2}, \\
\frac{f}{f-i}D, & \text{if } \frac{f}{4} < i \leq \frac{f+1}{2}, \\
D, & \text{otherwise.}
\end{cases}
\]





\section{\sys: Recovery Mechanism}\label{sec:recovery}
\chris{updated (but needs a lot of  improvement on writing): i'm  also quite verbose now, for clarity}
In Section~\ref{sec:protocol}, we describe \sys's finalization rule which implements an $(n,k,f^*)$--resilient transformation for executions with $f^* \leq f$ Byzantine validators and $k \leq 2f - f^*$ rational validators. In this regime, although equivocations are possible with the participation of rational parties, they are not profitable for rational validators; thus, rational parties will not engage in forking attacks.

We now extend our protocol to executions with $f+1 \leq f^* < 2n/3$ Byzantine validators. In this setting, equivocations cannot be prevented, since $f+1$ Byzantine validators alone suffice to create blocks with conflicting certificates within $\Delta^*$. Accordingly, we no longer aim to preclude safety violations; instead, we provide a recoverable economic-security regime that guarantees economic restitution with recovery parameter $\Delta_R = O(f^*\Delta^*)$ (Definition~\ref{def:economicDef}).

\noindent At a high level, our recovery protocol works as follows:
\begin{itemize}[nosep,leftmargin=*]
    \item \textbf{Detecting conflicts:} Upon observing conflicting certificates, correct validators halt further finalization to prevent additional financial damage to clients. \sys's finalization rule guarantees that: i) the resulting economic damage is bounded by the slashable stake of the equivocating validators, ii) every block that may have been accepted by a client is observed by at least one honest validator, ensuring that all affected transactions can be incorporated into the recovery process. \chris{this is not really an execution phase of a protocol, we just that some properties hold} 
    \item \textbf{Set agreement:} Validators deterministically agree on a common set of blocks to execute, including all blocks that may have been accepted by clients during the equivocation (even the conflicting ones finalized within the last $2\Delta^*$). 
    \item \textbf{Execution and reimbursement:} Validators revert to a stable pre-fork state and deterministically re-execute all conflicting finalized ledgers in a common order, reimbursing provable client losses from the stake of provably misbehaving validators, without minting new coins.
    \item \textbf{New genesis:} Validators issue a verifiable commitment to the recovered state, allowing the system to resume from a single consistent ledger.
\end{itemize}

\subsection{Detecting conflicts}
We first bound the maximum economic damage that clients may incur, namely, the maximum value of transactions contained in blocks that clients may accept before an equivocation is detected and finalization is halted. Together with the fact that every such block is observed by at least one honest validator, these properties ensure that all affected clients can be reimbursed.

\vspace{3pt}
\noindent\textbf{Bounding economic damage.} Clients accept only blocks with \emph{finality certificates}. According to \sys's finalization rule, the total amount of coins carrying finality votes during a fork is at most $fD$. Once the synchronous bound elapses and correct validators are aware of the equivocation they stop finalizing new blocks. After observing the equivocation, all honest validators identify at least $f+1$ Byzantine validators which have signed conflicting finality votes (by quorum intersection). The stake of these validators is sufficient to cover and reimburse all clients affected by the fork.

\vspace{3pt}
\noindent\textbf{Observing all conflicting certificates accepted by clients.} 
Recall that each vote contributing to a finality certificate $\mathcal{F}(b)$ for some block $b$ is included in a subsequent block $b'$ carrying a valid SMR certificate $QC(b')$ (or is included in a subset of blocks each carrying a valid SMR certificate), which in our construction is broadcast to all validators. 
Nevertheless, malicious validators may withhold their votes for $QC(b')$ from other validators but send this certificate directly to clients. As a result, clients may accept $b$ even though honest validators never observe the certificate $QC(b')$ forming the finality certificate $\mathcal{F}(b)$. 

To ensure that every block $b$ accepted by clients is eventually observed by honest validators, we rely on the fact that the finality votes for $b$ become visible to an honest validator before the corresponding SMR certificate $QC(b')$ is formed. Indeed, any valid $QC(b')$ sent to clients contains at least one honest signature, implying that some honest validator has already observed the proposal for $b'$ by the corresponding leader, including the finality votes for $b$.

Motivated by this observation, we define an \emph{augmented finality certificate}, denoted $\hat{\mathcal{F}}(b)$, as the existence of $2f+1$ valid finality votes for $b$, \emph{regardless of whether these votes have been embedded in blocks carrying valid SMR certificates}.
For example, if a finality vote for block $b$ is included in a block $b'$ by a proposer, but $b'$ does not have a valid SMR certificate, the vote for $b$ counts for the augmented finality certificate $\hat{\mathcal{F}}(b)$ but not for the finality certificate $\mathcal{F}(b)$. 
Every block $b$ accepted by a client therefore carries either a finality certificate $\mathcal{F}(b)$ or an augmented finality certificate $\hat{\mathcal{F}}(b)$ observed by at least one honest validator $p$ before $t^*_p$, the time at which $p$ witnesses an equivocation. This ensures that all client-accepted transactions enter the recovery process and are eligible for reimbursement.

\subsection{Set Agreement}

So far, we have established two properties. First, the economic damage from an equivocation is bounded and can be reimbursed using the stake of the equivocating validators. Second, every block accepted by a client is eventually observed by at least one honest validator through a finality or augmented finality certificate. The remaining challenge is to ensure all honest validators agree on a common recovery set to execute and form the new genesis block. To this end, validators run a recovery protocol similar to~\cite{lewis2025beyond}, with the additional requirement of including every transaction accepted by clients. The set agreement protocol outputs a common DAG ledger $\mathcal{L}_R$ containing only blocks with a finality or augmented finality certificate, and necessarily all (even conflicting) blocks that have been accepted by clients.

During an initial dissemination phase, validators exchange the finality and augmented finality certificates they have observed\footnote{For ease of presentation, we say validators exchange all such certificates; the actual protocol communicates a bounded subset (Appendix~\ref{app:recovery}).}. By the end of this phase, every honest validator has learned every certificate observed by any honest validator at the time of equivocation. Since every client-accepted block is observed by at least one honest validator, all client-acceptable blocks are present in every honest validator's $p$ local recovery DAG $\mathcal{L}^p_R$.

Honest validators may nonetheless hold different local DAGs: in the dissemination phase, a Byzantine validator may selectively disseminate certificates to only a subset of honest validators. To reconcile these, the protocol runs a view-based set agreement procedure. In each view $u$, every honest validator $p$ sends its local DAG $\mathcal{L}^p_R$ to the leader $l^u$, who proposes the union of the received sets; validator $p$ votes only for proposals extending its own $\mathcal{L}^p_R$. Since conflicting finality certificates implicate at least $f+1$ Byzantine validators, fraud proofs eventually leave a reduced active set with an honest majority. Majority agreement then guarantees the adopted $\mathcal{L}_R$ contains the local DAG of at least one honest validator, and therefore all blocks that may have been accepted by clients. These blocks are executed and incorporated into the recovered state. 
We defer the full protocol specification to Appendix~\ref{app:recovery}.

\subsection{Execution and Reimbursement} The set agreement protocol outputs the DAG $\mathcal{L}_R$ to every honest validator. We now describe how validators execute transactions, including those in the conflicting ledgers $\mathcal{L}$ of the DAG $\mathcal{L}_R$, to update the system state.

Recovering a common state requires that all validators execute the same transactions in the same order, even in the presence of conflicting finalized ledgers. 
To this end, validators revert to the state up to a block for which no conflicting finalized block exists. To support this efficiently, validators locally maintain two distinct states. 
\begin{itemize}[nosep,leftmargin=*]
    \item \emph{Recent-finalized state:} the state up to their most recent finalized block,
     \item \emph{Old-finalized state:} the state up to their most recent $2\Delta^*$--old finalized block (i.e., output $2\Delta^*$ ago).
 \end{itemize}

Upon detecting an equivocation, validators revert to their old-finalized state (line~\ref{alg:line:execute1a}), for which it is guaranteed that no fork exists. After the set agreement protocol outputs the DAG $\mathcal{L}_R$ (line~\ref{alg:line:setAgreement}),  they also execute transactions in conflicting blocks across $\mathcal{L}_R$.

To do so, validators first re-execute all transactions from their old-finalized state up to the first point of conflict, namely the first block in $\mathcal{L} = \{L_1, \ldots, L_j\}$ (lines \ref{alg:line:execute1a}-\ref{alg:line:execute1b}). For the conflicting blocks, they sort the conflicting ledgers in $\mathcal{L}$ lexicographically by the hash of their last finalized block, forming a new ordered set $\mathcal{L}^*$ (line~\ref{alg:sortL}). Finally, validators execute each ledger in $\mathcal{L}^*$ sequentially, processing all block within each ledger in order (lines \ref{alg:line:execute2a}-\ref{alg:line:execute2b}).

During execution, transactions fall into two categories.
First, transactions issued by honest clients who believed a branch to be canonical remain valid, since honest clients do not spend more coins that they hold. Second, transactions from malicious participants attempting to double-spend. For these, validators reimburse affected clients from the misbehaving validators' slashed stake and burn any remaining stake to punish the malicious validators (line~\ref{alg:line:Slash}).

\begin{algorithm}[t!]
\scriptsize
\setstretch{0.95}
\centering
\caption{\textsl{\sys protocol}}
\label{algo:fingadget}
\begin{algorithmic}[1]
\State{$ledger \gets \emptyset$} \label{ledgerStructure}\Comment{Blocks with valid certificates}
\State{$quorumstore \gets \emptyset$} \Comment{Valid certificates}
\State{$final\_votes \gets \emptyset, final\_store \gets \emptyset $} \Comment{Finality votes/certificates}
\State{$suffixindex \gets 0$} \Comment{$\Delta^*$-recent blocks}
\State{$oldFinalized, oldFinalizedBlock \gets G, \bot$} \Comment{State/block from $2\Delta^*$ ago}
\State{$ finalized \gets G$} \Comment{Recent finalized state}
\State{$violation \gets false$} \Comment{Safety violation flag}
\State{$blacklist \gets \emptyset$} \Comment{Blacklisted validators}

\Upon{\textsc{SMR.deliver}($QC(b), b, h$)} \label{alg:line:consensusoutput}
\If{$quorumstore[h] = \bot$} \label{alg:line:checkempty}
\State{$quorumstore[h] \gets QC(b)$} \label{alg:line:quorum}
\State{$ledger \gets ledger \cup (b, localtime)$} \label{alg:line:ledger}
\State{\textsc{finalize-blocks}()}
\Else
\State{$(block, \_\_ \;) \leftarrow ledger[h]$}
\State{\textbf{if } $block \neq b$ \textbf{then} $violation \gets true$}\label{alg:line:violation1}
\EndIf
\EndUpon

\Upon{\textsc{observing (``finality", $u, b$)} in a phase--$(s-1)$ certificate}
    \State{$final\_votes[b].add(u)$}
    \If{$|final\_votes[b]| \geq 2f+1 \land final\_store[b.height] = \bot$}
    \State{$finalized.apply(b) $ \Comment{Execute $b$ to update the state}}
    \State{$(\_\_, time) \gets ledger[b.height]$}
    \If{$localtime - time \geq 2\Delta^* \land  b.height > oldFinalizedBlock.height $ }
        \State{$ oldFinalized.apply(b)$, $oldFinalizedBlock \gets b$}
    \EndIf
    \ElsIf{$|final\_votes[b]| \geq 2f+1 \land final\_store[b.height] \neq \bot $ }
    \State{$ (b', \_\_) \leftarrow final\_store[b.height]$}
    \If{$b' \neq b $}
            \State{$violation \gets true$} \label{alg:line:violation2}
            \State{$blacklist \gets blacklist \cup (final\_store[b.height] \cap final\_votes[b])$}\label{alg:line:blacklist}
           \State{\textbf{if} it is the first time of equivocation run, \textsc{Recover}}
            \State{$final\_store[b.height].add(final\_votes[b])$}
        \EndIf
    \EndIf
\EndUpon

\Procedure{\textsc{finalize-blocks}}{} \label{alg:line:finalizeblocks}
\State{$suffixvalue = 0$}
\For{$(block, blocktime) \in ledger$ \textbf{from} suffixindex \textbf{to} $|ledger|$} \label{alg:line:iterateledger}
\If{$blocktime \leq localtime - \Delta^* \land \;  (not \;  violation) $} \label{alg:line:deltafinal}
\State{$block.finalize$} \label{alg:line:finalizeD}
\State{\textbf{broadcast} signed finality vote $v$: (\texttt{"finality"}, $v$, block)} \label{alg:line:finality1}
\State{$suffixindex++$}
\ElsIf{$suffixvalue + block.value \leq C \land \;  (not \;  violation) $} \label{alg:line:collateralfinal}
\State{$block.finalize$} \label{alg:line:finalizeC}
\State \textbf{broadcast} signed finality vote $v$: (\texttt{"finality"}, $v$, block) \label{alg:line:finality2}
\State{$suffixvalue \gets suffixvalue + block.value$}
\Else
\State{\textbf{return}}
\EndIf
\EndFor
\EndProcedure

\Procedure{\textsc{recover}}{}
    \State $ \mathcal{L}_R \leftarrow \textsc{set-agreement (see Algorithm~\ref{alg:setagreement})}$ \label{alg:line:setAgreement}
    \State{$\mathcal{L} \gets \{L_1, ..., L_j\}$} \Comment{conflicting finalized ledgers in $\mathcal{L}_R$}\label{alg:defL}
    \State $\mathcal{L}^{*} \gets$ $Sort(\mathcal{L})$ \Comment{sort ledgers by hash of last block} \label{alg:sortL}
    \State{$recovered\_state \gets oldFinalizedBlock$} \label{alg:line:execute1a}
\For{$pos$ \textbf{from} $oldFinalizedBlock.height+1$ \textbf{to} $|ledger|$}
   \State  \textbf{if} $ledger[pos] \in L$ for some $L \in \mathcal{L}$ \textbf{continue}
    \State $recovered\_state.apply(ledger[pos])$ \label{alg:line:execute1b}
\EndFor
\ForAll{ledger $L \in \mathcal{L}$}\label{alg:line:execute2a}
    \ForAll{block $b \in L$}
        \State $recovered\_state.apply(b$) \label{alg:line:execute2b}
    \EndFor
\EndFor
    \State{$recovered\_state \gets$ \textsc{slash}($ blacklist , \; recovered\_state$)} \label{alg:line:Slash}
    \State \textbf{broadcast} signed state commitment to $recovered\_state$ \label{alg:line:broadcastCmt}
\EndProcedure
\end{algorithmic}
\end{algorithm}

\subsection{New Genesis}
After reaching a common state, validators produce a signed state commitment (e.g., a Merkle root) for the recovered state, that can be verified by new participants. 
To ensure that only honest validators can produce this certificate, it includes equivocation evidence identifying at least $f+1$ validators from the original set $N$ that are proven to have equivocated (i.e., by issuing conflicting finality votes). A state commitment is considered a \emph{genesis block} if it contains: (i) equivocation proofs for at least $f+1$ distinct validators in $N$, and (ii) signatures from at least $f+1$ validators that are not among those proven to have equivocated.

\section{Security Guarantees}
In this section, we provide sketch proofs and defer the formal analysis to 
\ifsubmissionversion
Appendix~F in the extended version~\cite{fullversion}.
\else
Appendix~\ref{sec:Secanalysis}.
\fi

\vspace{3pt}

\noindent\textbf{Safety and client-safety.} Rational validators may collude with Byzantine validators to create conflicting blocks within the maximum message delay $\Delta^*$. However, such attacks are unprofitable: to double-spend, misbehaving validators must present inclusion proofs for blocks with finality certificates, ensuring they are finalized by at least one correct validator. During a fork, at most $fD$ coins can be double-spent in total, and any single rational validator gains at most $D$ coins, for all finalization rules (stake-bounded, strong, or strongest chain), by \ifsubmissionversion
Lemmas~2,4,6 (extended~\cite{fullversion}) \else 
Lemmas~\ref{lemma:bound1},~\ref{lemma:collateralStrongBlocks}, and~\ref{lemma:collateralStrongestBlock}\fi, respectively. Every correct validator observes all equivocations for blocks with finality certificates within \(2\Delta^*\) and blacklists misbehaving validators via our slashing mechanism. Hence, misbehaving validators cannot withdraw their stake 
\ifsubmissionversion
(Lemma~3, \cite{fullversion}). \else 
(Lemma~\ref{lemma:successfulslashing}).
\fi


Finally, we conclude that rational validators have no incentive to engage in forking attacks, as any potential gain does not exceed their stake, which is subject to slashing. This ensures the protocol's safety 
\ifsubmissionversion
(Theorem F.1, \cite{fullversion}). \else 
(Theorem~\ref{theorem:totalordering}).
\fi 
Moreover, as clients accept inclusion proofs only for blocks with at least \( 2f+1 \) finality votes, client-safety is guaranteed 
\ifsubmissionversion
(Theorem F.2, \cite{fullversion}). \else 
(Theorem~\ref{theorem:certifiable}).
\fi


\vspace{3pt}
\noindent\textbf{Liveness and client-liveness.}
Since transaction fees are distributed among participating validators, rational validators have no incentive to withhold proposals or votes for valid blocks. However, Byzantine validators may attempt to bribe them to silence correct proposers and censor transactions. 

Byzantine validators claim rewards only within windows where at least \( 2f+1 \) unique proposers exist, ensuring at least one is correct. Once their rewards are depleted, rational validators resume proposing and voting for valid blocks 
\ifsubmissionversion
(Lemma 10, \cite{fullversion}). \else 
(Theorem~\ref{lemma:liveness}).
\fi 

Consequently, a correct validator will eventually propose a block, ensuring liveness \ifsubmissionversion
(Theorem F.1, \cite{fullversion}). \else 
(Theorem~\ref{theorem:totalordering}).
\fi  Additionally, to claim their rewards, rational validators commit finality votes, guaranteeing client-liveness \ifsubmissionversion
(Theorem F.2, \cite{fullversion}). \else 
(Theorem~\ref{theorem:certifiable}).
\fi 

\chris{To do: refer to lemmas when they are ready}
\vspace{3pt}
\noindent\textbf{Recovery and economic restitution.} After a fork, all correct validators observe the equivocation and, every block carrying (augmented) finality certificates, that could have been accepted by clients, is observed by at least one honest validator. By \sys’s finalization rules, the total value of coins that can be double-spent across conflicting finalized blocks is bounded by \(fD\) \ifsubmissionversion
(Lemmas~2,4,6, extended version~\cite{fullversion}). \else 
(Lemmas~\ref{lemma:bound1},~\ref{lemma:collateralStrongBlocks}, and~\ref{lemma:collateralStrongestBlock}).
\fi
Validators provably identify at least \(f+1\) misbehaving validators, whose combined stake \((f+1)D\) covers all double-spent transactions. During the set agreement phase, validators agree on the same set of blocks to execute for the new genesis, ensuring deterministic recovery 
\ifsubmissionversion
(Lemma 16, \cite{fullversion}) \else 
(Lemma~\ref{lemma:commonRecoveryLedger})\fi
; moreover, all blocks with finality or (augmented) finality certificates accepted by some client is included in the recovery set \ifsubmissionversion
(Lemma 16, \cite{fullversion}). \else 
(Lemma~\ref{lemma:commonRecoveryLedger}).\fi Finally, they deterministically execute the transactions from conflicting blocks in the same order to agree on a common state and reimburse affected clients using the slashed stake 
\ifsubmissionversion
(Theorem F.3, \cite{fullversion}). \else 
(Theorem~\ref{th:recovery}).
\fi 

\section{Case-Study}
\label{sec:case-study}

We evaluate \sys's practical feasibility on one million consecutive Cosmos blocks. Over 99\% of these blocks exceeded the $\frac{5}{6}$ participation threshold required by our strongest-chain finalization rule, allowing \sys to instantly finalize the vast majority of blocks\footnote{Source at \url{https://anonymous.4open.science/r/cosmos-analytics-4994/}.}.

The remaining blocks were produced during periods of lower participation, lasting up to 147 blocks. Given Cosmos’s $\sim$6-second block time, this corresponds to $\sim$15 minutes. With a median daily transaction volume of \$17 million, this 15-minute window corresponds to an average transaction volume of $\sim$\$180{,}000.
As such, given the minimum slashable stake of active validators of $\sim$\$470{,}000, \sys can compensate for low-participation periods finalizing blocks using our stake-based finalization rule.

These high participation rates are not unique to Cosmos. On Ethereum, the average daily participation has also consistently remained above 99\%~\cite{beaconscan}.
Thus, we conclude that \sys could be seamlessly integrated into production systems like Cosmos or Ethereum, enhancing system resilience without impacting block confirmation times.

\section{Limitations and Extensions}
\label{sec:discussion}

 \ifsubmissionversion
 We discuss next the limitations and extensions of our protocol, while a detailed discussion on the network model and incentive mechanisms is deferred to Appendix~\ref{discussion:appendix}.
 \else 
\fi

\vspace{3pt}
\noindent\textbf{Dynamic validator settings.} We outline how to extend our protocol to the permissionless setting, where validators dynamically join or leave, following dynamic-membership BFT~\cite{duan2022foundations}. As in~\cite{duan2022foundations}, we assume every membership change preserves our resilience assumptions: at any time at least $1/3$ of active validators are honest, and between consecutive membership changes a quorum of at least $2/3$ honest and rational validators remains stable.

\vspace{3pt}
\noindent\emph{On-chain join requests.} Like withdrawals, we treat join requests as on-chain state events: once a request is committed, validators consistently update their local view, so they all agree on the current validator set and quorum size.


\vspace{3pt}
\noindent\emph{Challenges.} To participate safely, a validator joining later must: (i) learn the current validator set, (ii) learn the system state, and (iii) detect past safety violations, so that it neither lets malicious validators withdraw their stake nor unknowingly extends a fork. For (i) and (ii), late joiners obtain the full chain, execute the blocks to reconstruct the state, and replay membership-change events from genesis to obtain the current validator set; periodic checkpoints for the validator set and state commitments are left as future work.


\vspace{3pt}
\noindent\emph{Past equivocations.}
To ensure that late joiners do not mistakenly assist misbehaving validators, they must also learn whether any equivocation occurred in the past. We use the following mechanism:

\begin{itemize}[nosep,leftmargin=*]
    \item \emph{Late joiner}. A validator wishing to join at time~$t$ broadcasts a join request and waits until time~$t + 2\Delta^*$ before participating fully. This waiting period ensures that it can receive any messages revealing past safety violations.

    \item \emph{Existing validator}. Any existing validator $p$ that receives the join request at some time $t' \in \{t, t{+}\Delta^*\}$ responds to the joiner with all safety violations it is aware of up to time $t'$. After that moment, $p$ treats the joining validator as active and interacts with it according to the protocol.
\end{itemize}

By time $t' + \Delta^* \leq t+ 2\Delta^*$, the joining validator will have learned all equivocation evidence known to validator $p$ by time $\le t'$, and will learn about future inconsistencies during normal protocol participation. 

\vspace{3pt}
\noindent\textbf{Assessing transaction value.}
An important part of our protocol is assessing the value of transactions in a given block to weigh it against the locked stake. While this is simple in a payment system environment, it is not straightforward in a smart contract environment. While transactions transferring native on-chain currency can easily be evaluated, it is challenging to evaluate the value of ERC-20 tokens, NFTs, or even layer-2 rollup checkpoints. To address this, clients can include an estimated transaction value in their submissions. If the receiver agrees with the estimate, they accept the stake-based finality rule; otherwise, they wait for the full finality threshold \( 2\Delta^* \) (or a strong chain). 


\vspace{3pt}

\noindent\textbf{Generalized rewards scheme.}
The reward mechanism requires that every validator appears as a proposer at least once in each window $W_k$ (with high probability). The window length $w$ depends on the protocol's leader-election mechanism. For protocols where each validator is independently and uniformly selected as leader in each round, choosing $w \ge n(\log n + \lambda)$ ensures that every validator is elected at least once with probability at least $1-e^{-\lambda}$, where $\lambda$ is a security parameter, while, for leader-election probabilities $\{q_i\}_{i=1}^n$ \cite{gilad2017algorand}, choosing $w \ge (\ln n+\lambda)/q_{\min}$, where $q_{\min}=\min_i q_i$, yields the same guarantee (appendix~\ref{appendix:GeneralizedReward}). For multi-proposer protocols where progress requires successful proposals from at least $2f+1$ validators in each round \cite{babel2023mysticeti}, every successful round already guarantees participation from a quorum of validators, reducing the window to $w=1$.
 \chris{citations, discuss with someone knowing crypto to see what applies here
}

\chris{citations}

\ifsubmissionversion
\else 
\vspace{3pt}
\noindent\textbf{{Bribing mechanism.}}
Trustless bribing mechanisms enable cooperation between Byzantine and rational validators. A Byzantine validator can bribe a rational validator \( p \) to sign two conflicting blocks \( b \) and \( b' \) (e.g., by sending pre-signed transactions). Similarly, censorship of a target client \( p' \) can be incentivized, effectively rewarding exclusion of \( p' \)'s transactions. In particular, the Byzantine validator can deploy a smart contract, funded every second block, that pays out in the next block if \( p' \)'s wallet balance remains unchanged, allowing rational validators to claim the bribe only if \( p' \)'s wallet is not updated.

\vspace{-0.005cm}
\noindent{\textbf{External incentives.}} External incentives or attacks that are not observable within the blockchain ecosystem are incorporated by Byzantine behavior. By modeling a rational–Byzantine adversary, we capture a system that is resilient to both external  incentives (Byzantine behavior), and internal incentives (rational behavior).

\vspace{-0.005cm}
\noindent{\textbf{$(\Delta, \Delta^*)$--timing model.}}
Another interpretation of our synchrony assumption is through the $(\Delta,\Delta^*)$-timing model introduced in prior work~\cite{budish2024economic,lewis2025beyond}. In this model, messages are delivered within a known bound $\Delta$ after GST, while an additional (potentially much larger) bound $\Delta^*$ is assumed to hold at all times, including before GST. The bound $\Delta^*$ can be viewed as representing slower but reliable communication channels that remain available during network disruptions or adversarial conditions. As shown in Section~\ref{sec:impossibilitiesSummary}, rational resilience is impossible when Byzantine and rational validators together exceed one third of the system ($f^ + k > n/3$). Consequently, some synchrony assumption is necessary to obtain a solution, and we therefore work within the $(\Delta,\Delta^*)$-timing model.

\noindent{\textbf{Probabilistic network model.}} In line with prior work, we assume an adversary that controls both malicious validators and network delays. We leave for future work models in which control over validators and the network is decoupled~\cite{danezis2025byzantine} or network partitions are probabilistic~\cite{guo2019synchronous}. In such settings, attacks would succeed only with some probability, requiring the protocol to ensure that the expected utility of any attack for rational validators is strictly less than the stake they forfeit. 
\fi

\section{Related Work}
\chris{check who the reviewers are :)}

\com{
\subsubsection*{\textbf{Incentive Compatibility}} Blockchains usually operate in financial contexts, and thus, their validators which are responsible for their maintenance and security, are typically financially motivated. Consequently, designing incentive-compatible consensus protocols is crucial, particularly given the permissionless nature of blockchains. While production blockchains such as Bitcoin and Ethereum incorporate incentive mechanisms, they are not incentive-compatible~\cite{eyal2015miner,eyal2018majority, sapirshtein2017optimal, tsabary2018gap, wood2014ethereum}.  

In the PoW setting, the state-of-the-art incentive-compatible blockchain protocol is FruitChains~\cite{pass2017fruitchains}, which remains secure as long as rational miners do not form a coalition that exceeds $1/2$ of the total mining power.
Several subsequent works have proposed diverse incentives mechanisms for PoW blockchains to optimize specific metrics, e.g., latency~\cite{zhang2019lay}. 

While PoW blockchains continue to evolve, environmental concerns and scalability limitations have driven the blockchain community towards Proof of Stake (PoS) blockchains. Recent research has particularly focused on accountable PoS systems~\cite{polygraph,abc,mixedrational4,lewispye2025accountableliveness,aftbyzratpaper}, where misbehaving participants can be identified and penalized.
}

\vspace{-3pt}
\noindent\textbf{Rational-Byzantine model in quorum-based protocol.}
BAR~\cite{bar} studies SMR with altruistic (honest), rational, and Byzantine parties. However, BAR is designed for general-purpose state machine replication and does not address the economic incentive structures or adversarial dynamics specific to payment systems. Moreover, BAR assumes that rational participants act independently and do not collude. Plenty of works considering rational and Byzantine parties also do not address collusions~\cite{amoussou2019rationals,mcmenamin2021achieving,lev2019fairledger}. 

Abraham et al.~\cite{abraham2006distributed,abraham2008lower} introduce a solution concept  which allows strategic collusion between rational and Byzantine players in the context of secret sharing and multi-party computation. In this model, TRAP~\cite{trap} solves one-shot consensus in the partially synchronous setting for 
\( n > \max \left( 3/2k + 3f, 2(k + f) \right) \). However, we show that this approach does not extend to multi-shot SMR in partial synchrony, as the total value that can be double-spent cannot be bounded. 
Assuming a synchronous network, we instead present an SMR protocol that tolerates up to \( f + k < 2/3 \), improving upon TRAP’s resilience bound of \( f + k < 1/2 \).

\vspace{3pt}
\noindent\textbf{Accountability.}
Several works focus on accountability in distributed protocols, including specific constructions~\cite{polygraph, buterin2017casper,shamis2022ia}, analyzing forensic properties of well-known protocols~\cite{sheng2021bft}, and generic transformations that make protocols accountable~\cite{abc}. While these approaches ensure that misbehaving validators can eventually be detected, accountability alone does not guarantee rational resilience. If rational validators profit from double-spending is higher than their stake, they will misbehave even if eventually their stake is slashed. Moreover, this line of work does not address recovery after safety violations or client reimbursement. Our work closes these gaps by (i) identifying when accountability suffices to deter rational attacks and (ii) providing a recovery mechanism that ensures full reimbursement to clients when the Byzantine threshold is exceeded.

\vspace{3pt}
\noindent\textbf{Economic safety.} 
BLR24~\cite{roughgarden} examines economic safety in PoS blockchains, focusing on when slashing mechanisms can reliably punish equivocation, while ensuring honest validators retain their stake. They prove that slashing is impossible when more than $2f$ out of $3f+1$ validators misbehave, and propose a slashing mechanism when at most $2f$ do. 
Although our bounds align, as we also provide resilience up to $2f$  Byzantine and rational nodes, our focus is fundamentally different. 
BLR24~\cite{roughgarden} does not account for liquid coin transfers during SMR execution or explicitly model profit generation beyond slashing deterrents.

\vspace{3pt}
\noindent\textbf{Recovery.} Lewis-Pye et al.~\cite{lewis2025beyond} study Byzantine adversaries controlling more than one third of the validators in quorum-based SMR. They prove that in partial synchrony, recovery protocols cannot guarantee a bounded rollback window, so an unbounded number of finalized transactions may need to be reverted. Consequently, such protocols~\cite{zlb, gong2025recover} cannot provide full client reimbursement and are orthogonal to our approach, which targets client-level economic security. Moreover, assuming a synchronous bound $\Delta^*$, \cite{lewis2025beyond} proposes a generic recovery protocol tolerating up to $2/3$ Byzantine failures, in which each instance completes in $O(f\cdot \Delta^*)$ rounds and rolls back transactions finalized within the last $2\Delta^*$, but does not compensate clients that have already accepted those transactions. Under the same network assumptions, \sys adds client-level economic security: all client-accepted transactions are incorporated into recovery and added to the new genesis block, while losses from conflicting transactions accepted during the final $2\Delta^*$ are fully reimbursed from the stake of provably misbehaving validators, by combining a variant of~\cite{lewis2025beyond} with $\sys$'s confirmation rule.

Sridhar et al.~\cite{sridhar2023bettersafesorryrecovering} describe a \emph{gadget} layered onto synchronous SMR protocols so that clients do not accept conflicting ledgers, but only if clients actively communicate and wait $3\Delta$ before finalizing. In contrast, our clients remain silent, accepting any block that carries finality votes, which can be finalized immediately. Moreover, \cite{sridhar2023bettersafesorryrecovering} does not specify recovery after safety violations, assuming instead that an honest majority is restored externally.

\chris{check}
\vspace{3pt}
\noindent\textbf{Synchronous BFT.}
Synchronous BFT SMR protocols~\cite{abraham2020sync} leverage synchrony to remain secure under an honest majority assumption. To achieve this, the decision of each value fundamentally depends on an explicit synchrony bound. In contrast, in line with \cite{lewis2025beyond}, our focus is to enable a transformation of partially synchronous protocols so that they operate within their assumed model without relying on a known synchrony bound. Even in the presence of fewer than a supermajority of validators, \sys finalizes blocks up to the stake threshold (and, under high participation, potentially all blocks) before the synchronous bound $\Delta^*$ is reached. 

 \ifsubmissionversion
Due to space constraints, we discuss 
flexible BFT paradigms, and related impossibility results in Appendix~\ref{app:relatedWork}. \else 
\fi

\noindent\textbf{Flexible BFT.} Decoupling the confirmation rules of clients and validators originates from the flexible BFT paradigm~\cite{malkhi2019flexible, xiang2021strengthened, neu2024optimal} which separates the role of proposing and voting transactions (done by validators) with committing transactions (done by clients). We implement this separation using finality certificates for transactions and also leveraging certificates with high participation in our strong block finality rules. 

\noindent\textbf{Impossibilities.} Finally, our impossibility results in partial synchrony align with~\cite{tas2023bitcoin,roughgarden}, which show that more than $1/3$ Byzantine validators can cause safety violations and avoid slashing once the security threshold is exceeded. We extend this by showing that the impossibility holds for any combination of Byzantine and rational participants exceeding $1/3$. In this setting, rational validators cannot be disincentivized from joining the attack.

 \ifsubmissionversion
 \else 
 \section{Acknowledgments}
The work was partially supported by the Austrian Science Fund (FWF) through the SFB SpyCode project F8512-N, and by the WWTF through the projects 10.47379/ICT22045 and 10.47379/ICT25056. Finally, we
thank Pim Keer and Ioannis Alexopoulos for insightful discussions.
\fi




\bibliographystyle{IEEEtran}
\bibliography{bib}

\appendices

 \ifsubmissionversion
 \section{Further Related Work}\label{app:relatedWork}
\noindent\textbf{Flexible BFT.} Decoupling the confirmation rules of clients and validators originates from the flexible BFT paradigm~\cite{malkhi2019flexible, xiang2021strengthened, neu2024optimal} which separates the role of proposing and voting transactions (done by validators) with committing transactions (done by clients). We implement this separation using finality certificates for transactions and also leveraging certificates with high participation in our strong block finality rules. 

\noindent\textbf{Impossibilities.} Finally, our impossibility results in partial synchrony align with~\cite{tas2023bitcoin,roughgarden}, which show that more than $1/3$ Byzantine validators can cause safety violations and avoid slashing once the security threshold is exceeded. We extend this by showing that the impossibility holds for any combination of Byzantine and rational participants exceeding $1/3$. In this setting, rational validators cannot be disincentivized from joining the attack.
\section{Discussion}\label{discussion:appendix}
\vspace{-0.15cm}
\noindent\textbf{{Bribing mechanism.}}
Trustless bribing mechanisms enable cooperation between Byzantine and rational validators. A Byzantine validator can bribe a rational validator \( p \) to sign two conflicting blocks \( b \) and \( b' \) (e.g., by sending pre-signed transactions). Similarly, censorship of a target client \( p' \) can be incentivized, effectively rewarding exclusion of \( p' \)'s transactions. In particular, the Byzantine validator can deploy a smart contract, funded every second block, that pays out in the next block if \( p' \)'s wallet balance remains unchanged, allowing rational validators to claim the bribe only if \( p' \)'s wallet is not updated.

\vspace{-0.005cm}
\noindent{\textbf{External incentives.}} External incentives or attacks that are not observable within the blockchain ecosystem are incorporated by Byzantine behavior. By modeling a rational–Byzantine adversary, we capture a system that is resilient to both external  incentives (Byzantine behavior), and internal incentives (rational behavior).

\vspace{-0.005cm}
\noindent{\textbf{$(\Delta, \Delta^*)$--timing model.}}
Another interpretation of our synchrony assumption is through the $(\Delta,\Delta^*)$-timing model introduced in prior work~\cite{budish2024economic,lewis2025beyond}. In this model, messages are delivered within a known bound $\Delta$ after GST, while an additional (potentially much larger) bound $\Delta^*$ is assumed to hold at all times, including before GST. The bound $\Delta^*$ can be viewed as representing slower but reliable communication channels that remain available during network disruptions or adversarial conditions. As shown in Section~\ref{sec:impossibilitiesSummary}, rational resilience is impossible when Byzantine and rational validators together exceed one third of the system ($f^ + k > n/3$). Consequently, some synchrony assumption is necessary to obtain a solution, and we therefore work within the $(\Delta,\Delta^*)$-timing model.

\noindent{\textbf{Probabilistic network model.}} In line with prior work, we assume an adversary that controls both malicious validators and network delays. We leave for future work models in which control over validators and the network is decoupled~\cite{danezis2025byzantine} or network partitions are probabilistic~\cite{guo2019synchronous}. In such settings, attacks would succeed only with some probability, requiring the protocol to ensure that the expected utility of any attack for rational validators is strictly less than the stake they forfeit. 
 \else 
\fi

\section{Generalization to Randomized Leader Election}
\label{appendix:GeneralizedReward}
In this section, we discuss how the leader window $w$ can be chosen for protocols that employ randomized leader election. First, we consider protocols that elect leaders independently and uniformly at random\chris{citations, check what protocols give that}, and thus, each validator is elected leader with probability $1/n$ in every round. We will pick a value of $w$ such that every validator is elected at least once during the leader window with high probability is some security parameter $\lambda$. 

Let $A$ denote this event. Let also $E_i$ denote the event that validator $p_i$ is never elected during some window of $w$ rounds. Since $p_i$ is elected with probability $1/n$ in each round, we have $\Pr[E_i] = \left(1-\frac{1}{n}\right)^w$. Now, let $E=\bigcup_{i=1}^{n} E_i$ denote the event that at least one validator is never elected during the leader window. By the union bound: $$\Pr[E]
\le
\sum_{i=1}^{n}\Pr[E_i] \leq n\left(1-\frac{1}{n}\right)^w \leq ne^{-w/n} $$

Since event $A$ is the complement of $E$, we have that: 
$ \Pr[A]
= 1-\Pr[E]
\ge
1-ne^{-w/n}$. Now, by choosing $w=n(\ln n+\lambda)$, for some security parameter $\lambda>0$: $ \Pr[A]
= 1-\Pr[E]
\ge
1-e^{-\lambda}$. 

Thus, a leader window of size $w=n(\ln n+\lambda)$ guarantees that every validator is elected at least once with probability at least $1-e^{-\lambda}$. It is easy to see that this analysis generalizes to protocols that elect $k\ge 1$ leaders independently and uniformly at random in each round. In this case, choosing $w \ge \frac{n}{k}(\ln n+\lambda)$ gives us the same probability for event $A$.


We now extend the analysis to protocols that employ weighted randomized leader election, such as VRF-based stake-weighted leader election~\cite{gilad2017algorand}. In particular, we assume that validator $p_i$ is elected leader independently in each round with probability $q_i$, where $\sum_{i=1}^{n} q_i = 1$. Also, let $q_{\min}=\min_{i\in[n]} q_i$.

As before, let $A$ denote the event that every validator is elected at least once during a leader window of $w$ rounds, and let $E_i$ denote the event that validator $p_i$ is never elected during that window.  Then, $\Pr[E_i]
=(1-q_i)^w
\le
e^{-q_i w}$.

Let again $E=\bigcup_{i=1}^{n} E_i$ denote the event that at least one validator is never elected during the leader window and let $A$ be the complement of $E$.  By the union bound,
$\Pr[E]
\le
\sum_{i=1}^{n} e^{-q_i w} \le
n e^{-q_{\min} w}$. Therefore, by choosing $w=\frac{\ln n+\lambda}{q_{\min}}$ we obtain:  $\Pr[A]
\ge
1-e^{-\lambda}$.

\section{Set Agreement Protocol}\label{app:recovery} \chris{polish}
We now present the set agreement protocol, which builds upon~\cite{lewis2025beyond}, and outputs a DAG of blocks $\mathcal{L}^R$ to every honest validator. Our protocol provides an additional block-inclusion guarantee: the DAG of blocks $\mathcal{L}^R$ includes all blocks with a finality certificate or an augmented finality certificate, necessarily including all blocks accepted by clients.

Following an equivocation, honest validators may observe different sets of certificates and therefore may hold different views of the blocks that should be taken into account for the recovered ledger. For that reason, we will first characterize the local information available to validators at the start of recovery and, then, re-examine the objective of recovery with respect to the different local information of validators. 

\chris{changed notation for recovery inputs}
\noindent\textbf{Initial recovery inputs.}
Let $t^*$ denote the first time at which an equivocation is observed by some honest validator and let $t_p^*$ denote the time at which validator $p$ first detects the equivocation. Furthermore, let $L_p^{<t_p^*}$ denote the ledger consisting of every block $b$ with finality certificate $\mathcal{F}(b)$ observed before time $t_p^*$ by $p$, and let $S_R^p$ be the set of all blocks for which $p$ observes a finality certificate $\mathcal{F}(b)$ during the interval
$[t_p^*, t_p^*+2\Delta^*]$ along with all the blocks for which $p$ has observed only an augmented finality certificate $\hat{\mathcal{F}(b)}$.

Observe that for every block $b$ accepted by a client, it follows that, either: i) $b$ is included in $L_p^{<t_p^*}$ of every honest validator $p$ through a finality certificate, or ii) there is at least one honest validator $p$, for which, a certificate $\mathcal{F}(b)$ or  $\hat{\mathcal{F}}(b)$ for $b$ is included in $S_R^p$. Similarly, observe that for every block $b$ in $L_p^{<t_p^*}$, and any honest validator $p'$, $b$ is included either in $L_p^{<t_{p'}^*}$ or is included in $S_R^{p'}$. 

To ensure that all honest validators begin recovery with inputs containing (i) every block that may have been accepted by a client and (ii) every block for which a finality certificate was observed by some honest validator during the $2\Delta^*$ waiting period, we introduce an additional dissemination phase, leading to validators' extended recovery inputs. 

\vspace{-0.2cm}
\subsubsection*{\textbf{Extended recovery inputs}} Each validator $p$ broadcasts its initial recovery input $S_R^p$ at time $t^*_p + 2\Delta^*$. Additionally, during the interval
$[t_p^*+2\Delta^*, t_p^*+4\Delta^*]$, validator $p$ collects the initial recovery inputs from other validators and constructs the set $\hat{S_R^{p}} = \bigcup_{p' \in P} S_R^{p'}$ where $P$ is the set of validators that sent their initial recovery input during this period. We refer to $\hat{S_R^{p}}$ as the extended recovery input of $p$.

Since every honest validator detects the equivocation within $\Delta^*$ of its occurrence, every honest validator $p'$ sends their input $S_R^{p'}$ by time $t_{p'}^* +2\Delta^* \le t_p^*+3\Delta^*$, and thus all honest validators are included in $P$. However, we stress that honest validators do not necessarily have the same extended recovery input. That is because a Byzantine validator $p_m$  may selectively disseminate its recovery input $S_R^{p_m}$  to only a subset of honest validators. Consequently, some honest validators may include additional certificates that are not present in the extended recovery inputs of others.

\noindent\textbf{Active validator set}.
Each validator \(p\) maintains an active validator set \(A^p\), consisting of all validators for which \(p\) has not observed a valid equivocation proof. Let \(C_{A^p}\) denote the collection of equivocation proofs for validators excluded from \(A^p\). These proofs serve as justification for the active validator set \(A^p\) and are constructed from conflicting certificates in the extended recovery set of $p$.

Observe that every active validator set contains all \(f+1\) honest validators, since honest validators never equivocate and therefore cannot be excluded by a valid equivocation proof. Moreover, at least \(f+1\) Byzantine validators have provably equivocated and are excluded from every active validator set and, thus, every active validator set contains at most \(2f\) validators. Finally, every active validator set has an honest majority. To see this, consider an active validator set \(A^p\) with \(|A^p| = 2f-i\) validators, for some \(i \geq 0\). Since all \(f+1\) honest validators belong to \(A^p\), the set contains at most \(f-i-1\) Byzantine validators. Therefore, the number of Byzantine validators is strictly smaller than \(|A^p|/2\), implying that every majority quorum of \(A^p\) contains at least one honest validator.

\noindent\textbf{Goal of the set agreement protocol.}
The goal of the set agreement protocol is to output a common recovery set $S^*$ satisfying two properties. First, $S^*$ must contain the extended recovery input of at least one honest validator. Second, all honest validators must agree on the same output $S^*$.
Given such an output, each honest validator $p$ constructs its local recovery DAG $\mathcal{L}_R^p = L_p^{<t_p^*}
\cup
S^*$, after removing duplicate blocks. To see that, if the set $S^*$ satisfies the aforementioned conditions then all validator construct the same DAG, denoted by $\mathcal{L}_R$, observe the following. 

First, every block contained in some ledger $L_p^{<t_p^*}$ of any honest validator $p$ is either already included in the corresponding ledger $L_{p'}^{<t_{p'}^*}$ of every other honest validator $p'$, or belongs to the initial recovery input $S_R^{p'}$ of some honest validator $p'$. In the latter case, $b$ is included in the extended recovery input of all honest validators and, in turn, in $S^*$. Similarly, every block that may have been accepted by a client is either contained  in the ledgers $L_p^{<t_{p}^*}$ of every honest validator $p$ (with a finality certificate), or is included  in the initial recovery set of some honest validator (with a finality certificate or an augmented finality certificate), and, is therefore included in $S^*$.
Consequently, when both conditions are satisfied, all honest validators reconstruct the same recovery DAG $\mathcal{L}_R$, and this DAG contains every block that may have been accepted by a client.


To see how we achieve this objective, we will now describe the exact protocol specification presented in Algorithm~\ref{alg:setagreement}. At a high level, in each view, honest validators send their extended recovery sets to the designated leader, and the leader proposes a set $S$ including all of them. Validators vote only for proposals including their recovery set, ensuring that, whenever a majority quorum certificate is formed for a recovery set $S$, $S$ necessarily contains the extended recovery set of at least one honest validator. To ensure that all honest validators eventually output the same recovery set, the protocol employs a two-phase voting mechanism together with lock certificates, following the approach of~\cite{abraham2020sync}.


\chris{if time allows connect with the pseudocode}
\noindent\textbf{Views and timing.}
The protocol proceeds in views $u=\{1,2,...,n\}$ of duration $8\Delta^*$ each. For validator $p$, view $u$ starts at local time $t_p^u
=
t_p^* + 4\Delta^* + 8(u-1)\Delta^* $. Since the detection time of equivocation, of any two honest validators differs by at most $\Delta^*$, honest validators enter each view within $\Delta^*$ time between each other.





\noindent\textbf{Votes, quorum certificates, and locks.} The protocol uses two types of votes. A \emph{1-phase vote} indicates support for a proposed recovery set, while a \emph{2-phase vote} indicates that the validator is ready to commit the proposal. Both type of votes are signed by the respective validator and also include the proposal of the leader (votes include the hash of proposal of the leader along with the leader signature). \chris{fix here}

Given an active validator set $A$, a \emph{1-phase quorum certificate}, denoted $QC^1(S)$, is a collection of strictly more than $|A|/2$ valid 1-phase votes for recovery set $S$ from validators in $A$.  We also say that $A$ is the set that justifies $QC^1(S)$ (in particular it is, $C_A$, the set of equivocation proofs for all validators in $N \setminus A$ which justifies $QC^1(S)$). Similarly, a \emph{2-phase quorum certificate}, denoted $QC^2(S)$, is a collection of strictly more than $|A|/2$ valid 2-phase votes for $S$ from validators in $A$, and $A$ is the set that justifies $QC^2(S)$.

Each validator $p$ maintains a lock value $QC_l^p$, initially set to $\bot$. A lock is represented as $QC_l^p = (QC^b(S), S, C_A, u)$, where $b \in \{1,2\}$ and $QC^b(S)$ is a quorum certificate for the recovery set $S$, $C_A$ is the set of equivocation proofs defining the active validator set $A$ (i.e., for all validators in $N \setminus A$), and $u$ is the view in which the certificate was formed. Validator $p$ always stores the highest-view lock it has observed. If both a $QC^1(S)$ and a $QC^2(S)$ exist for the same highest view, the validator stores the $QC^2(S)$.


\noindent\textbf{Proposal phase.}
Let $l_u$ be the leader of view $u$. During the interval $[t_p^u, t_p^u+2\Delta^*]$, each validator $p$, sends to the leader $l$ its lock $QC_l^p$ (if not null) along with their recovery set $\hat{S_{R}^{p}}$. 
Since validators enter the delay  with difference of at most  $\Delta^*$ time, the leader $l_u$ collects these messages during $[t_{l_u}^u-\Delta^*,\, t_{l_u}^u+2\Delta^*]$. 

The leader then proposes a tuple $P^u = (QC^u, S^u, u)$, where the lock $QC^u$ and the set $S^u$ are constructed as follows. 






\noindent\emph{Case 1: No valid lock exists.}
If all received lock values are null, the leader proposes a recovery set $S^u = \cup_{ p \in P} \hat{S^p_R}$ where $P$ is the set of all validators that sent their extended recovery set $\hat{S^p_R}$ to the leader, and, thus, $S^u$ is the union of all recovery sets received $[t_{l^u}^u-\Delta^*,\, t_{l^u}^u+2\Delta^*]$. The lock value $QC^u$ in the proposal is null.

Otherwise, if the leader received at least one lock, we have the following cases. \\
\noindent\emph{Case 2a: A unique highest valid lock exists.} If there exists  a unique lock certificate  $QC_l=(QC^{b}(S),S, C_A, u')$ for some view $u'$ that is the highest across all locks collected by the leader $l^u$, then the leader proposes that lock along with the set $S$, i.e., $QC^u = QC_l$ and $ S^u =S$. Also the leader updates its local lock to $ QC^{l_u}_l = QC_l$.  

\noindent\emph{Case 2b: Conflicting  highest valid locks exist.} If the leader observes conflicting lock certificates, $QC_{l_1}$ and $QC_{l_2}$, both with the highest view $u'$ across all locks collected by the leader $l^u$, it sets $QC^u$ as the union of the certificates, and behaves as in Case~1 for the set $S^u$, proposing the union of the collected recovery inputs. 

\noindent\textbf{Voting phase.}
We now describe how validator $p$ chooses whether to vote for a proposal $P^u = (QC^u, S^u, u)$ (beyond verifying the leader signature). In particular, the voting decision depends on whether its lock $QC_l^p$ is null.


\vspace{+0.1cm}

If $QC_l^p = \bot$, then $p$ votes for $P^u$ only if the proposed recovery set $S^u$ \emph{extends} its recovery input $\hat{S_R^{p}}$. Otherwise, if the lock of $p$ is non-null, i.e., $QC_l^p = (QC^b(S), S, C_A, u^*)$, then $p$ votes for $P^u$ only if one of the following holds: 

\noindent\emph{Case 1: Matching locks.} The proposal contains a valid lock certificate $QC^u = (QC^b(S'), S', C_A', u')$ for the same set $S'=S$  and the same view $u' =u^* $.  
\\
\noindent\emph{Case 2: Higher-view lock.} The proposal contains a valid lock certificate $QC^u = (QC^b(S'), S', C_A', u')$ for some view $u' > u^* $, in which case $p$ updates its lock $QC^p_l = QC^u$ and votes for $P^u$ (provided that $C'_A$ correctly defines the active set where $QC^b(S')$ contains strictly more than $|A|/2$ votes). \\
\noindent\emph{Case 3: Conflicting locks.} The proposal $QC^u$ includes at least two lock certificates $QC_{l_1}$ and $QC_{l_2}$ corresponding to different recovery sets $S_{1} \neq S_{2}$, both originating from the same highest view $u' \geq u^*$. In this case, $p$ sets $QC_l^p = \bot$ and behaves similar to the case where $QC_l^p = \bot$ before the proposal $P^u$.

In both cases, if a validator $p$ receives strictly more than $|A^p|/2$ valid 1-phase votes for $P^u$, it forms a 1-phase quorum certificate $QC^1(S^u)$ for the set $S^u$, it updates its lock to $QC_l^p = (QC^1(S^u), S^u, C_{A^p}, u)$ \chris{check if here it's $u$ or the lock's view}, and additionally broadcasts the certificate $QC^1(S^u)$, the set $S^u$, and the proofs $C_{A^p}$ determining its active validator set.

\vspace{+0.1cm}
\noindent\emph{2-phase votes, commit, and termination.} If a $QC^1(S^u)$ for $S^u$ is formed by time $t_p^u + 5\Delta^*$, validator $p$ waits for an additional $2\Delta^*$. If during this period no votes for conflicting proposals are observed (we note that each vote is signed by the leader $l^u$, so Byzantine validators alone cannot trigger this condition), then $p$ broadcasts a 2-phase vote for the proposal $P^u$. If validator $p$ subsequently observes a valid $QC^2(S^u)$ by time $t_p^u + 8\Delta^*$, it commits $S^u$ as the agreed recovery set $S^*$ and broadcasts: (i) the certificate $QC^2(S^*)$, (ii) the set of equivocation proofs $C_{A^p}$ defining the active validator set $A^p$ that justifies $QC^2(S^*)$, and (iii) a commit message for the set $S^*$. 

\chris{strict majority?}
Finally, if the validator $p$ observes commit messages for $H(S^*)$ from a majority of validators in $A^p$, it forms a \emph{commit certificate} for $S^*$ denoted by $CommitQC(S^*)$ and broadcasts the tuple consisting of: (i) $CommitQC(S^*)$,  (ii) the recovery set $S^*$, and (iii) the equivocation proofs $C_{A^p}$ justifying $QC^2(S^*)$. Validator $p$ then outputs $S^*$ and terminates. Any honest validator $p'$ that receives a valid commit certificate for $S^*$ also outputs $S^*$, rebroadcasts the certificate, and terminates.

\begin{algorithm}[t]
\scriptsize
\caption{\textsc{Set-Agreement Protocol} (validator $p$)}
\label{alg:setagreement}
\begin{algorithmic}[1]

\State $A^p,  \hat{S_R^{p}}, QC_l^p \gets \bot$ 

\For{view $u = 1,2,\ldots$}
\State $ vote \gets False$
\If{$l^u$ is leader of view $u$}

    \State $l^u$ collects all $(QC^{p'}_l, \hat{S_R^{p'}})_{p'}$ during $[t_{l^u}^u - \Delta^*, t_{l^u}^u + 2\Delta^*]$ \chris{notation for the set} 
    \State $S^u \gets \bigcup_{p'} \hat{S_R^{p'}}$ \chris{notation for the set} 
    \If{all received locks are null}
        \State $QC^u \gets \bot $ \chris{removing duplicates + set of validators that sent a proposal}
    \ElsIf{a unique highest lock $QC_l'$ exists for some view $u'$ $p \in P$} \chris{notation for the set}
        \State $QC^u \gets QC_l' $
    \ElsIf{conflicting highest locks $QC_{l_1}$ and $QC_{l_2}$ exist} \chris{notation for the set}
        \State $ QC^u \gets (QC_{l_1}, QC_{l_2}) $
    \EndIf

    \State \textbf{broadcast} proposal $P^u=(QC^u, S^u, u)$

\EndIf

\Upon{observing proposal $P^u=(QC^u, S^u, u)$ for the first time}
\If{conflicting valid $QC_1$ and $QC_2$ for view     $u' > u^*$ exist or $QC^u = \bot $} \chris{cross check notation for the views}
    \State{\textbf{if } $S^u \supseteq \hat S_R^p$ \textbf{then} $vote \gets True$}
    \ElsIf{valid $QC^u$ corresponds to a view $ u' > u^* $}
       \State $ QC_l^p \leftarrow QC^u, vote \gets True $
        \EndIf
    \State{\textbf{else if} $QC^u$ corresponds to view $u^* $ and same proposal  $vote \gets True$}
     \State{\textbf{if } $vote$ \textbf{then}
    broadcast $(``1-phase", P^u)$}

\EndUpon

\Upon{observing $QC^1(S^u)$ by time $t^u_p + 5\Delta^*$}
    \State $QC_l^p \gets (QC^1(S^u), S^u, A^p, u )$, \textbf{broadcast} $(QC^1(S^u), C_{A^p})$
    \If{no conflicting proposal observed within the next $2\Delta^*$}
        \State \textbf{broadcast}  $(``2-phase", P^u)$
    \EndIf

\EndUpon

\Upon{observing $QC^2(S^u)$ by time $t_p^u + 8\Delta^*$}
    \State $S^* \gets S^u, QC_l^p \gets (QC^2(S^u), S^u, A^p, u )$
    \State \textbf{broadcast} $(QC^2(S^*), C_{A^p})$
    \State \textbf{broadcast}  $(``commit", H(S^*)$
\EndUpon

\Upon{observing $CommitQC(S^*)$ for $S^*$}
    \State \textbf{broadcast}
    $(CommitQC(S^*), S^*, C_{A^p})$ and \textbf{output} $S^*$

\EndUpon

\EndFor

\end{algorithmic}
\end{algorithm}
\section{Impossibility Proof Sketches}\label{impossibilities:sketch}

Due to space constraints, we present only proof sketches of the impossibility theorems here and defer the formal proofs to 
\ifsubmissionversion
Appendix~F in the extended version~\cite{fullversion}.
\else
Appendix~\ref{sec:impossibilities}.
\fi

Due to space constraints, we present only proof sketches of the impossibility theorems here and defer the formal proofs to 
\ifsubmissionversion
Due to space constraints, we present only proof sketches of the impossibility theorems here and defer the formal proofs to Appendix~F in the extended version~\cite{fullversion}.
\else 
We present only proof sketches of the impossibility theorems here and defer the formal proofs to Appendix~\ref{sec:impossibilities}.
\fi


\vspace{1ex}
\noindent\textbf{\emph{Proof sketch of Theorem~\ref{impossibility:synchrony}.}} 
The adversary can partition correct validators in two distinct subsets $H_1$, $H_2$ and carry out a double-spending attack. 
This attack can be beneficial for rational validators since, for any value of stake $D$ they lock before the run, misbehaving validators can withdraw their stake without even being slashed. 

Fix any stake $D$ and any withdrawal period $\Delta_w$. Consider as input set of conflicting transactions $T_1$ and $T_2$ constructed as explained before. For liveness, correct validators must finalize any input of valid transactions. Thus, validators in $H_1$ must finalize transactions in $T_1$, and validators in $H_2$ must finalize transactions in $T_2$. Since $GST$ occurs at an unknown time, it is impossible for correct validators in different subsets to detect the conflicting blocks including transactions in $T_1, T_2$ before finalizing the transactions. Similarly, it impossible for correct validators to slash the misbehaving ones before they withdraw their stake. 
\qed

\vspace{1ex}
\noindent\textbf{\emph{Proof sketch of Theorem~\ref{impossibility:partialSync2}.}} We modify the attack described in Theorem~\ref{impossibility:synchrony} as follows. 
For any stake value $D$ locked by rational validators before the execution, there exists an input consisting of conflicting transactions $T_1$ and $T_2$ such that rational validators transfer a total of $qD$ coins to clients in $C_1$ and $C_2$, respectively, and hence double-spend  $qD$ coins. Even if the attack is eventually detected and the misbehaving validators are slashed and forfeit their stake, or even if the stake of misbehaving validators is instead offered as a reward for reporting misbehavior, the attack remains strictly beneficial and therefore preferable for rational validators.

Such a configuration can be constructed whenever at least one of the following resources is unbounded: (i) the initial liquid coins held by rational validators, (ii) the coins available as bribes from malicious clients, or (iii) the participation rewards. In particular, when participation rewards are unbounded, rational validators can repeatedly reclaim their initial funds through participation rewards, spend them again in conflicting transactions, and repeat this process sufficiently many times before GST, thereby accumulating a total double-spent value exceeding $qD$.
\qed

\vspace{1ex}
\noindent\textbf{\emph{Proof sketch of Theorem~\ref{impossibility:correctvalidators}.}}
When $f+k \geq q$, Byzantine and rational validators can successfully construct valid certificates for an arbitrary number of conflicting blocks and provide them to clients. For any stake $D$ that rational validators have locked before the run, there is a configuration in which each rational validator can double-spend more than $qD$ coins. That is because the number of clients is arbitrary and unknown before the run. Even if the conflicting blocks are detected and the misbehaving validators get slashed, or even if rational validators can claim the stake of other validators upon reporting an equivocation, the attack is still beneficial for rational validators and would instead prefer to collude with each other. \qed

\vspace{1ex}
\noindent\textbf{\emph{Proof sketch of Theorem~\ref{impossibility:responsiveness}.}} 
Since $q$--commitable protocols are design to operate under partial synchrony and we do not impose any limit in the transaction volume finalized during \( \Delta^* \), the adversary can perform the double-spend attacks described in Theorems~\ref{impossibility:synchrony},~\ref{impossibility:partialSync2}, before the synchronous bound \( \Delta^* \) elapses. Both cases assume the same resilience against Byzantine adversaries and operate under similar network conditions during a network partition in \( \Delta^* \). \qed

\ifsubmissionversion
\else
\section{Analysis} \label{sec:Secanalysis}
\chris{To dos: 1) simultaneously + discuss the "composition" in appendices: i) what happens when we have clients accepting different certificates, ii) we should actually show that the updated protocol still satisfies safety/liveness in partial synchrony} 
\com{
\subsection{Slashing}

\za{Most statements below need formalization. Beyond that, theorems should be in an analysis section not in the protocol design section.}
\begin{theorem}
\label{th:successfulslashing}
     Given the Slashing protocol and the withdrawal mechanism, no rational validator $p_j$ can successfully produce a conflicting block $b'$ and withdraw their collateral before being detected.
\end{theorem}
\begin{proof}
    Given there are at least $f+1$ correct validators, each block $b$ requires at least two subsequent quorums signed by at least one correct validator $p_i$. As such, at least one correct validator will output this quorum for a conflicting block $b'$ in Line~\ref{alg:line:consensusoutput} and subsequently broadcast the conflicting $QC'$ in Lines~\ref{alg:line:broadcast}  and~\ref{alg:line:broadcast2}.
    As such, given the upper bound on the unstable network period $\Delta^*$ all correct validators will receive the conflicting QC $QC'$ within $\Delta^*$.

    As rational validators are unable to participate in Consensus after initiating the withdrawal they have to wait for at least $\Delta_{W}$ after their last action before the collateral withdrawal is finalized. However, by then all correct validators will have blacklisted $p_j$, making it impossible to withdraw the liquified stake.
    Essentially slashing $p_j$.
\end{proof}

While this proves the effectiveness of the slashing protocol, if correct validators cease to vote on certain blocks (including blocks of correct validators) this could result in no blocks being successfully approved for the next $\Delta^*$ rounds (i.e. until all correct validators blacklisted $p_j$ locally and will not include their transactions in their blocks anymore).
Nonetheless, we prove in Theorem~\ref{th:slashinglifeness} that this does not result in additional powers for the adversary.

\begin{theorem}
\label{th:slashinglifeness}
    $\Xi$ does not empower an adversary $p_j$ to execute liveness attacks.
\end{theorem}
\begin{proof}
    The proof is straightforward.
    Given a correct validator $p_i$ that receives two conflicting certificates $QC$ and $QC'$ at time $t$, there might be a delay of $t+\Delta^*$ until all correct validators witness the conflicting certificates and locally blacklist the adversary $p_j$. Furthermore, correct validators that have not yet received the conflicting certificates might still include transactions from the misbehaving validators in their block, which might prevent correct validators from gathering sufficient quorums for their blocks up to $t+\Delta^*$.
    However, if the certificate broadcast in Lines~\ref{alg:line:broadcast}  and~\ref{alg:line:broadcast2} is delayed by $\Delta^*$ such may also the voting messages of $p_i$ resulting in an equivalent delay.
\end{proof}
}

\com{
\begin{lemma} \label{th:deltafinal}
Given a correct validator $p_i$ that invokes $b.finalize$ for block $b$ on a given ledger $\gamma$ in Line~\ref{alg:line:finalizeD}, and given up to $f$ forks, no correct validator $p_i'$ will invoke $b'.finalize$ for any block $b'$ on any potential fork $\gamma'$.
\end{lemma}
\begin{proof}

    In order to invoke $b.finalize$ and $b'.finalize$ on two conflicting blocks $b$ and $b'$, the two correct validators $p_i$ and $p_i'$ must have received the respective blocks as the output of the SMR protocol $\Pi$ in Line~\ref{alg:line:consensusoutput}. As such, either this was the first time they witnessed the certificate and they broadcast the certificate in Line~\ref{alg:line:broadcast} or they must've received the certificate at an earlier moment and already broadcast it in line Line~\ref{alg:line:broadcast2}. Given the condition in Line~\ref{alg:line:deltafinal}, if within $\Delta^*$ any conflicting certificates are received, the validators seize to finalize blocks as the blacklist would not be empty anymore following Line~\ref{alg:line:blacklist}.

Given that $p_i$ finalized $b$, it must have received $b$ at time $t < localtime - \Delta^*$. Inversely, for $p_i'$ to finalize $b'$ it must have received $b'$ at time $t' < localtime' - \Delta^*$. However, this leads to a contradiction as $\Delta^*$ passed since both $p_i$ and $p_i'$ received and broadcasted the respective certificates. Due to this, both $p_i$ and $p_i'$ must have received the conflicting certificates and neither would've finalized their respective block.

As such, the finalization rule in Line~\ref{alg:line:finalizeD}, fulfills safety.
\end{proof}

\begin{lemma}  
\label{th:collateralfinal}
Given up to $f + k \leq 2f$ rational and byzantine validators creating up to $\Gamma$ conflicting ledgers, the maximum utility $p_j$ can extract is tightly bounded by $C$.
\end{lemma} 
\begin{proof}

As described in Theorem~\ref{th:forkutility}, for any rational validator $p_j$ the benefit from creating a fork arises from the ability to spend some money $\beta$ up to $\Gamma-1$ times. 
As previously outlined, for $p_j$ to successfully spend $\beta$ it requires $2f+1$ finality attestations. According to Lemma~\ref{th:deltafinal}, rational validators cannot extract any utility from blocks finalized in Line~\ref{alg:line:finalizeD}. Therefore, any benefit must stem from the finalization in Line~\ref{alg:line:finalizeC}.

Given up to $2f$ validators misbehaving, i.e., creating up to $\Gamma \leq f+1$ forks, with at least one correct validator per fork and given the condition in Line~\ref{alg:line:collateralfinal}, no correct validator will approve more than $C$ on any fork $\gamma$ within $\Delta^*$.

To create a valid ledger $\gamma$ we require $2f+1$ signatures, where $2f+1 = f + k + m$ given $f$ byzantine, $k \leq f$ rational, and $m+1$ correct validators given $f \geq m+1 > 0$. 
As a result, there $k=f-m$ rational validators per ledger $\gamma$ sharing the rewards $C$.

Given $N=3f+1=f+k+M=f+(f-m)+M$ total validators where $M$ presents the set of correct validators, we get $M=f+m+1$ correct validators and with $m+1$ correct validators per ledger $\gamma$ the number of ledgers $\Gamma = \frac{f+m+1}{m+1}$. The benefit for $p_j$ is then $(\Gamma-1) * \frac{C}{f-m}$, resulting in the following Equation:

\begin{equation}
\label{eq:benefitcalc}
    (\frac{f+m+1}{m+1}-1) * \frac{C}{f-m}
\end{equation}

Therefore, we need to show that this benefit is bounded by $C$, such that: $(\frac{f+m+1}{m+1}-1) * \frac{C}{f-m} \leq C$.
We can simplify this inequality to $\frac{1}{f-m} \leq 1$ which is true for any $m \leq f-1$.
Consequently, for any number of fork $f-m$, the benefit a rational validator $p_j$ can extract is tightly bounded by $C$.

\end{proof}

\begin{theorem}
$\sys$ satisfies Rational-Consensus.  
\end{theorem}
\begin{proof}
Assume,  for the sake of contradiction, that a subset of validators $f+k$ creates $\Gamma$ conflicting ledgers. According to Lemma~\ref{th:collateralfinal}, any rational validator $p_j$ has a benefit of at most $C$. Moreover, as follows from Theorem~\ref{th:successfulslashing}, $p_j$ will get slashed after $\Delta^*$, before being able to withdraw the collateral $C$, leading to a utility of $ \leq C - C = 0 $. Therefore, $p_j$ has no positive utility to violate safety. 

\end{proof}
}


We will prove that \sys is both a \emph{safety-preserving} and \emph{liveness-preserving} transformation. To this end, let \( S_f \subseteq S_{\mathsf{bad}} \) denote the set of strategy profiles that cause a safety violation (possibly alongside a liveness violation), and define \( S_l = S_{\mathsf{bad}} \setminus S_f \) as the set of strategy profiles that lead solely to liveness violations.

For each validator \( p \), let \( S_f^p \) (or $S_l^p$) be the set of individual strategies \( s_p \) such that there is a strategy profile \( s = (s_p, s_{-p}) \in S_f \textit{ (or $S_l$)} \) where \( p \) actively  contributes to the safety (or liveness) violation, and let \( S_f^{-p} \) (or $S_l^{-p}$) denote the set of strategies \( s_{-p} \) of all other validators in such profiles.

\subsubsection*{Safety-Preserving Transformation} First, we will show that regardless of the actions of the rest validators it is not beneficial for rational validators to engage in a forking attack. In that way, we demonstrate that for every validator \( p \), and any strategy \( s_p \in S^p_f \), there exists an alternative strategy \( s_p' \in S^p_{good} \) that \emph{weakly dominates} \( s_p \). 





\subsubsection*{{Liveness-preserving transformation}} 





To establish that \sys is a liveness-preserving transformation, we show that at any round there will be future ``good'' leader windows during which at least one correct validator successfully proposes a block. Since this prevents liveness violations, we conclude that for every validator \( p \) and any strategy \( s_p \in S^p_l \), there exists an alternative strategy \( s_p' \in S^p_{\mathsf{good}} \) that \emph{weakly dominates} \( s_p \).




Combining both guarantees demonstrates that the composition of $\Pi$ and \sys forms a coalition compliant protocol. To formalize this, we define the utility function of the rational validators.

\subsubsection*{\textbf{Utility of rational validators}}\label{utility:appendix}
To model dynamic behavior over intervals of rounds, we define the global history up to round \( r \) as \( H^r \), and the local view of player \( i \) at that round as \( H_i^r \subseteq H^r \). Let \( S_i^I \) be the set of strategies available to player \( i \) over an interval \( I \), where each strategy is a function of their local view \( H_i^r \) at each round \( r \in I \). The overall strategy space for the interval is \( S^I = \prod_{i \in [n]} S_i^I \), and the utility function of player \( i \) over \( I \) is given by \( u_i^I : S^I \to \mathbb{R} \).

We now consider the set of strategies available to a rational validator \( p \) over some interval \( I \). First, \( p \) may choose to participate in a coalition that attempts to fork the protocol.\footnote{We assume a worst-case scenario: if \( p \) chooses to join a coalition \( \mathcal{F} \), all other validators in \( \mathcal{F} \) are colluding.} Otherwise, if \( p \) does not engage in a forking attack, it may either follow actions that preserve liveness or abstain from doing so.


\emph{Forking.} To perform this attack, a rational validator $p$ can join a coalition $\mathcal{F} \in \mathcal{F}_R$, where $\mathcal{F}_R$ consists of all the susbets of Byzantine and rational validators sufficienlty large to create a fork. We denote the partipication of $p$ in a coalition $\mathcal{F}$ during the interval $I$ with the indicator variable $b^I_{ \mathcal{F}, p}$. The function $\mathcal{M}$ gives an upper bound of the coins that can be double-spent during the attack: $$\mathcal{M}(\mathcal{F}, I) = \frac{\gamma \cdot C(I)}{R(\mathcal{F})},$$ where $\gamma$ is the maximum number of forks that can be created by the coalition, $C(I)$ is an upper bound of coins finalized per fork during $I$ as specified by the protocol, and $R(\mathcal{F})$ returns the number of rational validators in the coalition\footnote{To provide stronger guarantees, we assume that all the rewards from double-spending are distributed exclusively and equally to the rational validators.}.

Moreover, we denote the participation of $p$ in at least one coalition $\mathcal{F} \in \mathcal{F}_R$ by the indicator variable $b^I_{p}$. Let $ \mathbf{b}^I_{p}$ be the vector whose elements are all indicator variables $b^I_{\mathcal{F},p}$ for every 
$\mathcal{F} \in \mathcal{F}_R$ 
and let $P$ be the slashing penalty imposed on $p$ (which is equal to the stake $D$). 



\emph{Liveness-preserving actions.} 
If validator \( p \) does not join any coalition, it may instead follow the protocol's liveness-preserving actions—such as proposing valid blocks when selected as leader, voting for valid blocks, participating in view-change procedures, and issuing finality votes for blocks committed to its local ledger.

We denote by \( \sigma^I_p \) the strategy of validator \( p \) over the interval \( I \) with respect to these liveness-preserving actions. Let \( R(\sigma^I_p) \) denote the total participation rewards that \( p \) earns by following \( \sigma^I_p \), and let \( B(\sigma^I_p) \) denote any bribes received from an adversary to refrain from signing specific blocks.




\com{ \[
\mathcal{A}{(p,r)} =
\begin{cases} 
(i,k): i \in \{ P,NP\}, k \in \{CV, NCV\} & \text{if $L(p, r)=1$}, \\
 (j, k): j \in \{V, NV\}, k \in \{CV, NCV\} & \text{otherwise}.
\end{cases}

\] where $L(p, r)$ is a predicate indicating whether $p$ is the block proposer of the round $r$.

\small
\hspace{-0.5em}
\begin{equation*}
\mathcal{A}{(p,r)} =
\begin{cases} 
(i,k): i \in \{ P,NP\}, k \in \{CV, NCV\} {\hspace{-0.8em}} & \text{if } L(p, r)=1, \\
(j, k): j \in \{V, NV\}, k \in \{CV, NCV\} {\hspace{-0.5em}} & \text{otherwise}.
\end{cases}
\end{equation*}
\normalsize

$L(p, r)$ is a predicate indicating whether $p$ is the block proposer of the round $r$.
}




Finally, the utility of $p$ in the interval $I$ 
while following the strategy $s_p=(\mathbf{b}^I_{p}, \sigma^I_{p})$  is expressed as:  
\[
u_p^I(s_p, s_{-p}) =
\begin{cases} \sum_{ \mathcal{F} \in \mathcal{F}_R}
    b^I_{\mathcal{F}, p}\cdot\mathcal{M}(\mathcal{F, I}) - P & \text{if $b^I_{p} = 1$}, \\
    R(\sigma^I_{p}) + B(\sigma^I_{p})& \text{otherwise}.
\end{cases}
\]  

\subsection{Lemmas for proving safety}

In Lemma~\ref{proofs:benefitOfFork1}, we prove that once a correct validator finalizes a \( 2\Delta^* \)-old block \( b \), no conflicting block exists. Thus, conflicting block can exist only for \( 2\Delta^* \)-recent blocks. We will use this lemma to establish the recovery guarantees of $\sys$. 
\begin{lemma}\label{proofs:benefitOfFork1}
Assume a correct validator finalizes a block $b_1$ that is $2\Delta^*$-old. Then, no correct validator will ever finalize a block $b_2$ conflicting to $b_1$.

\end{lemma}

\begin{proof}
Assume, for the sake of contradiction, that the following scenario occurs. A correct validator \( p_i \) includes block \( b_1 \) in its local ledger at time \( localTime_{b_1} \) and triggers \( b_1.\mathsf{finalize} \) at time \( localTime \), where \( localTime \geq localTime_{b_1} + 2\Delta^* \). Furthermore, another correct validator \( p_j \) triggers \( b_2.\mathsf{finalize} \) for a block \( b_2 \) that conflicts with \( b_1 \).

By assumption, at \( localTime_{b_1} \), \( p_i \) broadcasts the valid certificate \( QC(b_1) \) for block \( b_1 \). Thus, \( p_j \) receives \( QC(b_1) \) at some time \( t^* \in \{localTime_{b_1}, \dots, localTime_{b_1} + \Delta^*\} \). We distinguish two cases:

  \emph{Case 1:} \( p_j \) triggers \( b_2.\mathsf{finalize} \) at time \( t < t^* \). Then \( p_j \) must have received \( QC(b_2) \) at some time \( t' \leq t < t^* \). Since \( p_j \) broadcasts \( QC(b_2) \), validator \( p_i \) receives it by some time \( localTime_{b_2} \leq t + \Delta^* < localTime_{b_1} + 2\Delta^* \). At time \( localTime_{b_2} \), \( p_i \) detects the equivocation and slashes the misbehaving validators (line~\ref{alg:line:Slash}), resulting in a non-empty blacklist: \( QC(b_1) \cap QC(b_2) \neq \emptyset \). Therefore, at time \( localTime \), the finalization condition (line~\ref{alg:line:deltafinal}) is not satisfied, and \( p_i \) will not finalize \( b_1 \) (line~\ref{alg:line:finalizeD}), contradicting the assumption.

   \emph{Case 2:} \( p_j \) triggers \( b_2.\mathsf{finalize} \) at time \( t \geq t^* \).  At time $t$ \( p_j \) has received both \( QC(b_1) \) and \( QC(b_2) \). Let \( t_{b_2} \) and \( t_{b_1} \) be the times at which \( p_j \) received \( QC(b_2) \) and \( QC(b_1) \), respectively, and assume without loss of generality that \( t_{b_1} \geq t_{b_2} \). At time \( t_{b_1} \), \( p_j \) detects the equivocation and slashes the misbehaving validators (line~\ref{alg:line:Slash}), resulting in \( QC(b_1) \cap QC(b_2) \neq \emptyset \). Therefore, at time \( t \), the finalization condition (line~\ref{alg:line:deltafinal}) is not satisfied, and \( p_j \) will not finalize \( b_2 \) (line~\ref{alg:line:finalizeD}), leading to a contradiction.

\end{proof}

\com{
\begin{lemma} \label{lemma:broadcast}
If a correct validator witnesses a valid certificate $QC(b)$ for block $b$, every other correct validator will have witnessed $QC(b)$ within $\Delta^*$.
\end{lemma}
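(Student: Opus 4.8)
The plan is to derive the lemma as a direct consequence of two facts already established in the model and protocol description: that the underlying SMR protocol $\Pi$ is broadcast-based, and that the relaxed synchronous network delivers every message within the worst-case bound $\Delta^*$. The statement is short, so the proof is essentially a composition of these two ingredients; the only point requiring care is the correct invocation of $\Delta^*$, which already absorbs a full asynchronous window.

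First I would fix notation: let $p$ be the correct validator that witnesses $QC(b)$, and let $t$ denote the (local) time at which this occurs, i.e., the moment $\Pi$ outputs \textsc{SMR.deliver}$(QC(b), b, h)$ to $p$. By the broadcast-based assumption on $\Pi$ (and, as noted in the protocol, by having a validator in our construction explicitly re-broadcast any newly witnessed valid certificate), at time $t$ the message carrying $QC(b)$ is disseminated to every validator in $N$. Thus the key event to analyze is the delivery of this single broadcast message.

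Next I would invoke the network model. Recall that under our relaxed synchrony assumption, any message sent at time $t$ is delivered to all recipients by time $t + \Delta^*$, where $\Delta^* = 2\Delta + \Delta_{GST}$ upper-bounds the delay even when an asynchronous period of length $\Delta_{GST}$ intervenes. Applying this bound to the broadcast of $QC(b)$, every correct validator receives, and therefore witnesses, $QC(b)$ no later than $t + \Delta^*$, which is exactly the claim that each correct validator witnesses $QC(b)$ within $\Delta^*$ of $p$ doing so.

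The main (and essentially only) subtlety I would make explicit is that $p$ need not witness $QC(b)$ during a synchronous period: the broadcast may be issued just before the network turns asynchronous. This is precisely why the bound $\Delta^*$, rather than $\Delta$, is required, since by definition $\Delta^*$ dominates the delay in this worst case. Once this is addressed, the conclusion holds unconditionally and the lemma follows immediately.
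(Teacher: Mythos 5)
Your proposal is correct and follows essentially the same argument as the paper's proof: the witnessing validator immediately broadcasts $QC(b)$ by the broadcast-based assumption on $\Pi$, and the worst-case delivery bound $\Delta^* = 2\Delta + \Delta_{GST}$ (accounting for an asynchronous period beginning just before the message would otherwise arrive) guarantees every correct validator witnesses it within $\Delta^*$. The paper merely unfolds this worst case explicitly (asynchrony starting at $t_b + \Delta - \epsilon$, lasting $\Delta_{GST}$, followed by delivery within $\Delta$), which you instead invoke directly from the network model's definition of $\Delta^*$ — the same content.
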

\begin{proof}
Assume that a correct validator $p_i$ witnesses a certificate $QC(b)$ for block $b$ for the first time at local time $t_b$ \footnote{We only use the local time as a reference point, as every validator only measures elapsed time according to their local clocks}. By assumption on the underlying SMR protocol, as soon as $p_i$ witnesses $QC(b)$ it broadcasts the $QC(b)$. In the worst case the asynchronous period can start at some time $t' = t_b + \Delta - \epsilon $ for a small $\epsilon>0$. An asynchronous period can last up to $\Delta_{GST}$, after which every message is delivered by all validators within $\Delta$. Thus, there is a time $t \in [t_b, t_b+ \Delta^*]$ where every correct validator has delivered $QC(b)$.
\end{proof}
}

\com{ 
 Assume, for the sake of contradiction, that the following scenario occurs. A correct validator $p_i$ includes block $b_1$ in its local ledger at time $localTime_{b_1}$ and triggers $b_1.\mathsf{finalize}$ at time $localTime$, where $localTime \geq localTime_{b_1} + 2\Delta^*$. Furthermore, suppose that a correct validator $p_j$ triggers $b_2.\mathsf{finalize}$ for a block $b_2$ that conflicts with $b_1$.
 

    
Let $valid_{t}$ denote the set consisting of every block $b$ for which at least one correct validator has witnessed a valid certificate $QC(b)$ until time $t$. Since at least $\Delta^*$ has passed between $localTime_{b_1}$ and $localTime$, there is a time $t^* \in [localTime_{b_1},localTime]$, where every correct validator has witnessed the certificate  $QC(b)$ for any block $b \in valid_{t^*}$ according to Lemma~\ref{lemma:broadcast}.  
   
Since $b_1 \in valid_{t^*}$, at any time $t' \geq t^*$, every correct validator has already witnessed $b_1$ along with the valid certificate $QC(b_1)$ and will never vote for a conflicting block. Therefore, $p_j$ has triggered $b_2.finalize$ at time $t' < t^*$ which means that $b_2 \in valid_{t^*}$. 
   Assume that $p_i$ witnessed $b_2$ along with the respective certificate $QC(b_2)$, at time $localTime_{b_2}$, and w.l.o.g.  assume that $localTime_{b_2} > localTime_{b_1}$. At time $localTime_{b_2}$, $p_i$ will detect the equivocation and slash the misbehaving validators (line~\ref{alg:line:Slash}), leading to $blacklist = QC_{b_1} \cap QC_{b_2} \neq\emptyset$. Therefore, at $localTime$, the condition in line~\ref{alg:line:deltafinal} is not satisfied and $p_i$ will not trigger $b_1.finalize$ in line~\ref{alg:line:finalizeD}, leading to a contradiction.

   
    



}
\com{
\chris{to be removed}
\begin{lemma}\label{proofs:benefitOfFork2}
Assume a correct validator $p_i$ includes block $b$ in $ledger^{p_i}$ at time $localtime_{b}$ and triggers $b.finalize$ at time $localtime$ such that $localtime_{b} + \Delta^* > localtime$. The total sum of outputs in $ledger^{p_i}[b':b]$ is at most $C$, where $b$ is the first block that $p_i$ included in $ledger^{p_i}$ at time $localtime_{b} > localtime - \Delta^*$.
\end{lemma}
\begin{proof} 

  
  
  For every block $b \in ledger^{p_i}[ b_2: b_1]$, $p_i$ updates the variable $suffixvalue$ with the total output sum of transactions in $ b^*$ and only finalizes $b$ if $suffixvalue$ is less than $C$ (line~\ref{alg:line:collateralfinal}). 
  
    

\end{proof}
}

In Lemma~\ref{lemma:bound1}, we show that when validators employ the stake-bounded finalization rule, any rational validator engaging in a forking attack can gain at most \( D \) coins. 
\begin{lemma}  
\label{lemma:bound1}
    Assume that validators employ the stake-bounded finalization rule and that at least one coalition of validators forks the system in some round. 
    \begin{enumerate}
       \item \emph{Upper bound of double-spending}: 
      The coalition double spends up to $fD$ coins and any rational validator participating in the attack can double-spend at most an amount of coins equal to the stake $D$.
        \item {Tight example}: There is a configuration where a rational validator in the coalition double-spends exactly $D$ coins.
    \end{enumerate}
\end{lemma}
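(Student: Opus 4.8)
The plan is to reduce the statement to a counting argument on the number of conflicting ledgers a coalition can sustain, then to a single elementary optimization, with the tight example of part~2 falling out of the extremal case. First I would localize the double-spend: by Lemma~\ref{proofs:benefitOfFork1}, once a correct validator finalizes a $\Delta^*$-old block no conflicting block is ever finalized, so every coin a rational validator double-spends lives in a $\Delta^*$-recent block. For such blocks the stake-bounded rule (line~\ref{alg:line:collateralfinal}) caps the finalized volume at $C=D$ per ledger. Moreover, a client accepts an inclusion proof only with $2f+1$ finality votes, and since $f^*+k = n-h \le 2f < 2f+1$ whenever $h \ge f+1$, at least one of those votes is cast by a correct validator; hence every client-accepted fork finalizes at most $D$ coins and is backed by correct validators who actually committed finality votes on it.

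Next I would count the forks. A correct validator never equivocates, so it commits finality votes to a single fork, while the $f^*$ Byzantine and $k$ rational members may sign all of them; thus each fork requires $c = (2f+1)-(f^*+k) = h-f$ distinct correct validators, giving $\gamma \le h/c$ and therefore $\gamma-1 \le f/c$. Replaying the same payment across the $\gamma$ ledgers while only one survives, the coalition's double-spend profit is at most $(\gamma-1)C$, and by assumption it is split equally among the $k$ rational members, so each rational validator gains at most $(\gamma-1)D/k \le fD/(ck)$.

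The crux is then $ck \ge f$. Since $f^* \le f$, the correct count satisfies $h \ge f+1$, hence $c \ge 1$, and $c+k = (h-f)+k = n-f-f^* = 2f+1-f^* \ge f+1$. Minimizing the product $ck$ over the region $\{c \ge 1,\ k \ge 1,\ c+k \ge f+1\}$ yields a minimum of $f$, attained only at the corners $(c,k)=(1,f)$ and $(c,k)=(f,1)$; therefore $fD/(ck) \le D$, which proves part~1. For part~2 I take the corner $(c,k)=(1,f)$, i.e.\ $f^*=f$, $k=f$, $h=f+1$ (Figure~\ref{fig:forkingAttack}): partitioning the $f+1$ correct validators into $f+1$ singletons yields $\gamma=f+1$ forks, each finalizing $D$ coins, so the total profit $(\gamma-1)D=fD$ divided among the $f$ rational validators is exactly $D$ each.

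The main obstacle is the fork-counting step — arguing rigorously that client acceptance forces $h-f$ distinct correct finality votes per fork and that no correct validator can be reused across forks — together with justifying the $(\gamma-1)C$ profit accounting rather than the looser $\gamma C$. Once these are fixed, the bound $ck \ge f$ is an elementary optimization whose equality cases coincide exactly with the configuration witnessing tightness.
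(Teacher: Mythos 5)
Your proof is correct in substance and shares the paper's skeleton --- localizing all double-spendable coins to $\Delta^*$-recent blocks via Lemma~\ref{proofs:benefitOfFork1}, the per-ledger cap $C=D$ from line~\ref{alg:line:collateralfinal}, and the observation that client acceptance ($2f+1$ finality votes) forces at least one correct finalizer per fork --- but the combinatorial finish is genuinely different. The paper argues only in the extremal configuration $f^*=f$, $k=f$, $h=f+1$: it fixes a rational validator $p$ joining $f+1-m$ of the at most $f+1$ conflicting ledgers, shows by contradiction that each such ledger contains at most $m+1$ correct and hence at least $f-m$ rational validators, and computes $\max_m \frac{(f-m)D}{f-m}=D$. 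You instead count forks globally, $\gamma \le h/c$ with $c=2f+1-f^*-k$ correct finality votes needed per fork, bound the coalition profit by $(\gamma-1)C$, and reduce everything to the elementary inequality $ck\ge f$ over the region $\{c\ge 1,\ k\ge 1,\ c+k\ge f+1\}$. Your version is more uniform --- it covers all admissible $(f^*,k,h)$ at once rather than the single worst case the paper actually analyzes --- and it makes transparent that equality holds exactly at the corner $(c,k)=(1,f)$, which is precisely the paper's tight example (Figure~\ref{fig:forkingAttack}); the paper's version, in exchange, exposes the per-ledger structure (how many correct and rational validators sign each fork) that your global accounting treats implicitly.

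Two wrinkles should be repaired. First, $h\ge f+1$ does not follow from $f^*\le f$ alone: with $n=3f+1$ and the paper's stated constraint $k+f^*\le 2f+1$ one only gets $h\ge f$, and at $h=f$ your $c$ vanishes and the bound fails --- this is exactly the impossibility regime of Theorem~\ref{impossibility:correctvalidators}, so $h\ge f^*+1$ (equivalently $k+f^*\le 2f$ when $f^*=f$) must be taken as the protocol's operating assumption, as the paper does implicitly. Second, dividing $(\gamma-1)C$ by all $k$ rational validators presumes the coalition contains every rational validator; if only $k'<k$ join, the sharing denominator shrinks, but then each fork needs $c'=2f+1-f^*-k'$ correct votes, the sum $c'+k'=2f+1-f^*$ is unchanged, so the same optimization yields $c'k'\ge f$ while the relevant numerator becomes $h-c'=f-(k-k')\le f$; the bound survives, but this case should be stated explicitly rather than left as the ``obstacle'' you flag at the end.
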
 
\begin{proof}
\emph{(1) Upper bound of double-spending.} Misbehaving validators can double-spend only coins in conflicting blocks that are accepted by clients. Any inclusion proof $\pi_{tx}$ for a transaction $tx$ included in a block $b$ is accepted by some client $c$ only after witnessing a finality certificate $\mathcal{F}(b)$ including at least $2f+1$ finality votes for $b$. Therefore at least one correct validator $p_i$ has finalized block $b$. Since correct validators do not trigger $finalize$ for conflicting blocks, and at least one correct validator is required to finalize a block, there can be at most $f+1$ conflicting ledgers. 

Consider the time $t$\footnote{To refer at a specific time, we can fix the local clock of any validators, as only the time elapsed matters.} where the first conflicting blocks with valid certificates are formed. Moreover, assume the worst case, where there are $f+1$ conflicting ledgers and let us fix any rational validator $p$ participating in $f+1-m$ conflicting ledgers $L_{i \in [f+1-m]}$ with $m \in [f+1]$. 
The coins that misbehaving validators double-spend on these conflicting ledgers is upper bounded by the total coins spent in conflicting blocks finalized by at least one correct validator.

To bound this, we fix $L_{j}$ to be the longest ledger (i.e., the ledger with the most finalized blocks) at time $t$; if all ledgers have the same length, we fix $L_{j}$
for any $j \in [f+1-m]$. At time $t+ \Delta^*$ every honest validator will have witnessed the equivocation and stop voting for new blocks. To upper bound the double  spent coins during   $[t,t+\Delta^*]$ we fix some $L_{i}$ for any $i \in [f+1-m], i \neq j$ and upper bound the coins that finalized by at least one correct validator in $L_{i}$.

Let $Suffix^{[t, t+\Delta^*]}_{L_i}$ be the set of all blocks with valid certificates in $L_i$ during $[t, t+\Delta^*]$. For every block $b \in Suffix^{[t, t+\Delta^*]}_{L_i}$, $p_i$ updates the variable $suffixvalue$ with the total output sum of transactions in the block $ b$ and only finalizes $b$ if $suffixvalue$ is at most $C=D$ (line~\ref{alg:line:collateralfinal}). Therefore, the amount of coins spent in 
$L_{i}$ is at most $D$.  




Now, we will upper bound the total amount of coins validator $p$ can double-spend in all the $f+1-m$ conflicting ledgers. First, at most $m+1$ correct validators participate in each ledger. We prove that by reaching a contradiction. Assume there is a ledger with at least $m+2$ correct validators. Then, in the remaining $f-m$ ledgers, there must be at least one correct validator. Therefore, there are, in total, at least $m+2+f-m=f+2$ correct validators in the system. We reached a contradiction. Since there are at most $f$ byzantine and $m+1$ correct validators per ledger, at least $f - m $ rational validators participate in each ledger. Therefore, at least $f-m$ rational validators can double-spend at most $C = D$ coins in $f-m$ ledgers, and since they distribute the benefit equally, each validator double-spends $max_{m \in [f+1] }\frac{f-m}{f-m}D= D$ coins. 







\emph{(2) Tight example.}
Consider a run with $f$ byzantine, $f$ rational, $f+1$ correct validators. Moreover, consider as input a set of transactions $Tx = \cup_{i=1}^{f+1} Tx_i$ constructed as follows. There is a partition of clients in distinct sets $ C = \cup_{i=1}^{f+1} C_i$ s.t. for each $i$, in
$Tx_{i}$, each of the $f$ rational validators transfers  $D$ coins to a subset of clients in $C_{i}$. 

At some time, a malicious (Byzantine or rational) validator $p$ is the block proposer. Within the last $\Delta^*$, the adversary partitions the $f+1$ correct validators in the distinct sets $ H_i$ for $i \in [f+1]$.  Now, $p$ constructs the blocks $b_{i \in [f+1]}$, where each $b_i$ includes the transactions in $Tx_i$. Validator $p$ sends the block $b_i$, only to the correct node in $H_i$ and successfully generates a valid certificate for $b_i$ along with the respective finality votes which will present to the clients in $C_i$. Therefore, each rational validator spends the $D$ coins in the $f+1$ conflicting ledgers. In total, each rational validator
double-spends $\frac{f*D}{f} = D$ coins.

\end{proof}

\com{
\textbf{\emph{$i^*$-Strong ledger}:} Consider the correct validator $p_i$, the data structure $ledger$ according to $p_i's$ view, and the maximal subset of blocks $\mathcal{B}$, such that, $ \forall b \in \mathcal{B}, b \in ledger$ and  $b.localtime < localtime - \Delta^*$. Now, let us denote the set of validators that have signed at least one block in $\mathcal{B}$ by $\mathcal{S}_{ledger}$. Since, every block in $ledger$ has at least $2f+1$ finally attestations, $|\mathcal{S}_{ledger}| = 2f + 1 +i^*, i^* \in \{0,1, ..., f\}$. We say that $ledger$ is an $i^*$-strong ledger.

$p_i$ can now finalize a subset of blocks with benefit of forking of at most $mC$ (ref to the updated algorithm). Equation~\ref{eq:strongblocks} outlines how $m$ is calculated. Moreover, in Theorem~\ref{proofs:StrongBlocksSafety}, we show that $\sys$ satisfies Rational Consensus under this finalization rule. 
\chris{should we put $ m \rightarrow \infty$ here too?}
    \label{eq:strongblocks}
   \[  m =   
     \begin{cases}
       \text{1,} &\quad i^* \in \{0, ..., \frac{f}{4}-1\} \\ 
       \text{$\frac{f}{(f-i^*)}C$,} &\quad i^* \in \{ \frac{f}{4}, ..., f\} \\
     \end{cases}
      \]

In Algorithm~\ref{algo:strongcerts} we outline the altered finalization rule given the strong certificates.

Now we examine ...
\emph{strong blocks finalization rule}, who...

}

Now, we argue that \emph{any misbehaving validator will be detected and forfeit their stake}. We emphasize that we consider forks with conflicting finality certificates—not merely valid quorum certificates—as only these can be double-spent and yield a profit for rational participants.

\chris{refer to pseudocode}
\begin{lemma} \label{lemma:successfulslashing} 
    Assume a coalition of validators $\mathcal{F}$ forks the system. No validator in $\mathcal{F}$ can withdraw its stake.
\end{lemma}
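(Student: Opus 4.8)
The plan is to combine the equivocation-detection rule of the finality gadget with the timing of the two-stage withdrawal, using the broadcast guarantee of Lemma~\ref{lemma:broadcast}. First I would fix an arbitrary validator $p \in \mathcal{F}$ and recall that, by the definition of the coalition, participating in a fork means $p$ contributes to at least two conflicting ledgers; hence $p$ casts consensus votes on two conflicting blocks $b_1 \neq b_2$ at a common height $h$, each carrying a valid certificate. Thus $p \in QC(b_1)$ and $p \in QC(b_2)$, so $p$ lies in the intersection $QC(b_1) \cap QC(b_2)$ (which, by quorum intersection, contains at least $2(2f+1)-(3f+1)=f+1$ signers, all of them necessarily misbehaving since correct validators never sign conflicting blocks).

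Next I would show that every correct validator detects this equivocation within $\Delta^*$ and blacklists $p$. By the model assumption we consider only certificates witnessed by at least one correct validator — which is forced anyway, since the client rule accepts a block only with $2f+1$ finality votes, requiring a correct validator to finalize and emit finality votes. Lemma~\ref{lemma:broadcast} then guarantees that both $QC(b_1)$ and $QC(b_2)$ reach every correct validator within $\Delta^*$. When a correct validator delivers the second, conflicting certificate for height $h$, it takes the \texttt{Else} branch of Algorithm~\ref{algo:fingadget} and invokes \textsc{slash} on $quorumstore[h] \cap QC(b)$ (line~\ref{alg:line:Slash}), which adds exactly the equivocating signers — in particular $p$ — to $blacklist$ (line~\ref{alg:line:blacklist}).

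The crux is the ordering argument. A withdrawal of $p$ can mature only after the waiting period $\Delta_W$ elapses, and the slashing mechanism forbids $p$ from participating in $\Pi$ throughout that period. Since casting the conflicting votes that form $QC(b_1)$ and $QC(b_2)$ is itself participation, $p$'s equivocation necessarily precedes the start of its withdrawal window, so $p$'s stake remains locked for at least $\Delta_W > \Delta^*$ afterward. Because all correct validators blacklist $p$ within $\Delta^*$ of the certificates being observed, the blacklisting completes strictly before $p$'s withdrawal becomes eligible; correct validators then permanently refuse it, so $p$ cannot reclaim $D$. As the argument is uniform over $p \in \mathcal{F}$, the lemma follows.

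I expect the main obstacle to be making this timing airtight against a coalition that tries to reorder events — e.g., withholding the conflicting certificate until after $p$'s withdrawal has matured, thereby evading detection. The resolution I would emphasize is that any \emph{benefit} from the fork requires correct validators to witness the certificate and emit the $2f+1$ finality votes the client rule demands; hence the certificates relevant to a profitable fork are precisely those witnessed by a correct validator while $p$ is still bonded. Coupled with $\Delta_W > \Delta^*$ and the no-participation-during-withdrawal rule, this forces detection (and blacklisting) to precede withdrawal maturity, closing the gap.
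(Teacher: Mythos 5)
Your proposal assembles the right ingredients (equivocation detection via Lemma~\ref{lemma:broadcast} and line~\ref{alg:line:Slash}, the two-stage withdrawal with $\Delta_W > \Delta^*$), and you correctly identify the dangerous scenario — a coalition that withholds the conflicting certificates until after the withdrawal matures. But your resolution of that scenario does not close the timing gap. You argue that a \emph{profitable} fork requires the certificates to be witnessed by a correct validator ``while $p$ is still bonded,'' i.e.\ before the stake is returned at time $t_1 + \Delta_W$ (where $t_1$ is the request time). Even granting that claim, blacklisting is only guaranteed within $\Delta^*$ of \emph{observation}, so it completes by $t_{obs} + \Delta^* \leq t_1 + \Delta_W + \Delta^*$ — which is \emph{after} the withdrawal becomes eligible at $t_1 + \Delta_W$. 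With the paper's assumption $\Delta_W > \Delta^*$ (not $\Delta_W > 2\Delta^*$), your chain of inequalities needs $t_{obs} \leq t_1$, essentially that the fork is observed by the time the withdrawal request is committed, and nothing in your argument establishes this. The profitability claim itself is also asserted rather than proven: you would still need to explain why a correct validator cannot finalize a late-revealed conflicting block under the $\Delta^*$-recent rule.

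The paper closes this gap by anchoring the entire analysis to the commit time $t'$ of the withdrawal request, which is itself an on-chain event witnessed by a correct validator and hence broadcast to all correct validators within $\Delta^*$. It then argues by dichotomy on the fork-observation time $t$: if $t > t' + \Delta^*$, every correct validator already knows $p$ is withdrawing and \emph{ignores $p$'s participation}, so $p$ cannot have contributed to any fork observed at $t$ — this is the mechanism that kills the withholding attack, and it is a stronger use of the no-participation-during-withdrawal rule than the one you make (you use it only to order $p$'s signing before $t_1$). If instead $t \leq t' + \Delta^*$, the paper shows via sub-cases on where the bounded asynchronous period $\Delta_{GST}$ can sit (it must be positioned near $t'$, since it is what keeps the partitioned validators ignorant of the request and of each other) that \emph{all} correct validators witness the conflicting ledgers and blacklist $p$ by $t' + \Delta^*$, strictly before the completion transaction can be finalized at a time exceeding $t' + \Delta_W$. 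Without this anchoring to $t'$ and the ignore-participation rule, the adversary's freedom to schedule observation times defeats the $\Delta_W > \Delta^*$ margin, so your argument as written does not prove the lemma.
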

\begin{proof}
Assume that the validators in $\mathcal{F}$ fork the system at time \( t \)\footnote{We refer to the first round when distinct correct validators observe conflicting blocks with valid certificates. As before, we use the local clock of any validator to refer to time elapsed between events.} resulting in \( j \in [f+1] \) conflicting ledgers \( L_{i \in \{1, ..., j\}} \) with finality certificates. Let us fix any validator \( p \in \mathcal{F} \). We will show that \( p \) cannot withdraw its stake in any of the conflicting ledgers.

Consider the first ledger \( L_i \) in which validator \( p \)'s withdrawal request \( R_p \) is included in a block \( b \) at time \( t' \). We argue that \( t \leq t' + \Delta^* \). Suppose, for contradiction, that \( t > t' + \Delta^* \). Since block \( b \) has a valid certificate, it must have been witnessed by at least one correct validator \( p_j \), who then broadcasts \( R_p \) at time \( t' \). Therefore all correct validators receive \( R_p \) by time \( t' + \Delta^* \) and ignore the participation of \( p \) at time \( t \). This contradicts the assumption that \( p \) participated in the forking attack at time \( t \).

To complete the withdrawal, a second transaction \( R_{p,s} \) must be committed to \( L_i \) at a time \( > t' + 2\Delta^* \). Since \( L_i \) is the first ledger to include \( R_p \), the same condition applies to any other conflicting ledger, namely the withdrawal can only be finalized after \( > t' + 2\Delta^* \). However, since correct validators broadcast valid certificates, every correct validator will witness all conflicting ledgers at some time \( t^* \leq t + \Delta^* \leq t' + 2\Delta^*\). As a result, they will detect equivocation and blacklist \( p \), preventing the finalization of \( R_{p,s} \). Thus, \( p \)'s withdrawal cannot succeed in any of the conflicting ledgers.



\end{proof}


\begin{lemma}  
\label{lemma:collateralStrongBlocks}
Assume that correct validators employ the \emph{Strong Chain finalization rule} and at least one coalition of validators forks the system. The coalition double spends up to $fD$ coins and any rational validator participating in the attack can double spend at most $D$ coins.
\end{lemma} 
\begin{proof}

Similar to Lemma~\ref{lemma:bound1}, we will fix the longest ledger at the time of the equivocation and upper bound the total amount of coins in conflicting blocks that are finalized by at least one validator during $\Delta^*$.  

\emph{Case 1: Every correct validator finalizes up to $D$.} The proof is similar to Lemma~\ref{lemma:bound1}.

\emph{Case 2: At least one correct validator $p_i$ monitors an $i$-strong chain $L_{p_i}$ with $i > f/4$.} 
Since $1+i$ correct validators monitor $L_{p_i}$, there can be at most $f-i$ additional conflicting ledgers.  We assume the worst case, which is that $f-i$ conflicting ledgers exist. We fix any rational validator $p$ engaging in the attack, participating in $ f-m-i$ conflicting ledgers $L_{i \in [f-m-i]}$, with $m \in [f-i]$. 



For each $i \in [f-m-i]$, at most $m+1$ correct validators and therefore at least $f-m$ rational validators monitor ledger $L_{i}$. We prove that by contradiction. Assume there is a ledger $L_i$ with $m+2$ correct validators. At least $i+1$ correct validators monitor $L_{p_i}$, at least $m+2$ correct validators monitor $L_i$ and at least one correct validator exists in the rest $f-m-i-1$ conflicting ledgers $L_{j\neq i}$. In total, this accounts for $f+2$ correct validators, leading to a contradiction. Furthermore, since there are at most $m+1$ correct validators per ledger, the total amount of coins finalized by correct validators on any ledger is at most $\frac{f}{f-m}D$.


Finally, there are $f-m-i$ conflicting ledgers with at most $f-m$ rational validators spending up to $\frac{f}{f-m}D$ per ledger. So, the amount of double-spent coins is bounded by $max_{\{ m \in [f-i] \}}\frac{(f-m-i)f}{(f-m)^2}D $ which is bounded by $D$ for $i \in \{\frac{f}{4}, ..., f\}$. We can show that, by taking the continuous extension of the function $F(m) = \frac{(f-m-i)f}{(f-m)^2}D$ for $ m  \in [0, f-i]$ and some constant $i \in \{ \lfloor\frac{f}{4}+1\rfloor, ..., f\}$, which takes its maximum value for $m=f-2i$. Since the function takes maximum value at an integer value, the maximum will be the same for the discrete function and thus $\frac{f}{4i} D \leq D$. 

\end{proof}

\begin{lemma}\label{lemma:uniquenessOfStrongestChain}  
At any round, there can only exist a unique strongest chain.
\end{lemma}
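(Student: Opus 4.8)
The plan is to argue by contradiction, combining a quorum-intersection count on the $\Delta^*$-recent signatures with the fact that a correct validator never signs two conflicting blocks. Suppose that at some round two distinct strongest chains $L_1$ and $L_2$ coexist. Being distinct, they are conflicting: there exist conflicting blocks $b_1 \in L_1$ and $b_2 \in L_2$. The first step is to show that their point of divergence lies within the last $\Delta^*$. Otherwise there would be a $\Delta^*$-old pair of conflicting blocks carrying valid certificates; by Lemma~\ref{lemma:broadcast} every correct validator would witness both certificates within $\Delta^*$, triggering the slashing/blacklisting branch (line~\ref{alg:line:Slash}) exactly as in Lemma~\ref{proofs:benefitOfFork1}, so no correct validator would still regard either as a live strongest chain. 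Hence the divergence of $L_1$ and $L_2$ is confined to their $\Delta^*$-recent blocks.

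Next I would invoke the $i$-strong definition. Let $S_1$ and $S_2$ be the sets of distinct validators that signed $\Delta^*$-recent blocks of $L_1$ and $L_2$, respectively. By the strongest-chain hypothesis, $L_j$ is $i_j$-strong with $i_j > \tfrac{f+1}{2}$, so $|S_j| = 2f+1+i_j > 2f+1+\tfrac{f+1}{2}$. Applying inclusion--exclusion over the $n = 3f+1$ validators gives
\[
|S_1 \cap S_2| \;\geq\; |S_1| + |S_2| - n \;>\; 2\left(2f+1+\tfrac{f+1}{2}\right) - (3f+1) \;=\; 2f+2.
\]
Since the Byzantine and rational validators number at most $f^*+k \leq 2f+1 < 2f+2$, the set $S_1 \cap S_2$ must contain at least one correct validator $v$. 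Intuitively, each strongest chain absorbs strictly more than half of the correct validators into its recent signatures (at least $i_j+1 > \tfrac{f+3}{2}$ of them), and two such majorities of the $f+1$ correct validators cannot be disjoint; $v$ is a witness of this overlap.

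Finally, I would derive the contradiction: because the divergence lies in the recent window and $v$ contributed recent signatures to both $L_1$ and $L_2$, it signed blocks on both sides of a recent fork, i.e.\ a pair of conflicting blocks, which a correct validator never does. This refutes the coexistence of two conflicting strongest chains and establishes uniqueness. The main obstacle I anticipate is making the last step airtight against the possibility that $v$ signed only the \emph{common} recent prefix of $L_1$ and $L_2$, so that membership in $S_1 \cap S_2$ does not by itself straddle the fork. I would close this gap by recounting signers strictly on the divergent suffixes past the most recent common ancestor: pre-fork blocks are shared and thus cannot contribute to distinguishing the two chains, so the surplus $i_j > \tfrac{f+1}{2}$ must be realized on the divergent suffix of each chain, forcing the two correct divergent-signer sets to be disjoint subsets of the $f+1$ correct validators while each exceeds $\tfrac{f+1}{2}$ in size — an impossibility that yields the contradiction.
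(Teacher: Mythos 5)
Your core argument is the same as the paper's. The paper's proof observes that an $i_j$-strong chain with $i_j>\frac{f+1}{2}$ must contain at least $1+i_j$ correct validators among its $\Delta^*$-recent signers (since at most $2f$ signers can be Byzantine or rational), asserts that the two correct-participant sets of two conflicting strongest chains are disjoint, and concludes that $2+i_1+i_2 > f+1$ correct validators cannot exist. Your inclusion--exclusion bound $|S_1\cap S_2| \geq |S_1|+|S_2|-n > 2f+2$ is the contrapositive phrasing of exactly this pigeonhole: instead of "two disjoint majorities of the $f+1$ correct validators cannot coexist," you say "the signer sets must share a correct validator, who then equivocated." Your preliminary step forcing the fork point into the last $\Delta^*$ (via Lemma~\ref{lemma:broadcast} and the blacklisting behavior of Lemma~\ref{proofs:benefitOfFork1}) is an addition the paper does not make, and it is harmless.

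The genuine divergence is the final step, and there you identify a real hole but do not close it. The definition of an $i$-strong chain counts distinct signatures over \emph{all} blocks of the ledger output within the last $\Delta^*$, including blocks on a shared recent prefix; so a correct validator in $S_1\cap S_2$ need not have signed on both sides of the fork. This gap afflicts the paper's own proof identically: the claim that $H_1$ and $H_2$ are "distinct" subsets is asserted by fiat and fails for common-prefix signers. Your proposed repair --- that the surplus $i_j>\frac{f+1}{2}$ "must be realized on the divergent suffix" --- is an assertion, not a consequence of the definition. Concretely, suppose the shared recent prefix carries certificates signed by all $3f+1$ validators, while each divergent suffix carries only $2f+1$ signatures ($2f$ misbehaving validators plus one correct validator, a different one on each branch). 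Then both ledgers are $f$-strong, hence strongest, they conflict, and yet no correct validator signed conflicting blocks; your recount over divergent suffixes would find surplus $0$ there, not more than $\frac{f+1}{2}$. Making the lemma airtight therefore requires either amending the signature count to range only over blocks past the last common ancestor of the conflicting ledgers (evidently the intended reading, and what the paper's disjointness assertion implicitly assumes), or a separate protocol-level argument excluding such configurations. In short: same approach as the paper, with the merit of naming a weakness the paper glosses over, but your patch replaces one unproven disjointness claim with another.
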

\begin{proof}
For the sake of contradiction, assume that at some round $r$, there exist two conflicting chains $T_1$, $T_2$ that both form a strongest chain. That means that $T_1$ is an $i_1$-strong chain for some $i_1 > \frac{f-1}{2}$ and $T_2$ is an $i_2$-strong chain for some $i_2 > \frac{f-1}{2}$. Therefore, there exist two distinct subsets $H_1$ and $H_2$ that consist of $1 + i_1$ correct validators participating in $T_1$ and  $1 + i_2$ correct validators participating in $T_2$ respectively. However, $1+i_1+1+i_2 > f+1$. Since there are $f+1$ only correct validators in the system, we reached a contradiction.
\end{proof}

\begin{lemma}  
\label{lemma:collateralStrongestBlock} 
Assume that correct validators employ the \emph{Strongest Chain finalization rule} and at least one coalition of validators forks the system. The coalition double spends up to $fD$ coins and any rational validator participating in the attack can double-spend at most $D$ coins.
\end{lemma} 
\begin{proof}
We will prove that when validators employ the strongest chain finalization rule, rational validators participating in coalitions can double-spend up to $D$ coins. Similar to Lemma~\ref{lemma:bound1}, we fix the longest ledger at the time of the equivocation and provide an upper bound on the total amount of coins included in conflicting blocks that are finalized by at least one validator within the interval \(\Delta^*\).

\emph{Case 1: There is no correct validator $p_i$ monitoring an $i_1$-strong chain with $i_1 > (f+1)/2$.}
If every correct validator finalizes up to $D$, the proof follows from Lemma~\ref{lemma:bound1}.
If at least one correct validator monitors an $i_1$-strong chain with $i_1 > f/4$, the proof follows from Lemma~\ref{lemma:collateralStrongBlocks}.



\emph{Case 2: At least one correct validator $p_i$ monitors an $i_1$-strong chain with $i_1 > (f+1)/2$.} According to Lemma~\ref{lemma:uniquenessOfStrongestChain}, only one strongest chain can exist. Thus we have the following cases.



    \emph{Case 2a: Every correct validator in a conflicting ledger, finalizes up to $D$.}
    A number of $1+i_1$ correct validators monitor $L_{p_i}$, and thus there are up to $f-i_1$ extra conflicting ledgers.  Assume that there are $f-i_1$ extra conflicting ledgers and let us any rational validator participating in $ f-m-i_1$ of the conflicting ledgers for $m \in [f-i]$. In each fork, there are at most $m+1$ correct validators and at least $f-m$ rational validators (with the same analysis done in Lemma~\ref{lemma:bound1}). Therefore, rational misbehaving validators can double-spend up to $max_{\{ m \in [f-i_1] \}}\frac{(f-m-i_1)}{(f-m)}D \leq  { m \in [f-i_1] \}}\frac{(f-m)}{(f-m)}D = D$.

    \emph{Case 2b: At least one correct validator $p_j$ monitors an $i_2$-strong chain $L_{p_j}$ with $i_2 > f/4$.}. Let $i_1^* = i_1 + 1$ and $i_2^* = i_2 + 1$. In that scenario, at least $i_1^* > \ceil{\frac{f+3}{2}}$ correct validators monitor $L_{p_i}$ and at least $i_2^*$ correct validators, where $ \ceil{\frac{f+4}{4}} < i_2^* < \ceil{\frac{f-1}{2}}$, monitor ledger $L_{p_j}$. Therefore, there are at $ i_3 = f+1- i_1^* - i_2^* < \frac{f-6}{4} <\frac{f}{4} $ remaining correct validators to participate in the rest conflicting ledgers. Thus,  the there are up to $ i_3$ conflicting ledgers, where correct validators finalize up to $D$ coins (since there is a participation of less than $f/4$ correct validators). Assume the worst case, where there are $i_3$ such conflicting ledgers $L_{i \in [i_3]}$. 
    
    Now, we fix any rational validator $p$ participating in $L_{p_i}$ and in $ i_3-m$ of the conflicting ledgers $L_{i \in [i_3-m]}$ for $ m  \in [i_3]$. We will upper bound the coins that $p$ can spend in these conflicting blocks.  First, in $L_{p_j}$, there are finalized up to $\frac{f}{(f-m)(f-i_2)}D <  A(m) = \frac{f}{(f-m)(f-\frac{f-1}{2})}D$. For the conflicting ledgers $L_{i \in [i_3-m]}$, we can show that there are at most $m+1$ correct validators and at least $f-m$ rational validators per fork, with the same analysis done in~\ref{lemma:bound1}. 
    Thus, each validator can double-spend up to $ \frac{f+1- i_1^* - i_2^* - m}{f-m}D = (1+ \frac{1- i_1^* - i_2^*}{f-m})D \leq  B(m) = 1  - \frac{3(f + 2)}{4(f - m)})D$. Now we need to upper bound $C(m) = A(m) + B(m) = (1- \frac{3f^2 + f + 6}{4(f - m)(f + 1)} )D < D$ since 
 $ m  \in [f+1- i_1^* - i_2^*] $ and $f>0$.


\end{proof}

Finally, in Lemmas~\ref{lemma:nonPosUtility},~\ref{lemma:neverFork}, we conclude that rational validators have no incentive to engage in forking attacks, as any potential gain does not exceed their staked amount, which is subject to slashing. 

\begin{lemma}\label{lemma:nonPosUtility} 
When correct validators use the i) Stake-bounded finalization rule, or ii) the Strong Chain finalization rule, or iii) the Strongest Chain finalization rule, for any rational validator, following the protocol weakly dominates participating in a coalition to fork the system. 
\end{lemma}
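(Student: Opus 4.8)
The plan is to compare the two branches of the utility function $u_p^r$ defined in the \emph{Utility of rational validators} section directly. Fix an arbitrary rational validator $p$ and an arbitrary round $r$, and model the round as a normal-form game. Following the protocol is the strategy $s_\Pi$ with $b_{p,r}=0$, whose payoff is $R(\sigma^L_{(p,r)}) + I(\sigma^L_{(p,r)})$, while any forking strategy is some $s \in S_{f,p}$ with $b_{p,r}=1$, whose payoff is $\sum_{\mathcal{F} \in \mathcal{F}_R} b_{\mathcal{F},p,r}\,\mathcal{M}(\mathcal{F}) - P$. To establish $(n,k,f^*)$-weak domination, I would upper bound the forking payoff by $0$ for every adversarial profile $s_{-p}$, lower bound the following payoff by $0$, and then exhibit one profile where the gap is strict.

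For the upper bound on the forking payoff, I would invoke whichever finalization rule is in force. Under the Stake-bounded rule, Lemma~\ref{lemma:bound1} bounds the total amount $p$ can double-spend across all conflicting ledgers it joins by the stake $D$; under the Strong Chain or Strongest Chain rule the same bound follows from Lemma~\ref{lemma:collateralStrongBlocks} or Lemma~\ref{lemma:collateralStrongestBlock}. By the convention fixed in the utility model (each coalition $\mathcal{F}$ that $p$ joins yields exactly $\mathcal{M}(\mathcal{F})$ double-spent coins), the realized gain equals $\sum_{\mathcal{F} \in \mathcal{F}_R} b_{\mathcal{F},p,r}\,\mathcal{M}(\mathcal{F})$, so this sum is at most $D$. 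Next, since $p$ participates in a coalition that forks the system, Lemma~\ref{lemma:successfulslashing} guarantees that every correct validator detects the equivocation within $\Delta^*$ and blacklists $p$ before its two-stage withdrawal (which requires $\Delta_W > \Delta^*$) can complete; hence $p$ forfeits its entire stake, i.e.\ $P = D$. Combining gives $u_p^r(s,s_{-p}) = \sum_{\mathcal{F}} b_{\mathcal{F},p,r}\,\mathcal{M}(\mathcal{F}) - P \le D - D = 0$ for every $s_{-p}$.

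For the lower bound on the following payoff, note that when $b_{p,r}=0$ the validator behaves honestly with respect to liveness and therefore accepts no bribe, so $I(\sigma^L_{(p,r)}) = 0$, while participation rewards are non-negative, giving $u_p^r(s_\Pi, s_{-p}) = R(\sigma^L_{(p,r)}) \ge 0$ for every $s_{-p}$. Together with the previous step this yields $u_p^r(s_\Pi, s_{-p}) \ge 0 \ge u_p^r(s, s_{-p})$ for all $s_{-p}$, which is the weak-inequality requirement. For the required strict profile, it suffices to exhibit one $s_{-p}$ where the forking payoff is strictly smaller: any profile in which the coalition produces fewer than the maximal $f+1$ conflicting ledgers makes $p$'s realized double-spend strictly less than $D$ while $p$ is still slashed for equivocation, so $u_p^r(s, s_{-p}) < 0 \le u_p^r(s_\Pi, s_{-p})$ (alternatively, the tight configuration of Lemma~\ref{lemma:bound1} together with a reward window satisfying $P_{W_k} \ge n-f$ makes $R(\sigma^L_{(p,r)}) > 0 = u_p^r(s,s_{-p})$).

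I expect the main obstacle to be the bookkeeping that links the per-coalition terms $\mathcal{M}(\mathcal{F})$ in the utility function to the single global bound $D$ supplied by Lemmas~\ref{lemma:bound1}, \ref{lemma:collateralStrongBlocks} and \ref{lemma:collateralStrongestBlock}: one must argue that a rational validator splitting its equivocations across several simultaneous coalitions cannot extract more in aggregate than the per-validator cap those lemmas establish, which is exactly where the per-ledger stake-volume caps and the $f+1$-fork counting are needed. A secondary subtlety is confirming that the penalty is always the full stake $P = D$ rather than something smaller, which rests on the ordering $\Delta_W > \Delta^*$ and on Lemma~\ref{lemma:successfulslashing} holding uniformly across every conflicting ledger $p$ might use to attempt an early withdrawal.
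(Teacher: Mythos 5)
Your proposal is correct and takes essentially the same route as the paper's proof: it caps the double-spend at $D$ via Lemmas~\ref{lemma:bound1}, \ref{lemma:collateralStrongBlocks}, and~\ref{lemma:collateralStrongestBlock} under the respective rules, pins the penalty at the full stake via Lemma~\ref{lemma:successfulslashing}, concludes that forking yields utility at most $0$ while following yields non-negative utility, and obtains strictness from a sub-maximal fork configuration. The only point where you are vaguer than the paper is the strict profile: the paper exhibits it explicitly ($f$ Byzantine plus $\lceil (f+1)/2 \rceil$ rational validators forming two conflicting ledgers, each $i$-strong with $i < f/4$ so that every finalization rule caps the per-ledger volume at $D$, giving each colluding rational validator at most $D/\lceil (f+1)/2 \rceil - D < 0$), whereas you assert that any profile with fewer than $f+1$ forks gives a strictly sub-$D$ per-validator double-spend --- a true claim, but one that requires precisely this quorum-counting argument (fewer ledgers forces more rational validators per quorum to share a smaller pot), which the cited lemmas by themselves do not supply.
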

\begin{proof}

Let us fix any rational validator $p$ and compute its utility to join coalitions forking the system. 
In Lemmas~\ref{lemma:bound1},~\ref{lemma:collateralStrongBlocks},~\ref{lemma:collateralStrongestBlock}, we showed that by performing the attack, $p$ can double-spend at most $D$ coins when validators employ i) finalization rule presented in \ref{sec:protocol}, ii) strong ledger finalization rule, iii) strongest finalization rule respectively. 
According to Lemma~\ref{lemma:successfulslashing}, $p_j$ cannot withdraw its stake $D$ before the conflicting ledgers are detected. Moreover, upon detecting equivocation the system halts and thus misbehaving validators cannot utilize further bribes. 
Therefore, for any strategy $s_p \in S^p_{\mathsf{f}}$ and $s_{-p} \in S^{-p}_{\mathsf{f}}$, the utility of $p$ is
$u_p(s_p,s_{-p}) \leq 0 $, while following the protocol yields non-negative utility. 

To show that following the protocol weakly dominates the strategy of forking the system, we will construct a configuration where $p$ chooses a strategy $s_p \in S^p_{\mathsf{f}}$ and fix the rest strategies $s_{-p} \in S^{-p}_{\mathsf{f}}$ such that $ u_p(s_p, s_{-p}) <0$ (this is sufficient since following the protocol has non-negative utility). To this end, suppose that $p$ participates in a coalition $\mathcal{F}$ consisting of $f$ Byzantine and $\lceil \frac{f+1}{2} \rceil$ rational validators. Moreover, consider a subset $H_1$ of $\lceil \frac{f+1}{2} \rceil $ correct validators, and a subset $H_2$ distinct to $H_1$ that consists of $\lfloor \frac{f+1}{2} \rfloor $ correct validators. Validators in $\mathcal{F}$ construct to conflicting ledgers $T_1$ and $T_2$. 
The utility of $p$ in that case is $u_p(s_p, s_{-p}) \leq \frac{D}{\lceil \frac{f+1}{2} \rceil} - D < 0$.

\end{proof}

\begin{lemma}\label{lemma:neverFork}
    A fork of the system will never occur. 
\end{lemma}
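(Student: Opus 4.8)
The plan is to argue by contradiction, combining a quorum-intersection counting argument with the weak-dominance result of Lemma~\ref{lemma:nonPosUtility}. Suppose, toward a contradiction, that a fork occurs at some round, i.e., there exist two conflicting ledgers. By definition this requires two conflicting blocks $b_1 \neq b_2$ extending a common ledger, each carrying a valid certificate of $q = 2f+1$ signatures. A validator whose signature appears in both certificates equivocates; since $b_1$ and $b_2$ conflict, every \emph{non}-equivocating validator contributes its signature to at most one of the two certificates.

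First I would count distinct signatures. Letting $E$ denote the number of equivocating validators and recalling $n = 3f+1$, the number of signature slots that can possibly be filled is at most $(n-E)\cdot 1 + E\cdot 2 = n + E$, while two certificates require $2(2f+1) = 4f+2$ filled slots. Hence $4f+2 \le 3f+1 + E$, giving $E \ge f+1$. Correct validators follow $\Pi$ and, by Lemma~\ref{proofs:benefitOfFork1} and the finalization rules, never vote for a block conflicting with one they have already witnessed, so they never equivocate. Since the Byzantine adversary controls at most $f^* \le f$ validators, the $f+1$ equivocators cannot all be Byzantine; therefore at least $f+1-f^* \ge 1$ rational validators must equivocate, i.e., at least one rational validator must participate in some coalition $\mathcal{F} \in \mathcal{F}_R$ that forks the system.

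Next I would invoke the incentive analysis. By Lemma~\ref{lemma:nonPosUtility}, under each of the three finalization rules (Stake-bounded, Strong Chain, and Strongest Chain), following the protocol $(n,k,f^*)$-weakly dominates participating in any forking coalition, for every rational validator and every aligned strategy profile of the remaining coalition members. By our modeling assumption on rational validators---that a validator deviates from the default strategy only when this yields \emph{strictly} greater utility---a weakly dominated action is never selected. Consequently, no rational validator ever sets any indicator $b_{\mathcal{F},p,r}$ to $1$, so no rational validator participates in a forking coalition. This contradicts the conclusion of the counting argument that at least one rational equivocator must exist, and hence no fork can occur in any round.

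The main obstacle is bridging the \emph{per-round} weak-dominance statement of Lemma~\ref{lemma:nonPosUtility} to the \emph{global}, all-rounds claim, while ensuring rational cooperation is genuinely \emph{necessary} (not merely possible) for a fork---which is exactly what the strict inequality $E \ge f+1 > f \ge f^*$ secures. A related subtlety is that a rational validator might attempt to hedge across several coalitions, or a fork might be assembled over multiple rounds; I would handle this by applying the weak-dominance argument directly to the full indicator vector $\mathbf{b}_{(p,r)}$, which by construction ranges simultaneously over all coalitions $\mathcal{F} \in \mathcal{F}_R$, and by noting that once equivocation is detected the system halts (Lemma~\ref{lemma:successfulslashing}), so no future round can provide the rational validator a compensating payoff that would overturn the domination.
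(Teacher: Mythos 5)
Your proof is correct and follows essentially the same route as the paper's: your signature-counting bound $E \ge f+1$ is exactly the quorum-intersection argument the paper uses to exhibit a non-Byzantine (hence rational) equivocator, and both proofs then obtain the contradiction from Lemma~\ref{lemma:nonPosUtility} combined with the modeling assumption that rational validators deviate only for strictly greater utility. Your closing remarks on bridging per-round dominance to all rounds and on hedging across coalitions via the indicator vector are refinements the paper leaves implicit, but they do not alter the structure of the argument.
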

\begin{proof}
    For the sake of contradiction assume that two distinct correct validators $p_1$, $p_2$ finalize the conflicting blocks $b_1$ and $b_2$ respectively. Validator $p_1$ finalized $b_1$ after witnessing the respective valid certificate $QC(b_1)$, and $p_2$ finalized $b_2$ after witnessing $QC(b_2)$. 
By quorum intersection there exists a validator $p$ that is not Byzantine and has voted in both $QC(b_1)$ and $QC(b_2)$. Since correct validators do not equivocate, $p$ must be rational. However, according to Lemma~\ref{lemma:nonPosUtility}, for rational validators, following the protocol weakly dominates participating in a coalition that forks the system. So, $p$ will not sign for conflicting blocks and thus we reached a contradiction.

\end{proof}

\subsection{Lemmas for proving liveness} \chris{extend?}

Since \( \Pi \) is \( (n, f) \)-resilient and \( f + 1 + k \geq n - f \), liveness is ensured as long as rational validators participate in liveness-preserving actions during some intervals. 
In Lemma~\ref{lemma:livenessWeak}, we show that there exist windows during which it is a dominant strategy for rational validators to follow the protocol specifications. Building on this, Lemma~\ref{lemma:liveness} concludes that some correct validator will successfully propose a block that gathers a valid certificate and finality votes.


\chris{consistent notation}
\begin{lemma} \label{lemma:livenessWeak} 
For any round $r$, there is a leader window $W_k$ starting at some round $r' \geq r $, such that following the protocol in every round of $W_k$ is a weakly dominant strategy for any rational validator. \end{lemma}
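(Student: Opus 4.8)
The plan is to reduce the statement to the existence of a single \emph{bribe-free} leader window after round $r$ and then to argue that, in any such window, following the protocol is weakly dominant. Throughout I invoke Lemma~\ref{lemma:neverFork} to assume no rational validator joins a fork, so that a rational validator's per-round utility collapses to $R(\sigma^L_{(p,r)}) + I(\sigma^L_{(p,r)})$: its participation reward plus any bribe received for abstaining from a liveness-preserving action (recall that following the protocol means signing every valid block, so $I = 0$ on the honest liveness strategy, whereas a positive $I$ requires $p$ to refrain from signing). The two pieces to establish are (i) that a bribe-free window implies weak dominance, and (ii) that a bribe-free window must occur after any round $r$.

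For (i), fix a window $W_k$ in which the Byzantine validators cannot credibly offer any rational validator a bribe, so $I(\sigma^L_{(p,r)}) = 0$ for every rational $p$, every $r \in W_k$, and every strategy. Then a rational validator's utility equals its participation reward $R(\sigma^L_{(p,r)})$. Since proposing, voting, and committing finality votes are costless by the model assumptions, and since by the reward rule $p$ collects its share $F_{W_k}/n$ only if its block is finalized, the window clears the threshold $P_{W_k} \ge n-f$, and $p$ has committed the required finality votes, I would show that, for every fixed profile $s_{-p}$, the actions P (or V) and CV can only weakly increase $R$: withholding a proposal or a finality vote forfeits $p$'s own reward, and withholding votes from others' valid blocks can only decrease $P_{W_k}$ and thus jeopardise the reward threshold, never helping $p$. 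Hence following yields at least the utility of any deviation. The strict inequality of Definition~\ref{weaklyDominate} comes from the profile where all correct and all other rational validators follow the protocol: then $p$'s block is finalized and $P_{W_k}\ge n-f^*\ge n-f$, so following gives $F_{W_k}/n>0$ while, e.g., NCV gives $0$. This makes following $(n,k,f^*)$-resilient weakly dominant in $W_k$, and since the whole window is bribe-free it holds in every round of $W_k$, covering both liveness and client-liveness actions.

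For (ii), the structural fact I would use is the reward rule's guarantee (as argued in the protocol description): rewards are distributed in a window only if $P_{W_k}\ge n-f$, which forces at least one \emph{correct} validator to have a finalized proposal there. Because Byzantine validators hold no liquid coins other than participation rewards, their bribing budget can only be replenished in such windows, i.e., in windows where a correct proposer succeeds. Since bribes are credible only when backed by on-chain coins (assumption (d)) and an inducing bribe is realized only when the silencing of correct proposers actually succeeds, every window in which the Byzantine successfully silence all correct proposers is necessarily unrewarded: they pay a strictly positive bribe (at least one indivisible coin unit) and earn nothing. Thus their finite budget strictly decreases across any run of fully-silenced windows with no replenishment. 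Arguing by contradiction, if no window after $r$ were bribe-free, the Byzantine would have to silence every correct proposer in every window after $r$, draining the bounded budget below the minimum bribe in finitely many windows and reaching a window where no rational can be bribed — a contradiction. Hence a bribe-free window $W_k$ with starting round $r'\ge r$ exists, and by part (i) following the protocol is weakly dominant there.

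The main obstacle is precisely the depletion step: I must rule out the adversary sustaining bribery forever by interleaving rewarded windows (to accumulate coins) with silenced windows (to spend them). The resolution to make rigorous is that accumulation requires letting a correct proposer succeed, so any window in which the Byzantine do \emph{not} pay a silencing bribe against every correct proposer is itself either bribe-free or a liveness window; combined with the strictly positive per-window cost of a successful silencing and the finiteness of the coin supply available to the adversary, this bounds the number of windows before a bribe-free one must occur. Care is also needed in stating the minimum-bribe lower bound and in checking that the weak-dominance comparison holds simultaneously for the finality-vote (CV) action, which is required to claim rewards and hence inherits the same monotonicity argument.
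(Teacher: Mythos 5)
Your overall strategy mirrors the paper's proof: your part (i) corresponds to the paper's Case~1 (weak dominance in a window where bribery is ineffective, shown by comparing the payoffs of the P/NP, V/NV, CV/NCV actions), and your part (ii) corresponds to the paper's Case~2 (the adversary's bribing budget comes only from participation rewards, which are distributed only when $P_{W_k} \geq n-f$, so windows in which correct proposers are silenced drain the budget without replenishment). The genuine gap is in part (ii), and it is created by your choice of ``good window''. You require a \emph{fully bribe-free} window, i.e., $I(\sigma^L_{(p,r)}) = 0$ for every rational validator and every strategy, which amounts to the adversary's accumulated budget being essentially zero. The negation of this is merely ``the adversary holds enough coins to credibly offer some bribe''; it does \emph{not} imply that any bribe is paid, let alone that every correct proposer in every window is silenced. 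Your contradiction step (``if no window after $r$ were bribe-free, the Byzantine would have to silence every correct proposer in every window'') is therefore a non sequitur: an adversary that simply hoards its rewards, or spends only in windows where the threshold $P_{W_k}\geq n-f$ is still met (so that its budget keeps replenishing), maintains a strictly positive budget forever, and no fully bribe-free window ever needs to occur. Your own attempted repair --- ``any window without a full silencing bribe is either bribe-free or a liveness window'' --- does not close this: a ``liveness window'' in which the adversary retains a residual budget is not bribe-free, so part (i) says nothing about it, and weak dominance (which quantifies over \emph{all} profiles $s_{-p}$, including counterfactual ones in which that residual budget is offered as a bribe) is not established there.

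The paper sidesteps exactly this by making the good-window condition quantitative rather than absolute: its Case~1 only requires $r_a < f_m$, where $f_m = R_{b_m}/P_{W_k}$ is the minimum per-block reward share of a correct proposer's block in the window. This suffices because of a mechanism your part (i) glosses over: by the reward rule, claiming a block's fee share through a finality vote requires having proposed \emph{or voted for} that block, so abstaining from any vote forfeits at least $f_m$; hence no bribe the adversary can afford (at most $r_a < f_m$) can \emph{strictly} compensate a rational validator, and weak dominance holds even though the adversary's budget is positive. (This forfeited-share mechanism is also what gives strictness against NV-type deviations; your strictness argument, which only invokes the threshold and the NCV action in the all-follow profile, does not cover a deviation that withholds a single vote whose certificate succeeds anyway.) With the weaker target $r_a < f_m$, the depletion argument becomes viable: every window in which bribery actually silences a correct proposer fails the threshold, distributes no rewards, and costs the adversary more than $f_m$ per bribed validator, so such windows are finitely many and Case~1's condition is eventually reached --- the paper's Case~2. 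To repair your proof you would need to replace ``bribe-free'' with this threshold condition and add the forfeited-share comparison to part (i); as written, your reduction only works against adversaries that are obliged to spend their entire budget, which you cannot assume.
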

\begin{proof}






\com{
Let us fix any rational validator $L_i$. We will show that it is a weakly dominating strategy for $L_i$ to follow the protocol with respect to liveness. 

\emph{\( L_i \) is the block proposer.}  Consider the round where $L_i$ is the block proposer. Assume that all previous block proposers \( L_{j \in [i-1]} \) have successfully proposed \( k \in [i-1] \) blocks.  

\begin{itemize}
    \item \textbf{Action P:} \( L_i \) proposes a valid block \( b_{i} \). If \( b_{i} \) gathers a valid certificate and the following conditions hold:  
    \begin{enumerate}
        \item At least \( n-f \) block proposers propose a valid block which gathers a valid certificate within \( W_m \).
        \item At least \( 2f+1 \) finality votes are committed for \( b_{i} \) or a subsequent block with a valid certificate.
    \end{enumerate}
    \( L_i \) has a payoff of at least \(\frac{ R_{b_{i}}}  {P_{W_m}} \). If the conditions are not satisfied, \( L_i \) gets a payoff of $0$.  
    
    \item \textbf{Action NP:} If \( L_i \) does not propose a valid block, it gets a payoff  of $0$ within \( W_m \).
\end{itemize}

Assuming that $L_i$ proposed a valid block, we will also consider the following actions.
\paragraph{ \( L_j \neq L_i \) proposes a valid block \( b_{j} 
\)}  
\begin{itemize}
    \item \textbf{Action \( V \):} \( L_i \) votes for \( b_{j} \). If block \( b_{j} \) gathers a valid certificate and the following conditions hold: 
    \begin{enumerate}
        \item At least \( n-f \) block proposers propose a valid block along with a valid certificate within \( W_m \).
        \item At least \( 2f+1 \) finality votes are committed for \( b_{j} \) or a subsequent block with a valid certificate.
    \end{enumerate}
    \( L_i \) gets an additional payoff of at least \(  \frac{ R_{b_{j}} } { P_{W_m}}  \). 
    
    \item \textbf{Action \( NV_{L_j} \):} If \( L_i \) does not vote for \( b_{j} \), it receives an additional payoff of $0$.
\end{itemize}

\emph{Committing finality votes for any block \( b \) satisfying the validator confirmation.}  
\begin{itemize}
    \item \textbf{Action \( CV_b \):} \( L_i \) commits a finality vote for \( b \). 
    \begin{enumerate}
        \item $L_i$ has either proposed $b$ or has voted to form the respective valid certificate $QC(b)$.
        \item At least \( n-f \) block proposers propose a valid block along with a valid certificate in the leader window where $b$ was proposed. 
        \item At least \( 2f+1 \) finality votes are committed for \( b \) or a subsequent block with a valid certificate.
    \end{enumerate}
    Then, \( L_i \) receives an additional reward of at least \( r = R_b / P_{W_m} \).

    \item \textbf{Action \( NCV_b \):} If \( L_i \) does not vote for \( b \), it receives no additional reward. 
\end{itemize}

}




Assume for contradiction that in every window $W_m$ after round $r$, there exists at least one rational validator who prefers a strategy that deviates from the protocol. Since following the protocol yields a non-negative utility, any such deviating strategy must yield strictly positive utility during window $W_m$. Rational validators can deviate as follows:

\noindent\emph{Forking.} From Lemma~\ref{lemma:nonPosUtility}, any strategy in the strategy profile $S^p_f$ (i.e.,  where $p$ actively contributes to a fork), is weakly dominated by following the protocol. Therefore, no rational validator will engage in forking attacks. 
    
\noindent\emph{Only abstaining from liveness actions.} A rational validator will only choose to abstain if the adversary offers a bribe using their participation rewards from previous rounds that outweighs the utility from honest participation.

Suppose the adversary is able to bribe at least one rational validator in every window $W_m$ following round $r$. In such windows, at most $2f$ validators can successfully propose valid blocks. Since participation rewards are distributed only when at least $2f{+}1$ distinct validators succeed, no rewards are paid out in these bribed windows. Because the adversary’s \emph{budget} is finite, it is eventually depleted and the adversary cannot bribe any rational validator anymore. Therefore, there exists a future window $W_k$ (starting at some round $r' \geq r$) such that, for any validator $p$, any strategy $s_p \in S_p$, and any $s_{-p} \in S_{-p}$ under which $p$ abstains from liveness actions during $W_k$, we have $
u_p(s_p, s_{-p}) \leq 0$.

To show that following the protocol during $W_k$ is a weakly dominant strategy, fix any rational validator $p$, and consider two strategies: $s_1$, where $p$ follows the protocol during $W_k$, and $s_2$, where $p$ abstains from liveness actions during $W_k$ (we have already established the result for forking in~\ref{lemma:nonPosUtility}). Let $s_{-p}$ be a strategy profile in which at least $2f$ other validators follow the protocol during $W_k$. In the combined strategy profile $s_a = (s_1, s_{-p})$, at least $2f+1$ validators follow the protocol, satisfying the reward condition, and $p$ receives the corresponding participation rewards (observe that for rational proposers, censoring votes in their proposals only depletes their utility because of the factor $a_k$ so they will include the votes from $p$ in their blocks). In contrast, under $s_b = (s_2, s_{-p})$, only $2f$ validators follow the protocol, so no rewards are distributed. Hence, $u^{W_k}_p(s_a) > u^{W_k}_p(s_b)$.




\end{proof}

\begin{lemma} \label{lemma:liveness} For any round \( r \), there exists a round \( r^* \geq r \) at which a correct validator proposes a block \( b \) such that:  
\begin{enumerate}
    \item \( b \) is finalized by every correct validator, and  
    \item \( 2f+1 \) finality votes for \( b \) or a subsequent block are included in blocks with valid certificates.  
\end{enumerate}
\end{lemma}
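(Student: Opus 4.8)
The plan is to chain together three results already established: the incentive statement of Lemma~\ref{lemma:livenessWeak}, the broadcast guarantee of Lemma~\ref{lemma:broadcast}, and the no-fork guarantee of Lemma~\ref{lemma:neverFork}. First I would invoke Lemma~\ref{lemma:livenessWeak} on the given round $r$ to obtain a ``good'' leader window $W_k$ starting at some round $r' \geq r$ in which following the protocol with respect to liveness and client-liveness is an $(n,k,f^*)$-resilient weakly dominant strategy for every rational validator. Since in our model rational validators play weakly dominant strategies, every one of the $k$ rational validators proposes a valid block on its turn, votes for valid blocks, and commits finality votes throughout $W_k$. Together with the $n-f^*-k$ correct validators, this gives at least $n-f^* \geq n-f = 2f+1$ validators executing the liveness-preserving actions for the entire window.

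Next I would locate a correct proposer inside $W_k$. Because $k+f^* \leq 2f+1$ and $f^* \leq f$, the number of correct validators is $n-f^*-k \geq f \geq 1$, and since the leader election in $W_k$ is round-robin over all $n$ validators, at least one round $r^* \in W_k$ has a correct proposer; let $b$ be the block it proposes, extending the (unique, by Lemma~\ref{lemma:neverFork}) common ledger. As at least $2f+1$ validators follow the protocol and synchronous periods are long enough for progress, $b$ collects votes from all of them; recalling that $\Pi$ is $(2f+1)$-commitable (so $q = 2f+1$ votes suffice), $b$ obtains a valid certificate $QC(b)$. The correct proposer, and every correct voter, witnesses $QC(b)$, so by Lemma~\ref{lemma:broadcast} all correct validators witness it within $\Delta^*$.

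For part (1), I would finalize $b$ through the stake-bounded rule. By Lemma~\ref{lemma:neverFork} no fork ever occurs, so every correct validator permanently keeps $blacklist = \emptyset$; hence once $b$ becomes $\Delta^*$-old in a correct validator's local ledger, the condition on line~\ref{alg:line:deltafinal} holds and the validator triggers $b.finalize$ on line~\ref{alg:line:finalizeD}. Since all correct validators witness $QC(b)$ within $\Delta^*$ and apply the identical rule, every correct validator finalizes $b$. For part (2), I would use the reward structure: the $\geq 2f+1$ following validators each propose on their turn, so $P_{W_k} \geq n-f$ unique proposers, meeting the reward condition; and to claim the window's reward a validator must commit finality votes, which is exactly the Action $CV$ branch shown weakly dominant in Lemma~\ref{lemma:livenessWeak}. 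Consequently at least $2f+1$ validators commit finality votes for $b$ or a block extending it, and these votes are carried inside the subsequent blocks of the productive window, which themselves gather valid certificates; this furnishes the required $2f+1$ finality-vote proof.

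The main obstacle is the bridge from the game-theoretic statement to the operational guarantee. Lemma~\ref{lemma:livenessWeak} only asserts that liveness-preserving behavior is weakly dominant, so the proof must make explicit that (i) rational agents actually adopt weakly dominant strategies in our model, (ii) the window $W_k$ is genuinely ``good'' in the sense that the adversary's bribing budget accumulated from participation rewards is exhausted (the Case~2 argument inside Lemma~\ref{lemma:livenessWeak}), so Byzantine validators cannot induce any rational validator to deviate during $W_k$, and (iii) these facts jointly force at least $2f+1$ protocol-following validators -- including at least one correct proposer -- to both produce $QC(b)$ and commit the finality votes, which is precisely what parts (1) and (2) demand.
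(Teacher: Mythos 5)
Your proposal is correct and follows essentially the same route as the paper: invoke Lemma~\ref{lemma:livenessWeak} to obtain the ``good'' leader window, observe that round-robin election puts a correct proposer inside it, and conclude that its block gathers the $f+1$ correct plus $f$ rational votes needed for $QC(b)$. In fact your write-up is more complete than the paper's own proof, which stops once the valid certificate is obtained and leaves parts (1) and (2) implicit, whereas you discharge them explicitly via Lemma~\ref{lemma:broadcast}, Lemma~\ref{lemma:neverFork} (empty blacklist, so the $\Delta^*$-old rule fires), and the reward condition forcing the $2f+1$ finality votes.
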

\begin{proof}

According to Lemma~\ref{lemma:livenessWeak}, for any round $r$, there exists a round $r' \geq r$ such that a time window $W_k$ begins, during which it is a weakly dominant strategy for rational validators to follow the protocol. Consequently, some block $b$ proposed by an honest validator during $W_k$ will obtain a valid quorum certificate $QC(b)$ and receive at least $2f + 1$ finality votes (either for $b$ or for one of its descendants).


\end{proof}

\com{
\begin{lemma} \label{lemma:finalityVotes}
    For any finalized block $b$, every correct validator can construct a proof $\pi_f(b)$ including the $2f+1$ finality votes for $b$.
\end{lemma}
\begin{proof}
Consider any finalized block $b$ in the view of a correct validator $p$. That means that $p$ has witnessed a valid certificate $QC(b)$ for $b$. Lemma~\ref{lemma:broadcast} illustrates that every other correct or rational validator will have witnessed $QC(b)$ within $\Delta^*$. Since a fork will never occur, as shown in Lemma~\ref{lemma:neverFork}, all correct validators will eventually finalize $b$ (at least as an $\Delta^*$-old block). In that scenario, the $f+1$ correct validators, by assumption, and the $k = 2f-f^* > f$ rational validators, according to Lemma~\ref{lemma:livenessWeak}, will commit their finality votes in subsequent proposed blocks. Eventually, there will be a path of blocks $[b_1, b_2, ..., b_l]$ extending $b$ with the respective valid certificates, where the $2f+1$ finality votes are including within the blocks $[b_1, b_2, ..., b_l]$.
\end{proof}
}

\subsection{Lemmas for Recovery}

For all the lemmas in this section, assume that an equivocation occurs at time $t^*$\footnote{The first time at which an honest validator observes two conflicting SMR certificates $QC(b)$ and $QC(b')$.}. Moreover, we denote the local time that validator $p$ observed the equivocation by $t^*_p$. Our goal is to show that the set agreement protocol outputs a unique recovery DAG $\mathcal{L}$ that contains every block that may have been accepted by a client before recovery begins.


We first show that the dissemination phase ensures that the extended recovery input of an honest validator is a superset of all the initial recovery inputs of honest validators.

\begin{lemma}\label{lemma:initialRecoverySet}
The extended recovery input of any honest validator $p$ includes the initial recovery input of all honest validators, i.e., $ \hat S_R^{p} \supseteq S_R^{p'}$ for every honest validator $p'$. 
\end{lemma}
\begin{proof}
Fix any honest validator $p'$. Validator $p'$ updates its initial recovery set $S_R^{p'}$ until time $t_{p'}^* + 2\Delta^* \leq t^* + 3\Delta^*$. At time $t_{p'}^* + 2\Delta^*$, $p'$ broadcasts $S_R^{p'}$ to all validators and, thus, every honest validator $p$ receives this message by time $t_{p'}^* + 3\Delta^* \leq t^* + 4\Delta^*$. Since $p$ updates its extended recovery set $\hat S_R^{p}$ until time $t_p^* + 4\Delta^*$, and $t_p^* \geq t^*$, we have $t_p^* + 4\Delta^* \geq t^* + 4\Delta^*$. Therefore, $p$ delivers $S_R^{p'}$ from any honest validator $p'$ before completing the construction of $\hat S_R^{p}$, and consequently includes $S_R^{p'}$ in $\hat S_R^{p}$.

\end{proof}

We now prove that every block accepted by clients has been observed by all honest validators.
\begin{lemma}\label{lemma:clientAccepted}
For every block $b$ accepted by a client, and every honest validator $p$, either:
(i) the finality certificate $\mathcal{F}(b)$ is included in $p$'s $t_p^*$--old finalized ledger $L_p^{<t_p^*}$, or
(ii) the finality certificate  $\mathcal{F}(b)$ or the augmented finality certificate $\hat{\mathcal{F}}(b)$ is included in $p$'s extended recovery input $\hat S_R^{p}$.
\end{lemma}
\begin{proof}
Consider any block $b$ accepted by a client. Since clients accept only blocks carrying a valid finality certificate $\mathcal{F}(b)$, at least one honest validator $p'$ must have voted for some blocks including finality votes for $b$ and has therefore observed either $\mathcal{F}(b)$ or $\hat{\mathcal{F}}(b)$ before detecting the equivocation at time $t_{p'}^*$.

In the former case, where $p'$ has observed $\mathcal{F}(b)$ before $t_{p'}^* \leq t^* + \Delta^*$, then $p'$ broadcasts $\mathcal{F}(b)$ and every honest validator $p$ delives the certificate by time $t^* + 2\Delta^* \leq t_p^* + 2\Delta^*$. In that case, either some finality certificate for $b$ $\mathcal{F}(b)$ is already included in $L_p^{<t_p^*}$ (it could even be signed by a different quorum of validators that the finality certificate that $p'$ observes), or $p$ includes $\mathcal{F}(b)$ in its initial recovery input $S_R^{p}$, and hence to $\hat S_R^{p}$ (since $p$ does not delete any certificate from $S_R^{p}$ to $\hat S_R^{p}$). In the latter case, where $\mathcal{F}(b)$ or $\hat{\mathcal{F}}(b)$ is included in $S_R^{p'}$, by Lemma~\ref{lemma:initialRecoverySet}, we have $\hat S_R^{p} \supseteq S_R^{p'}$ for every honest validator $p'$. Therefore, $\mathcal{F}(b)$ or $\hat{\mathcal{F}}(b)$ is included in $\hat S_R^{p}$ of every honest validator $p$.
\end{proof}

We now prove uniqueness of the recovery output. Even though honest validators may collect different recovery inputs, the protocol ensures that at most one set can obtain a valid 2-phase quorum certificate.
\begin{lemma}\label{lemma:RecoverySafety}
At most one proposal set $S^*$ can obtain a valid 2-phase $QC^2(S^*)$ recovery certificate.
\end{lemma}

\begin{proof} 
For the sake of contradiction that assume that two conflicting proposal sets $S_1$ and $S_2 \neq S_1$, both obtain a valid $QC^2(S_1)$ and $QC^2(S_2)$. 

We will first show that this cannot occur if $S_1$ and $S_2$ were proposed in the same view $u$. Since any majority certificate in the active validator set of every honest validator $p$ contains a vote from an honest validator, for both $S_1$ and $S_2$, there exist two honest nodes $p_1$ and $p_2 \neq p_1$ that sign the 2-phase votes for $Q^2(S_1)$ and $Q^2(S_2)$ at times $t_1$ and $t_2$ respectively. Since $p_1$ broadcast a 2-phase vote for $S_1$, at time $t_1 - 2\Delta^*$ node $p_1$ must already have seen the $Q^1(S_1)$ for $S_1$; $Q^1(S_1)$ contains a vote from at least one honest node, say $p_1^*$, and therefore every honest node must have seen $p_1^*$'s vote by time $t \le t_1 - \Delta^*$. Hence, no honest node votes for any conflicting set proposal $S_2$ after time $t_1 - \Delta^*$. Similarly, because $S_2$ gathers a $Q^2(S_2)$, there must exist a $Q^1(S_2)$ certificate including a 1-phase vote from some honest validator $p_2^*$. From the previous argument, $p_2^*$ must have sent the 1-phase vote before time $t_1 - \Delta^*$ (since after this moment it observes the proposal of $S_1$). However, this means that $p_1$ witnesses the 1-phase vote of $p_2^*$ for $S_2$ by time $t_1$ and does not  broadcast a 2-phase vote for $S_1$. We reach a contradiction.

Also, observe that, if some proposal set $S_1$ gathers a $QC^2(S_1)$, only the proposal $S_1$ gathers a 1-$QC$ certificate $QC^1(S_1)$ for view $u$. That is because, no honest validator has provided a 1-phase vote for a proposal conflicting for $S_1$ before $t_1 - \Delta^*$ (else the honest validator $p_1$ would not provide a 2-phase vote for $S_1$), and every honest validator will observe the proposal for $S_1$ by time $t_1 - \Delta^*$ and thus will not broadcast 1-phase phase vote for any conflicting proposal in view $u$ after that time. We will use this observation to prove the statement for different views. 

In particular, assume that two conflicting proposals for sets $S_1$, $S_2$ obtain $QC^2(S_1)$ (including the 2-phase vote of some honest validator  $p_1$) and $QC^2(S_2)$ (including the 2-phase vote of some honest validator  $p_2$). Also, assume that $S_1$ and $S_2$ were proposed in views $u_1, u_2$ respectively, and w.l.o.g, assume that $u_1 < u_2$ and also that $u_1$ is the proposal with the lowest view which has gathered a 2-QC. As explained previously, in view $u_1$ only $S_1$ has gathered a 1-QC certificate. Now, observe that $QC^2(S_1)$ is signed by $p_1$ only if $p$ observes $QC^1(S_1)$ for $S_1$ by local time $ t^u_{p_1} + 5\Delta^*$, in which case it broadcasts $QC^1(S_1)$ ensuring that every honest validator $p_2$ delivers it by time $t^u_{p_2} + 7\Delta^*$ (remember validators start recovery with up to $\Delta^*$ time difference). Since $p_2$ is still in view $u_1$ at this time, it locks on $QC^1(S_1)$, and will never vote for a conflicting proposal, as it is the proposal with highest lock and no conflicting $1-QC$ certificates exist. 



\end{proof}

We now show termination: at least one valid proposal will obtain $QC^2(S^*)$ which is visible to all honest validators.
\begin{lemma}\label{lemma:RecoveryLiveness}
At least one proposal set $S^*$ will obtain a valid 2-phase $QC^2(S^*)$ recovery certificate, and the certificate $QC^2(S^*)$ is visible to all honest validators by time $O(f\Delta^*)$. 
\end{lemma}
\begin{proof}
Let $u$ be the first view whose leader $l^u$ is honest. If an honest validator has already terminated with output $S$ in some earlier view $u' < u$, then all honest validators eventually output the same set $S$ and terminate. This follows because termination requires observing a valid $CommitQC(S)$, which is subsequently broadcast together with the active validator set justifying it. Thus, we focus on the case where no honest validator has terminated before view $u$.  

During the interval $[t_l^u-\Delta^*, t_l^u+2\Delta^*]$, the honest leader $l^u$ collects the locks and extended recovery sets of all honest validators, and (honestly) updates its proposal $P^u$ according to the update lock rule. The proposal is delivered by every honest validator by time $t_l^u+3\Delta^*$. All honest validators then broadcast $1$-phase votes. Therefore, by time $t_l^u+4\Delta^*$, every honest validator $p$ observes a quorum of $1$-phase votes and forms the same $QC^1(S^u)$. Since all honest validators enter the view within at most $\Delta^*$ of each other, each honest validator $p$ delivers $QC^1(S^u)$ by local time $t_p^u+5\Delta^*$, and 
because the leader is honest, only a single proposal is issued in view $u$. Thus, during the waiting period $[t_l^u+4\Delta^*,, t_l^u+6\Delta^*]$, honest validators observe no conflicting $1$-QC and therefore broadcast $2$-phase votes for $S^u$.
All $2$-phase votes are delivered within an additional $\Delta^*$. Hence, by time $t_l^u+7\Delta^* \leq t_p^u+8\Delta^* $, every honest validator $p$ observes at least $|A|/2$ $2$-phase votes, forms $QC^2(S^u)$, and broadcast a commit message for $S^u$. Finally, every honest validator observes a $CommitQC(S^u)$ for $S^u$, outputs $S^u$ and terminates.

\end{proof}

We now argue that every proposal set that obtains a 2-phase quorum certificate must include the initial recovery input of every honest validators.
\begin{lemma}\label{lemma:honestSuperset}
If a set $S^*$ has obtainted a 2-QC $QC^2(S^*)$, then $S^*$ extends the initial recovery input of all honest validators $p$, i.e., $S^* \supseteq S^p_{0}$ for every honest $p$. 
\end{lemma}
\begin{proof}
    Assume that $QC^2(S^*)$ is formed for some proposal $S^*$. The certificate $QC^2(S^*)$ contains votes from a strict majority of some active validator set $A$. After the equivocation is detected, all honest validators identify at least $f+1$ validators that have signed conflicting votes (by quorum intersection).  Since honest validators never equivocate, all honest validators remain in $A$. Therefore, the active set $A$ of any honest validator consists of at most $2f$ validators, among which at least $f+1$ are honest.  This means that, $QC^2(S^*)$  contains at least one vote from some honest validator, which only votes for $QC^2(S^*)$ upon observing a 1-QC $QC^1(S^*)$ for $S^*$. Similarly, this 1-QC is signed by at least some honest validator $p$. Since this honest validator $p$ only votes for proposals which extends their extended recovery set $\hat S_R^{p}$, it holds that $ S^* \supseteq \hat S_R^{p} $, and by lemma~\ref{lemma:initialRecoverySet} it follows that  $ S^* \supseteq S_R^{p'}$ for all honest validators $p'$.


\end{proof}

Finally, we show that once agreement on $S^*$
is reached, honest validators reconstruct the same recovery DAG.
\begin{lemma}\label{lemma:commonRecoveryLedger}
Let $\mathcal{L}^p_R$ be the result of the union of the  $L_p^{<t_p^*} \cup S^*$ after removing duplicates, where
$L_p^{<t_p^*}$ is the $t_p^*$--old finalized ledger of $p$ and $S^*$ is the output of the set agreement protocol. Then, i) every honest validator $p$ obtains the same DAG $\mathcal{L}_R $, ii) every block accepted by clients is included in $\mathcal{L}_R$. 
\end{lemma}

\begin{proof}
Fix any honest validator $p$. Since $S^*$ is the unique output of the set agreement protocol, it is identical for all honest validators (by Lemma~\ref{lemma:RecoverySafety}). It remains to show that every block $b \in L_p^{<t_p^*}$ is also included in $L_{p'}^{<t_{p'}^*} \cup S^*$, of any honest validator $p'$.

To show that, fix any block $b \in L_p^{<t_p^*}$. Then, $p$ has observed a  finality certificate $\mathcal{F}(b)$ for $b$ before time $t_p^*$, and since $p$ broadcasts $\mathcal{F}(b)$ at time $t_p^* \leq t^* + \Delta^*$, every honest validator $p'$ delivers it by time $t^* + 2\Delta^*$. Now, we have the following cases: i) either all validators $p'$ have already observed some finality certificate $\mathcal{F}(b)$ for $b$ before $t_{p'}^*$, in which case $b \in L_{p'}^{<t_{p'}^*}$, or ii) there exist some honest validator $p'$ that includes $\mathcal{F}(b)$ in its initial recovery set $S_R^{p'}$. Since $S^* \supseteq S_R^{p'}$( Lemma~\ref{lemma:honestSuperset}), it follows that $b \in S^*$.

We now show that every block accepted by a client is included in
$\mathcal{L}_R$. Consider any block $b$ accepted by a client. For block $b$, it holds that, either: (a) $\mathcal{F}(b)$ is included in $L_p^{<t_p^*}$ of every honest validator $p$ ; or
(b) $\mathcal{F}(b)$ or $\hat{\mathcal{F}}(b)$ is included in the
initial recovery input $ S_R^p$ of some honest validator $p$. In case (a), $b$ belongs to the finalized ledger component of
$\mathcal{L}_R$ (by construction of $\mathcal{L}_R$). In case (b), the
initial recovery input of every honest validator is including in $S^*$ by 
Lemma~\ref{lemma:honestSuperset}.

\end{proof}

\subsection{Final Theorems}

\begin{theorem}\label{theorem:totalordering}
Validators running the composition of an $(3f+1,f)$-resilient,  $(2f+1)$--commitable SMR protocol $\Pi$ and \(\sys\) agree on a total order in all executions with $f^* \leq f$ and $k \leq 2f - f^*$.
\end{theorem}
\begin{proof}
\emph{Safety.} In Lemma~\ref{lemma:neverFork}, we show that validators will never finalize conflicting blocks. 

\emph{Liveness.} Consider any transaction $tx$ witnessed by all correct validators at some round $r$. According to Lemma~\ref{lemma:liveness}, there is a round $r^*\geq r$ where a correct validator $p$ will successfully propose a valid block $b$. If $tx$ has not already been included in a valid block proposed before round $r^*$, then, $p$ will include $tx$ at block $b$.


\end{proof}

\begin{theorem}\label{theorem:certifiable}
The composition of an $(3f+1,f)$-resilient,  $(2f+1)$--commitable SMR protocol $\Pi$ and \(\sys\)  satisfies certifiability in all executions with $f^* \leq f$ and $k \leq 2f - f^*$.
\end{theorem}
\begin{proof}
\emph{Client-safety.}
Consider any client $c$. We will show that any transaction $tx$ for which $c$ has accepted an inclusion proof is included in the transaction ledger of at least one correct validator. 

To this end, consider the inclusion proof $\pi_{tx}$ for the transaction $tx$. Client $c$ has accepted $\pi_{tx}$ only after receiving a proof showing the inclusion of $tx$ in a block $b$, where $b$ has $2f+1$ finality votes. At least one correct validator $p$ has committed a finality vote for $b$. Validator $p$ commits finality votes only for blocks included in its transaction ledger.

\emph{Client-liveness.} 
For every transaction \( tx \) committed to a block \( b \) accepted by all correct validators, any correct validator \( p_i \) can construct a proof \( \pi \) as follows. Validator \( p_i \) sends an inclusion proof \( \pi_{tx} \) of \( tx \) in \( b \) to a client \( c \), along with a proof \( \pi_f(b) \) consisting of \( 2f+1 \) finality votes for \( b \) or subsequent block, which is feasible according to Lemma~\ref{lemma:liveness}. Any client \( c \) will accept the proof $\pi$ by definition of the client confirmation rule.

\end{proof}

\com{
\begin{proof}
For the following proof let $T$ be the transaction ledger output of the composition of the SMR protocol $\Pi$  and $\sys$.
\emph{Client-safety:}
Consider any client $c$ and let $\mathcal{A}$ be the set inclusion proofs that $c$ has accepted. We will show that for any transaction $tx$ for which $c$ has accepted an inclusion $tx$ is included in $T$ and therefore $c$ has received the respective coins. In that way, we will show that the loss function of $c$ is non-positive, i.e., $L_c(T, \mathcal{A}) =  vol_{\mathcal{A}} - r^T_c$.

Consider any inclusion proof $\pi_{tx} \in \mathcal{A}$ for a transaction $tx$. Client $c$ has accepted $\pi_{tx}$ only after receiving a proof showing the inclusion of $tx$ in a block $b$, where $b$ has $2f+1$ finality votes. At least one correct validator $p$ has committed a finality vote for $b$. This observation along with the fact that a conflicting block never exists as proven in Lemma~\ref{lemma:neverFork}, means that block $b$ is included in $T$. This holds for any $\pi \in \mathcal{A}$
and therefore $  vol_{\mathcal{A}} \leq r^T_c \Rightarrow L_c(T, \mathcal{A}) \leq 0 $.

\emph{Client-liveness:} 
For every finalized block $b$, any correct validator $p_i$ can provide a proof $\pi_f(b)$ including $2f+1$ finality votes for $b$, as shown in Lemma~\ref{lemma:finalityVotes}. Now, for every $tx$ committed to a block $b$ in $T$, $p_i$ can construct a proof $\pi$ as follows. Validator $p_i$ sends to a client $c$ an inclusion proof $\pi_{tx}$ of $tx$ in $b$ along with the proof $\pi_f$. Any non-faulty client $c$ will accept the proof.

\end{proof}
}


\chris{fix}
\begin{theorem}\label{th:recovery}
The composition of an $(3f+1,f)$-resilient,  $(2f+1)$--commitable SMR protocol $\Pi$ and \(\sys\) regains economic restitution with recovery parameter $\Delta_R = O(f \Delta^*)$ in executions with 
$f^* < 2n/3$.
\end{theorem}
\begin{proof}

Assume an equivocation on the client side occurs, i.e., conflicting blocks containing finality votes are formed, and let time \(t^*\)\footnote{We fix the local clock of any participant, as only elapsed time matters.} be the first time of an equivocation. After witnessing an equivocation, validators agree on a common DAG $\mathcal L_R$ (lemmas~\ref{lemma:RecoveryLiveness},~\ref{lemma:commonRecoveryLedger}). To execute the blocks of $\mathcal L_R$, each validator returns to the state of its last \(2\Delta^*\)-finalized block; observe that by construction of $\mathcal L_R$ the last \(2\Delta^*\)-finalized block of each honest validator is a prefix of $\mathcal L_R$. Moreover, by Lemma~\ref{proofs:benefitOfFork1}, no block conflicting with this block exists.  
Using the same deterministic rule (e.g., ordering by the hash of the last finalized block), all validators execute transactions from the conflicting ledgers in the same order.

Any transaction issued by an honest sender is executed successfully, since the sender holds the corresponding coins in the respective ledger. Transactions whose recipients lack sufficient balance correspond to double-spent coins. By Lemmas~\ref{lemma:bound1},~\ref{lemma:collateralStrongBlocks},~\ref{lemma:collateralStrongestBlock}, during the interval \([t,\, t+\Delta^*]\), at most \(fD\) coins are double-spent in total. Because, all honest validators identify (the same) at least \(f+1\) misbehaving validators, these coins are covered by the stake of misbehaving validators which is in total $(f+1)D > fD$ (ensuring monetary conservation); any remaining stake of misbehaving validators is burned (line~\ref{alg:line:Slash}) guaranteeing accountability. 

Thus, all honest validators compute the same resulting state and sign the same state commitment by time $O(f\Delta^*)$ ensuring deterministic recovery. In particular, at least \(f+1\) honest validators sign the state commitment which they broadcast to every validator (line~\ref{alg:line:broadcastCmt}). Thus every honest validator will observe the verifiable state commitment by time $O(f\Delta^*)$. In the resulting state, all affected clients are reimbursed and all misbehaving validators are penalized.

\end{proof}

\com{
Consider the subset of validators $\mathcal{F_R}$ misbehaving, i.e., forking the system at some time $t$, and assume there are extra $\gamma = f-m-i^*$ forks, with $m \in [f-i^*]$ since there is at least one honest validator per fork while at least $1+i^*$ correct validators are participating at the ledger $p_1$ views. Since $p_i$ finalizes at $\frac{f}{f-i^*}C$, it must be $i^*>\frac{f}{4}$. We will show that the amount of coins that each rational validator has double spent in the rest forks is bounded by $C$.

At most $m+1$ correct validators and therefore at least $f- m$ rational validators participate in each ledger of the fork. We prove that by contradiction. Assume there is a fork with $m+2$ correct validators. Therefore there are in total $i^*+1$ in $C_{p_i}$, $m+2$ with that fork, and at least one correct validator in the rest forks leading to $f+2$ correct validators in total. Moreover, since there are at most $m+1$ correct validators per fork, in every fork, correct validators finalize at most $\frac{f}{f-m}C$ coins per fork (by employing the \emph{strong block finalization rule}).

Therefore, there are $f-m-i^*$ forks with at most $f-m$ rational validators to share utility of at most $\frac{f}{f-m}C$ per fork, so the total utility is bounded by $max_{\{ m \in [f-i^*] \}}\frac{(f-m-i^*)f}{(f-m)^2}C$ which is bounded by  $C$ for $i^* \in \{\frac{f}{4}, ..., f\}$. \chris{I'm not writing the calculations here} $\square$
}

\chris{fix}

\com{
\subsection{Strongest ledger}

\begin{lemma}\label{th:strongestledger}
    Given a ledger $\gamma_b$ that accumulated $\geq 2f + \frac{f}{2} + 1$, no other ledger $\gamma_b'$ can have accumulated an equal or greater number of signatures.
\end{lemma}
\begin{proof}
    Given $N=3f+1$ total validators consisting of $f$ faulty validators, $f=k$ rational validators, and $f+1$ correct validators, there are up to $f+k=2f$ validators that might vote on conflicting ledgers $\gamma_b$ and $\gamma_b'$. 
    Given that $\gamma_b$ has accumulated $2f + \frac{f}{2} + 1$ votes, at least $\frac{f}{2} + 1$ of these votes must stem from correct validators. Consequently, at most $\frac{f}{2}$ correct validators remain that could vote for any competing ledger $\gamma_b'$.
    Therefore, $\gamma_b'$ cannot accumulate more votes than $\gamma_b$. If it did, it would imply that at least one correct validator must have voted for both $\gamma_b$ and $\gamma_b'$, which leads to a contradiction because correct validators follow the protocol and vote for only one ledger. 
\end{proof}

\begin{theorem}\label{th:strongestledgerutil}
    Given Lemma~\ref{th:strongestledger} and up to $\Gamma+1$ conflicting ledgers no rational validator $p_j$ can extract any benefit beyond $C$ if at most one ledger $\gamma$ approves transactions of any value $\beta > C$.
\end{theorem}
\begin{proof}
    The benefit from forking, as proven in Theorem~\ref{th:forkutility} arises from a rational validator $p_j$ spending the same money $\beta$ on multiple conflicting ledgers. 

    Following Lemma~\ref{th:strongestledger}, a correct validator \( p_i \) can determine if it is on the strongest ledger \( \gamma \) and approve transactions of arbitrary value \( \beta \). No other correct validator \( p_i' \) on a conflicting ledger \( \gamma' \) will be able to deem itself on the strongest ledger and can thus approve transactions of at most \( C < \beta \).

    Following Lemma~\ref{th:strongestledger} a correct validator $p_i$ can determine if it is on the strongest ledger $\gamma$ and approve transactions of arbitrary value $\beta$. Furthermore, no other correct validator $p_i'$ on a conflicting ledger $\gamma'$ will be able to deem itself on the strongest ledger and can thus approve transactions of at most $C < \beta$. As such, even if a rational validator $p_j$ is able to spend $\beta$ on $\gamma$, it can only spend $C$ on any fork $\gamma'$. Therefore, the validator is effectively double spending at most $C$, and no additional benefit beyond $C$ can be extracted.
\end{proof}
}

\section{Impossibility results} \label{sec:impossibilities}

\chris{fix}
\subsubsection*{Proof of Theorem~\ref{impossibility:synchrony}}
\begin{proof}
For the sake of contradiction assume that there is a q--commitable SMR protocol $\Pi$ that is $(n, k, f)$-resilient for $f\geq \lceil h/2 \rceil $, where $h=n-k-f$ is the number of correct validators. 
Assume that each validator has deposited $D \in \mathbf{N}$ stake to participate in $\Pi$ and assume any period $\Delta_W$ where a validator can withdraw their stake.
 

We will construct a run where colluding with Byzantine validators to create valid certificates for conflicting blocks yields greater utility  for rational validators. 


\emph{Double spending attack.}
Since $ f \geq \lceil h/2 \rceil $, it holds that $ f + k + \lfloor h/2 \rfloor  \geq  q$. In that case, a malicious validator can partition correct validators in two distinct subsets $H_1$ and $H_2$ monitoring two conflicting ledgers $ T_c \cup T_1$ and $ T_c \cup T_2$ respectively, where $T_c$ is the common part of the ledgers. Assume the attack takes place during an interval 
\footnote{We can use a reference point the local clock of any correct validator.}
$[t,t']$ where $t'> t + \Delta_W$  and let $\mathcal{F}$ be the set of the $f+k$ misbehaving validators. 

We will construct $T_1$ and $T_2$ as follows. There is a partition of the clients in two sets $C_1, C_2$, and a set of transactions $Tx = Tx_1 \cup Tx_2$ input to the SMR protocol $\Pi$, where a set $\mathcal{R}$ of at least $q - f - min(|H_1|, |H_2|)$ 
\footnote{If \( q - f - \min(|H_1|, |H_2|) = 0 \), meaning there are enough Byzantine validators to carry out the attack, the analysis follows trivially.}
rational validators spend all their liquid coins $\epsilon >0$ in $Tx_1$ to a subset of clients in $C_1$ and the same $\epsilon>0$ coins in $Tx_2$ to a subset of clients $C_2$. Moreover, in $Tx_1$ and $Tx_2$, there are included all necessary transactions so that validators can withdraw their stake.


We will show that such an attack is feasible. That is, if rational nodes in $\mathcal{R}$ perform the attack, it leads to a safety violation. Consider a correct validator $p_1 \in H_1$ and any correct validator $p_2 \in H_2$. Since $\Pi$ solves SMR, it guarantees liveness, and therefore, every transaction in $Tx$ is committed to the ledger of $p_1$ and $p_2$ 
\footnote{We assume that correct validators gossip valid transactions, so the transaction will eventually be witnessed by every correct validators.}
. We prove that $p_1$ and $p_2$ will finalize the blocks in the conflicting ledgers $T_1$ and  $T_2$ respectively, using an indistinguishability argument.  


\emph{World 1.}
(GST = 0.) Validators in $H_1$ and $F$ are correct. Validators in $H_2$ are Byzantine and do not participate in the protocol. The input to $\Pi$ are the transactions in $Tx_1$. Validator $p_1 \in H_1$ witnesses blocks in $T_1$ 
\footnote{We mean that $p_1$ sequentially finalizes blocks in $T_1$.}
with quorum certificates of $n-f$ votes.

\emph{World 2.} (GST = 0) This is symmetrical to World 1. Now, validators in $H_2$ and $F$ are correct. Validators in $H_1$ are Byzantine and do not participate in the protocol. The input to $\Pi$ are the transactions in $Tx_2$. Validator $p_2 \in H_2$  witnesses blocks in $T_2$ along with the quorum certificates of $n-f$ votes.

\emph{World 3.}
($GST = t'$). Validators in $F$ perform the Double Spending Attack, as explained before, until $GST$. First, the adversary partitions the network between validators in $H_1$ and $H_2$, such that messages sent by validators in $H_1$ are received from validators in $H_2$ only after $GST$ happens, and vice versa. The input to $p_1 \in H_1$ are the transactions in $Tx_1$ and the input to $p_2 \in H_2$ are the transactions in $Tx_2$. Then, every validator in $F$ simulates two correct validators. The first one simulates the execution of World 1 to any validator $p_1 \in H_1$ and the second one simulates the execution of World 2 to any validator $p_2 \in H_2$.

In the view of $p_1$ World 1 and World 3 are indistinguishable. Thus, $p_1$ finalizes blocks in $T_1$ after $T_c$. In the view of $p_2$ World 2 and World 3 are indistinguishable. Therefore, $p_2$ finalizes blocks in $T_2$ after $T_c$ leading to a safety violation.

Then, to argue that the attack is profitable for rational nodes we will show that \emph{clients will accept the inclusion proofs for conflicting blocks}.  Since $\Pi$ satisfies client-liveness, there must be a rule so that clients in $C_1$ (or $C_2$) can receive inclusion proofs for blocks in $T_c$. We will show that any rule $\rho$ that satisfies client-liveness, leads to clients accepting the inclusion proofs for the conflicting blocks in the double spending attack.

In q--commitable protocols, the rule $\rho$ can either depend on some time bound $\Delta$ or on a certificate of votes. However, we show that rule $\rho$ cannot depend on a time bound in the partial synchronous model using an indistinguishability argument. Consider the time bound $\Delta$ after GST. Moreover, consider any rule $\rho_1$ according which clients wait for the message delay $k\Delta$ for any $k \in \mathbb{Z}$ to receive a proof of equivocation  and, w.l.o.g., let us fix some client $c_1 \in C_1$. 

\emph{World A.}
(GST=0). The next blocks with valid certificates extending $T_c$ are the blocks in $T_1$. Validator, $p_1$ commits blocks in $T_1$ in its local ledger.
Client $c_1$ receives a valid certificate for any block $b \in T_1$ from $p_i$ and waits for $k\Delta$. Since there are not conflicting blocks in the system, client $c_1$ does not receive a proof of equivocation.


\emph{World B.} ($GST > t'+ k\Delta)$.
Validators in $\mathcal{F}$ perform the double spending attack at time $t$. Validator $p_1\in H_1$ finalizes blocks in $T_1$ to extend $T_c$ and validator $p_2\in H_2$ finalizes blocks in $T_2$ to extend $T_c$. Validator $p_1$ sends the inclusion proof for blocks in $T_1$ to $c_1$. Client $c_1$ waits for $k\Delta$. At $t' + k\Delta < GST$, $c_1$ does not receive any message from validators in $H_2$ and does not witness conflicting blocks.

Client $c_1$ cannot distinguish between World A and World B and therefore, the rule $\rho_1$ is not sufficient. Next, consides certificates of stages of at least $t$ votes. since the $f$ Byzantine validators might never participate, $t \leq n-f $. Now, consider any rules $\rho_a$ and $\rho_b$, which are receiving stages of at least $t_1 = n-f$ votes and $t_2$ votes for some $t_2 \in [1, n-f)$ respectively. Any certificate validating $\rho_1$ also validates $\rho_2$. Therefore, an inclusion proof consisting of stages of $t_1 = n-f \geq q$ votes is the strongest certificate that clients can receive, and will therefore accept the inclusion proofs for the blocks in the conflicting ledgers $T_1, T_2$.  \footnote{A different approach is to notice that $c_1$ cannot distinguish between World 1 and World 2 of the Double Spending Attack.} 

\com{
\paragraph{Extension to the dynamic settings.} Now, we consider that the set of validators can change. We denote by $N_0$ the initial set of validators and by $N_i$ the set of validators the i-th time it was changed. 

Without loss of generality, assume that the aforementioned attack occurred for the set of validators $\mathcal{N}_j$, for some $j \in \mathcal{N}$. So, correct validators are partitioned into two distinct sets $H_1$ and $H_2$, witnessing the ledgers $T_c \cup T_1$ and $T_c \cup T_2$ constructed as mentioned previously. If validators in $H_1$ (and $H_2$) finalize blocks in $T_1$ (and $T_2$) then client-safety is violated as mentioned for the static setting.

To be possible for the protocol $\Pi$ to deal with the attack, the benefit for validators in $\mathcal{R}$ must be less than $D$ which means that if $\Pi$ manages to successfully slash the misbehaving validators the rational validators will not perform the attack. Therefore validators in $H_1$ (or $H_2$) must finalize only the transactions in $T_1' \subset T_1$ with total sum of outputs less than $D$. However validators in $\cup_{i>j+1}N_{i}$ should be able to witness the conflicting blocks before finalizing the rest of transactions in $T_1 \setminus T_1'$. Consider a correct validator $p$ in any $N_{i},i>j+1$ and assume there is rule $\rho$  which guarantees that $p$ has listened from at least two validators $ p_1 \in H_2$, $ p_2 \in H_2$. 
However, in a similar way we proved that there is no rule that guarantees that $p_1$ has witnessed a message from $p_2$ before finalizing, we can prove that this is impossible.

}

\end{proof}

\subsubsection*{Proof of Theorem~\ref{impossibility:partialSync2}}\begin{proof}
For the sake of contradiction, assume that there exists a $q$--commitable SMR protocol $\Pi$ that is $(n, k, f)$-resilient for some $q$ s.t. $q \le f + k + \lfloor h/2 \rfloor$, where $h = n - k - f$ is the number of correct validators. Also, assume that each validator has deposited $D \in \mathbf{N}$ stake to participate in $\Pi$.

We construct an attack in which rational validators collude with Byzantine validators to produce valid certificates for conflicting blocks, yielding strictly greater utility for rational validators and leading to a safety violation. 
Similar to the proof of Theorem~\ref{impossibility:synchrony},  a malicious validator  can partition correct validators in two distinct subsets $H_1$ and $H_2$ which monitor the conflicting ledgers $ T_c \cup T_1$ and $ T_c \cup T_2$ respectively. Additionally, there is a partition of the clients in two sets $C_1, C_2$, and the input to the blockchain protocol is $Tx = Tx_1 \cup Tx_2$, where a set $\mathcal{R}$ of at least $p - f - min(|H_1|, |H_2|)$ rational validators spend $qD$ coins in $Tx_1$ to a subset of clients in $C_1$ and $qD$ coins in $Tx_2$ to a subset of clients $C_2$ where $\epsilon_1, \epsilon_2 > 0$. Therefore, even if the attack is eventually detected and rational validators forfeit their stake,
or if the stake of misbehaving validators is instead offered as a reward for reporting misbehavior, the total value double-spent makes the attack strictly beneficial for rational validators.

We will now construct a run where, for any $D$, every rational validator spend $qD$ coins in $T_1$ (same for $T_2$). To this end, cosnider the following notation. $G^p_{0}$ denotes the coins of validator $p$ in $T_c$ plus the maximum amount of bribes from malicious validators to fork the system. Then, we denote by $S^p_i$ the total coins paid to clients in $C_1$ by validator $p$ and by $R^p_i$ the participation rewards $p$ has received until the $i$-th block of $T_1$ (excluding $T_c$).  

We need to show that for every rational validator $p$ there is a $k$ s.t. $S^p_k > qD$, or $ G^p_0 + R^p_k > qD$. If the initial coins held by $p$, or the coins provided by malicious clients as bribes, are unbounded, this condition trivially holds. We now show that the condition also holds if the participation rewards $R^p_k$ are unbounded.

Toward this direction, consider the first $i \in \mathcal{N}$
such that $ R^p_i = \mathcal{C}^p_{0}$ as follows. Every validator in $\mathcal{R}$ spends all their coins, and therefore all coins belong to the hands of all the other users. When enough blocks has been processed in the system, any rational validator $p$ actively participating (by proposing or voting blocks depending on the reward scheme) will receive at least $\mathcal{C}^p_{0}$ in participation rewards. Specifically, w.l.o.g., if the participation rewards per block are $r$,$n$ blocks have been processed in the system and $p$ has participated in a percentage of $\alpha$ of them, then there is a run where $\alpha \cdot n \cdot r \geq \mathcal{C}^p_{0} $, which holds since it can be $n \to \infty$. Then, rational validators will spend all of their money to pay clients in $C_1$ and the above scenario is repeated for $j$ times such that there is $i \in \mathcal{N}$, $ R^p_i =  j \mathcal{C}^p_{0} $  and $(j+1) \mathcal{C}^p_{0} > qD$. Since it can be that $ j \to \infty$, this is a feasible scenario.
\end{proof}


\subsubsection*{Proof of Theorem~\ref{impossibility:correctvalidators}} 
\begin{proof}
For the sake of contradiction assume that there is a q--commitable SMR protocol $\Pi$ that is $(n, k, f)$-resilient for $f + k \geq h$, where $h=n-k-f$ is the number of correct validators. 
Assume that each validator has deposited $D \in \mathbf{N}$ stake to participate in $\Pi$.  


We will construct a run where it is a dominant strategy for rational validators to collude with Byzantine validators and create valid certificates for blocks that are not witnessed, and therefore not finalized, by any correct validator. Clients will accept the certificates, leading to a client-safety violation. Towards this direction, consider a run of $\Pi$ and let us denote by $C_t$ the number of clients in the system at time $t$. 
Assume that all correct validators agree in the transaction ledger $T_c$ at time $t-\epsilon$ for some $\epsilon>0$. 

\emph{Double spending attack.} The input of $\Pi$ at time $t$ is a set of transactions  $Tx = \cup_{i=1}^{C_t} Tx_i$ constructed as follows. There is a set $\mathcal{R}$ of at least $ r = max(q-f, 0)$ rational validators such that every validator in $\mathcal{R}$ transfers $c>0$ coins to the client $c_{i\in\{1,\dots,C_t\}} $ in $Tx_{i\in\{1,\dots,C_t\}}$. Moreover, we construct the run such that $(C_t-1)C > D$. Since the deposit $D$ is decided before the run of the protocol and $C_t$ is arbitrary, it is feasible to construct such a run. 

Validators in $\mathcal{R}$, along with the $f$ Byzantine validators, create the blocks $b_{i\in\{1,\dots,C_t\}}$, where for each $i, b_i$ includes the transactions $Tx_i$. Since $r+f \geq q$, they can construct $s$ phases (for any $s$ specified by $\Pi$) of $q$ votes for every blocks $b_{i\in\{1,\dots,C_t\}}$. Note that none of the correct validators has witnessed the valid certificate for any of those blocks. 

We will show that clients $c_{i\in\{1,\dots,C_t\}} $ will accept the inclusion proofs for blocks $b_{i\in\{1,\dots,C_t\}}$. In that scenario, any rational validator $p \in \mathcal{R}$ double spends $ (C_t-1)C > D$ coins. Even if  validator $p$ is slashed in $\Pi$, the attack is still beneficial. This will lead to violation of client-safety.

\emph{Clients must accept the inclusion proofs.} Since $\Pi$ ensures client-liveness, there is a rule $\rho$ that allows clients to accept inclusion proofs for every block committed to the ledger $T_c$. We will show that for any such rule $\rho$, all clients $c_{i \in \{1, \dots, C_t\}}$ will accept inclusion proofs for blocks $b_{i \in \{1, \dots, C_t\}}$.  

First, consider a rule $\rho_1$ where clients wait for a fixed time bound $k\Delta$, where $\Delta$ is the synchrony bound and $k > 0$. We fix a client $c_{i \in \{1, \dots, C_t\}}$. We prove that this rule is insufficient using an indistinguishability argument.


\emph{World 1.}
The next block extending $T_c$ with a valid certificate is block $b_i$. At least one correct validator $p_i$, commits $b_i$ in its local ledger. There are not conflicting blocks in the system. Client $c_i$ receives a valid certificate for $b_i$ from $p_i$ and waits for $k\Delta$. After the period passes $c_i$ detects no conflict. 

\emph{World 2.} 
Rational nodes collude with Byzantine nodes to perform the double spending attack described above. A rational node $p \in \mathcal{R}$ sends the inclusion proof of $b_i$ to $c_i$. Client $c_i$ waits for $k\Delta$. Since no correct validator has witnessed the certificates for any conflicting block $b_{j\in\{1,\dots,C_t\}, j \neq i}$, client $c_i$ detects no conflict. 

Client $c_i$ cannot distinguish between World 1 and World 2 and therefore, the rule $\rho_1$ is not sufficient. Now, consider any rule $\rho_1$ which requires stages of at least $t$ votes. Similar to the proof of Theorem~\ref{impossibility:synchrony}, an inclusion proof of stages of $t = n-f$ votes is the strongest certificate that clients can receive.  Therefore, each client $c_{i\in\{1,\dots,C_t\}}$ clients will accept the inclusion proofs for the respective block $b_{i}$. 


\end{proof}


\subsubsection*{Proof of Theorem~\ref{impossibility:responsiveness}} 
\begin{proof}
Assume, for contradiction, that there exists a $q$--commitable SMR protocol $\Pi$ that is $(n, k, f)$-resilient  for i) $f \geq \lceil h/2 \rceil$, ii) or $q \leq f + k + \lfloor h/2 \rfloor $, where $h = n - k - f$ represents the number of correct validators, and $\Pi$ does not incur a finality delay of at least $\Delta^*$. Moreover, suppose each validator has staked $D \in \mathbf{N}$ to take part in $\Pi$.   

Theorems~\ref{impossibility:synchrony} and~\ref{impossibility:partialSync2} share the same Byzantine resilience and assume similar network conditions during a partition lasting up to~$\Delta^*$, as in Theorem~\ref{impossibility:responsiveness}, with $f \ge \lceil h/2 \rceil$ and $q \le f + k + \lfloor h/2 \rfloor$, respectively. Before the bound $\Delta^*$ elapses, a $q$--commitable protocol behaves as in partial synchrony.  
Thus, the attacks presented in proofs of Theorems~\ref{impossibility:synchrony} and~\ref{impossibility:partialSync2} can be extended and show that: (1) \emph{safety is violated}, with correct validators $p_1 \in H_1$ and $p_2 \in H_2$ finalizing conflicting blocks in ledgers $T_1$ and $T_2$, and (2) \emph{clients accept inclusion proofs} backed by valid certificates.
\end{proof}
\qed



\fi


\IEEEpeerreviewmaketitle

\end{document}